\newtheorem{claim}{Claim}
\newtheorem{definition}{Definition}
\newtheorem{theorem}{Theorem}
\newtheorem{proposition}{Proposition}
\newtheorem{lemma}{Lemma}
\theoremstyle{definition}
\DeclareSymbolFont{symbolsC}{U}{txsyc}{m}{n}
\DeclareMathSymbol{\notniFromTxfonts}{\mathrel}{symbolsC}{61}
\DeclarePairedDelimiter\abs{\lvert}{\rvert}%
\newcommand{\defeq}{\stackrel{\textnormal{def}}{=}}
\newcommand{\tupbracket}[1]{\left\langle {#1} \right\rangle}
\newcommand{\ind}{\mathds{1}}
\newcommand\bl[1]{\boldsymbol{ #1 } }
\title{Convergence of Learning Dynamics in Information Retrieval Games}
\author{
  Omer Ben{-}Porat, Itay Rosenberg \and Moshe Tennenholtz \\
  Technion - Israel Institute of Technology\\
  Haifa 32000 Israel\\
  \{omerbp@campus, itayrose@campus, moshet@ie\}.technion.ac.il
}
\begin{document}

\maketitle

\begin{abstract}
We consider a game-theoretic model of  information retrieval with strategic authors. We examine two different utility schemes: authors who aim at maximizing exposure and authors who want to maximize active selection of their content (i.e., the number of clicks). We introduce the study of author learning dynamics in such contexts. We prove that under the probability ranking principle (PRP), which forms the basis of the current state-of-the-art ranking methods, any better-response learning dynamics converges to a pure Nash equilibrium.  We also show that other ranking methods induce a strategic environment under which such a convergence may not occur.
\end{abstract}

\section{Introduction}
\label{sec:intro}
Information retrieval is probably the most central task carried out by consumers and users of on-line media. The basic information retrieval task involves ranking documents in a corpus by their relevance to the information needs expressed in a query. In adversarial retrieval settings such as the Web, information resources (contents) are owned by strategic bodies - website owners (henceforth \textit{authors}). Authors can strategically change their content in order to improve their rankings in response to a query in a practice referred to as search engine optimization (SEO) \cite{GyongyiG05}. 
Therefore, the authors are players in a game, altering their content to increase their \textit{utility}: increase exposure of their content (in a plain content setting) or to increase selection of their content (``clicks'' in a sponsored content setting). In this strategic game, the search engine serves as a \textit{mediator} between users and authors, and attempts to match queries and websites.

Despite the tremendous amount of work on information retrieval and SEO published during past decades, mathematical modeling of the aforementioned strategic behavior has only been formally suggested and studied recently \cite{Basat+al:15a,basat2017game,raifer2017information}. One central question in this regard is whether learning dynamics, whereby at every step one author alters her content to increase her utility, is likely to converge. Convergence would suggest that authors should only invest a considerably limited amount of time altering their websites until their utility cannot be further improved. An accompanying question is whether such convergence occurs when state-of-the-art approaches to information retrieval, aiming at ranking documents in the corpus according to estimated relevance probabilities with respect to a given query, are used.
The basis for all such retrieval methods is the probability ranking principle (PRP) \cite{Robertson:77a}.

In this paper we introduce what is, to the best of our knowledge, the first attempt to explore the learning dynamics of strategic behavior in information retrieval systems such as the Web, through a formal theoretical model. Our main result proves that under the PRP, any better-response learning dynamics converges to a pure Nash equilibrium. This result is obtained for the two prevalent utility schemes: authors seeking content exposure (i.e., exposure-targeted), and authors seeking to increase ``clicks'' in content selection (i.e., action-targeted). Interestingly, this learning dynamics convergence property, which rarely exists in games, is obtained even though our class of games are not potential games \cite{MondererShapley96}. We also show that other plausible ranking methods may not induce such convergence, which further highlights the significance of our results.

\subsection{Related Work} 
The concept of mediators in strategic environments is widely known to the game-theory community \cite{ashlagi2009mediators,Aumann74,MTSME}, and the design of a mediator (or in a different terminology, a mechanism) is often called mechanism design \cite{NisanRonenstoc}. In the context of information retrieval, a search engine can be viewed as a mediator between two parties: users and authors.

Considering strategic behavior in an information retrieval context is the aim of \citeauthor{basat2017game} \shortcite{basat2017game}. The work of \citeauthor{basat2017game} presents a game-theoretic approach to information retrieval, and illustrates that the myopic static view falls short in dynamic and adversarial settings. \citeauthor{basat2017game} explicitly assume that users will select the highest ranked result, a somewhat strong assumption but nevertheless justified by a large body of empirical work \cite{Butman+al:13a,JoachimsGPHG05,Liu+Wei:16a,ghose2012mobile}. Note that in this case, PRP coincides with ranking the most relevant document highest. \citeauthor{basat2017game} analyze the user social welfare, defined as the quality of documents available in the presence of strategic behavior of the authors. Interestingly, they demonstrate that introducing randomization into a ranking function can sometimes lead to social welfare that transcends that of applying the PRP. In this paper we also adopt the game-theoretic approach to information retrieval, but explore a different criterion, which is the learning dynamics in games induced by the selection of the PRP as the mediator. Furthermore, beyond the action-targeted utility suggested in \citeauthor{basat2017game}, we also analyze exposure-targeted utility.

\citeauthor{ben2018game} \shortcite{ben2018game} consider mediator design in recommendation systems with strategic content providers. They highlight several fairness-related properties that a mediator should arguably satisfy, along with the requirement of pure Nash equilibrium existence. They claim against PRP, as they show that in their mathematical model the PRP mediator (termed TOP in their work) satisfies the fairness-related properties, but may lead to a game without pure Nash equilibria and hence without better-response convergence. However, their mathematical model differs from the one in this paper, since e.g. they allow the mediator to present an empty list of documents, which is highly unlikely in information retrieval settings. 

Designing a mediator for improved social welfare was recently proposed by \citeauthor{rstfl} \shortcite{rstfl}, who also make the connection between recommendation systems and facility location games \cite{Hotelling}. In their model as well, matching users with their nearest facility may yield a low social welfare in case the content providers are strategic. Their goal is to design a mediator that optimizes welfare in equilibrium and does not intervene too much.

In this work, however, we do not study the social welfare, but rather focus on the \textit{learning dynamics}. Learning dynamics is an important concept in machine learning and game theory \cite{CesaBianchi,claus1998dynamics,FreundSchapire,palaiopanos2017multiplicative,syrgkanis2015fast,meir2010convergence,lev2012convergence}, and work on learning dynamics in games is considered instrumental, e.g., to understanding ad auctions \cite{cary2014convergence}. Better-response learning dynamics are appealing to the (algorithmic) game theory community, as they only assume a minimal form of rationality: under any given profile, a player will act to increase her individual utility.
However, general techniques for showing better-response learning convergence in games are rare, and are based typically on coming up with a potential function \cite{MondererShapley96}, see e.g. \cite{garg2016learning,palaiopanos2017multiplicative,ben2018competing}. However, as exact potential functions imply the games are congestion games \cite{rosenthal73}, it is easy to observe that our games do not fit that category.

Another interesting class of games which are not potential games for which better-response dynamics always converge is \cite{Milchtaich96}. However, that setting is quite remote from ours, as in \citeauthor{Milchtaich96}'s work the players share a common set of strategies.
\subsection{Our Contribution}
Our main conceptual contribution is the explicit analysis of learning dynamics in information retrieval systems that is motivated by strategic behavior. Our demonstration of convergence serves as an important justification for the use of the PRP, and should be taken into account when designing stable and robust information retrieval systems.

The key technical contribution of this paper is the proof that under PRP any better-response dynamics converges to a pure Nash equilibrium. We prove this claim for both exposure-targeted and action-targeted utility schemes. As stated above, the convergence of better-response learning dynamics in our setting is obtained although the class of games we consider do not have an exact potential function. Moreover, we show that other ranking methods induce a strategic environment under which such convergence may not occur.
Together, our results provide strong novel game-theoretic justification to the PRP and illustrate its applicability in an adversarial context such as the Web.

\subsection{Paper Organization}
The rest of the paper is organized as follows. Section \ref{sec:problem} formalizes the model we adopt, as well as an informal introduction to the relevant core game-theoretic concepts and an illustrative example. In Section \ref{sec:PRP} we analyze better-response learning with the PRP mediator for both utility schemes. 
In Section \ref{sec:nonlearnability} we show non-learnability of mediators other than the PRP, and Section \ref{sec:disc} is devoted to  discussion and future work. Due to space limitations, some of the proofs of this paper are deferred to the supplementary material. 

\section{Problem Statement} 
\label{sec:problem}

An authors game is composed of a set of \textit{authors} $N=[n] \defeq \{1,2,\ldots,n\}$, each owning one document/website/blog. $M=[m]$ is the set of \textit{topics}, and we assume both $n$ and $m$ are finite. An author's pure strategy space is the set of all topics, i.e., she can choose to write her document on any topic. We further assume that each document is concerned with a single topic. The set of all pure strategy profiles is denoted by $A=M^n$, and each strategy profile $\bl a=(a_1,\dots a_n)$ corresponds to a set of documents. A query distribution $D$ over $M$ is publicly known, where each query symbolizes the user mass associated with that topic. Given a topic $k$, we denote by $D(k)$ the demand for topic $k$. 
We further assume w.l.o.g. that $D(1)\geq D(2)\geq \ldots \geq D(m)$.
That is, the topics are sorted according to the query distribution mass in a non-increasing order.

The matrix $Q\in [0,1]^{n\times m}$ is the \textit{quality matrix}, where $Q_{j,k}$ represents the quality for author $j$'s document if she decides to write on topic $k$. This modeling allows an author to have remarkable aptitude for one topic and poor aptitude for another. For example, an economic guru is able to write about sports, but his writing quality w.r.t. sports is substantially lower than economics. 

The function $R$ is the \textit{mediator}, which plays the role of a ranking function or a search engine. The mediator ranks the documents selected by the authors w.r.t. a given query (or equivalently, a topic). We assume for simplicity that users always read the document ranked first. This assumption is consistent with many applications, e.g. the use of personal assistants in mobile devices, where only the first ranked item is shown to the user. Thus, we let $R(Q, k, \bl a)$ denote a distribution over the set of documents selected under $\bl a$  w.r.t. a topic $k\in M$, which represents the probability of being displayed in the first position. For ease of notation, we shall also denote $R_j(Q, k, \bl a)$ as the probability that author $j$ is ranked first under the distribution $R(Q, k, \bl a)$.

The last component $u$ is the \textit{utility function}, which maps every strategy profile to a real-valued vector of length $n$. In this paper, we consider two different utility functions which are motivated by current applications.

Under the exposure-targeted utility, denoted by $u^{Ex}$ , an author's utility is the number of impressions her document receives. Formally,
\begin{definition}[Exposure-targeted utility]
The exposure-targeted utility of author $j$ under a strategy profile $\bl a$ is given by
\[
u^{Ex}_j(\bl a)\defeq \sum_{k=1}^m \ind_{a_j=k} \cdot D(k)\cdot R_j(Q,k,\bl a).
\]
\end{definition}
Note that $u^{Ex}$ depends solely on the user mass of the topic she writes on and the probability of the mediator displaying her document. The other utility function is the action-targeted utility, denoted by $u^{Ac}$. 
\begin{definition}[Action-targeted utility]
The action-targeted utility of author $j$ under a strategy profile $\bl a$ is given by
\[
u^{Ac}_j(\bl a)\defeq \sum_{k=1}^m \ind_{a_j=k} \cdot D(k)\cdot R_j(Q,k,\bl a)\cdot Q_{j,k}. 
\]
\end{definition}
Namely, an author's utility is the user mass of her selected topic times the probability she is ranked first times the quality of her document.

Overall, an authors game can be represented as a tuple $\mathcal G = \tupbracket{N,M,D,Q,R,u}$.

It is convenient to quantify the following; given a strategy profile $\bl a$, let 
$B_k(\bl a)$ denote the highest quality of a document on topic $k$ , i.e.,
\[
B_k(\bl a)\defeq \max\limits_{1 \leq j \leq n}\{Q_{j,k}\cdot \ind_{a_j=k} \}.
\]
Moreover, we denote by $H_k(\bl a)$ the number of authors whose documents have the highest quality among those who write on topic $k$ under $\bl a$,
\[
H_k(\bl a)\defeq\abs{ \{j  \mid j\in [n], Q_{j,k}\cdot \ind_{a_j=k}=B_k(\bl a) \}}.
\]

Unless stated otherwise, we analyze games with a particular mediator, which is based on the PRP. Since we restrict the ranking list to include one rank only, the PRP coincides with ranking first the highest quality document on that topic. We denote by $R^{PRP}$ the mediator that displays the document with the highest quality. In case there are several documents with the highest quality, $R^{PRP}$ ranks first each one of them with equal probability. Formally,

\begin{definition}[The PRP Mediator]
Given a quality matrix $Q$, a topic $k$ and a strategy profile $\bl a$, the $R^{PRP}$ ranks first the document of each author $j$ with a probability of
\[
R_j^{PRP}(Q,k,\bl a) \defeq 
\begin{cases}
\frac{1}{H_k(\bl a)} & Q_{j,k}\cdot \ind_{a_j=k}=B_k(\bl a)\\
0 & \text{otherwise}
\end{cases}.
\]
\end{definition}

\subsection{Further Game Theory Notation}
We now informally introduce some basic game theory concepts used throughout this paper.
For an action profile $\bl a = (a_1,\ldots,a_j,,\ldots,a_n) \in A$, we denote by $\bl a_{-j} = (a_1,\ldots,a_{j-1},a_{j+1},\ldots,a_n)\in A_{-j}$ the action profile of all authors except author $j$. 
A strategy $a_j'\in A_j $ is called a \textit{better response} of author $j$ w.r.t. a strategy profile $\bl a$ if $u_j(a_j', \bl a_{-j}) > u_j(\bl a)$. Similarly, $a_j'\in A_j$ is said to be a \textit{best response} if $u_j(a_j', \bl a_{-j}) \geq \max_{a_j\in A_j}u_j(a_j,\bl a_{-j})$ . We say that a strategy profile $\bl a$ is a \textit{pure Nash equilibrium} (herein denoted PNE) if every author plays a best response under $\bl a$. 

Given a strategy profile $\bl a \in A$, an \textit{improvement step} is a profile $(a_j',\bl a_{-j})$ such that $a_j'$ is a better response of author $j$ w.r.t. $\bl a$.
An \textit{improvement path} $\gamma=(\bl a^1, \bl a^2,\dots )$ is a sequence of improvement steps, where the improvements can be performed by different authors. Namely, in any improvement step along the improvement path exactly one author deviates from the strategy she selected in the previous step, but different authors can deviate in different steps. When the path $\gamma$ is clear from the context, we denote by $p_r$ the author that improves in step $r$. 
Since the number of strategy profiles is finite, every infinite improvement path must contain an improvement cycle.
A non-cooperative game $\mathcal G$ has the \textit{finite improvement property} (FIP for brevity) if 
all the improvement paths are finite; in such a game 
every better-response dynamics converges to a PNE \cite{MondererShapley96}.

\subsection{An Illustrative Example}
\label{subsec:example}

\begin{figure*}
     \centering

     \subfloat[][exposure-targeted]{
     $
     \kbordermatrix{
     & \text{topic } 1 & \text{topic } 2 & \text{topic } 3  \\
     \text{topic } 1 & 0,0.5 & 0.5,0.3 & 0.5,0.2  \\
     \text{topic } 2 & 0.3,0.5 & 0.15,0.15 & 0.3,0.2\\
     \text{topic } 3 & 0.2,0.5 & 0.2,0.3 & 0.2,0
     },\quad 
     $
     \label{<figure1>}}
     \subfloat[][action-targeted]{
      $
     \kbordermatrix{
     & \text{topic } 1 & \text{topic } 2 & \text{topic } 3  \\
     \text{topic } 1 & 0,0.45 & 0.05,0.12 & 0.05,0.04   \\
     \text{topic } 2 & 0.12,0.45 & 0.06, 0.06 & 0.12,0.04  \\
     \text{topic } 3 & 0.16, 0.45 & 0.16,0.12 & 0.16,0
     }
     $
     \label{<figure2>}}
          \caption{     \label{example:utilities} The normal form games induced by the example in Subsection \ref{subsec:example}. Subfigure (a) represents the utilities of author 1 (row) and author 2 (column) under the exposure-targeted utility function, while Subfigure (b) represents the utilities under action-targeted utility function.}
\end{figure*}
To further clarify our notation and setting, we provide the following example. Consider a game with $n=2$ authors, $m=3$ topics, a query distribution mass $D$ such that $D(1)=0.5, D(2)=0.3,D(3)=0.2$, a quality matrix
\[
Q=
\begin{pmatrix}
0.1 & 0.4 & 0.8 \\
0.9 & 0.4 & 0.2 \\
\end{pmatrix},
\]
and $R^{PRP}$ as the mediator.
Given the utility function, the induced game can be viewed as a normal form bi-matrix game, as presented in Figure 
\ref{example:utilities}.

First, consider the exposure-targeted utility function. Consider the strategy profile $(a_1,a_2)=(2,2)$. Under this strategy profile the two authors write on topic 2, and their quality on that topic is the same, i.e., $Q_{1,2}=Q_{2,2}=0.4$; thus, $R_1(Q,2,(2,2))=R_2(Q,2,(2,2))=0.5$ and  
\[
u^{Ex}_1(2,2)=u^{Ex}_2(2,2)=\frac{D(2)}{2}=0.15.
\]
Notice that author 2 can improve her utility by deviating to topic 1, i.e., to the strategy profile $(2,1)$. Indeed, this is an improvement step w.r.t. $(2,2)$. In this case, her utility is $u^{Ex}_2(2,1)=0.5$. Clearly $(2,1)$ is a PNE of this game.

The action-targeted utility function induces a different bi-matrix game. The reader can verify that under this utility scheme, the unique PNE is $(3,1)$.

\section{Better-Response Learning with the PRP Mediator}
\label{sec:PRP}
In this section we show that under the PRP mediator, every better-response dynamics converges to a PNE, for both utility schemes. To make this claim more concrete, we use the following definition.
\begin{definition}
\label{def:leanable}
We say that a mediator $R$ is $u$-learnable if every game induced by $R$ and the utility function $u$ has the FIP property.
\end{definition}
Clearly, if any game that consists of $(R,u)$ has the FIP property, then the authors can learn a PNE using any better-response dynamics. We use the above definition to crystallize our goals for this section: we wish to show that $R^{PRP}$ is both $u^{Ex}$-learnable and $u^{Ac}$-learnable. Namely, in Subsection \ref{subsec:Exposure-Targeted-Utility} we show that under the PRP mediator and the exposure-targeted utility function, every improvement path is finite. In Subsection \ref{subsec:Impression-Focused} we prove the equivalent statement for the action-targeted utility function.

Before we go on, we claim that the class of games induced by the PRP mediator does not have an exact potential. 
\begin{proposition}
\label{prop:pot}
The class of games induced by $R^{PRP}$ and either one of $u^{Ex}$ or $u^{Ac}$ does not have an exact potential.
\end{proposition}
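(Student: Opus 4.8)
The plan is to invoke the Monderer--Shapley characterization of exact potential games \cite{MondererShapley96}: a finite game admits an exact potential if and only if, for every pair of players $i\neq j$, every profile $\bl a$, and every choice of alternative strategies $a_i'$ and $a_j'$, the signed sum of the deviating player's utility change around the closed four-step cycle that alternately flips the strategies of $i$ and $j$ (holding all other coordinates fixed) is zero. Because Proposition \ref{prop:pot} only asserts that the class is not \emph{entirely} composed of exact potential games, it suffices to exhibit a single game in the class together with one four-cycle whose sum is nonzero.

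For this I would reuse the two-author, three-topic instance of Subsection \ref{subsec:example}, whose induced bimatrix games under $u^{Ex}$ and $u^{Ac}$ are recorded in Figure \ref{example:utilities}. Interpreting a profile as $(a_1,a_2)$ with author $1$ indexing rows and author $2$ indexing columns, consider the cycle
\[
(1,1)\to(2,1)\to(2,2)\to(1,2)\to(1,1),
\]
in which author $1$ deviates on the first and third steps and author $2$ on the second and fourth. Summing the deviator's utility change at each step, read directly off Figure \ref{example:utilities}(a), gives $0.3-0.35+0.35+0.2=0.5\neq0$, so the exposure-targeted game carries no exact potential. Evaluating the same cycle against Figure \ref{example:utilities}(b) gives $0.12-0.39-0.01+0.33=0.05\neq0$, which disposes of the action-targeted case. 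Both computations are a handful of subtractions.

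The only conceptual care needed is to use the correct non-existence criterion. One cannot refute an exact potential by displaying an \emph{improvement} cycle, since the games studied here in fact possess the finite improvement property (indeed this is precisely what the main results of this section establish), so no strictly improving cycle exists at all. The four-cycle condition is the right tool precisely because its intermediate steps need \emph{not} be improvement steps: it is a statement about the telescoping of signed utility differences, a property strictly stronger than the FIP, and it is this stronger property that fails here. Thus the role of the proposition is to certify that the standard potential-function machinery is unavailable for establishing convergence, motivating the direct combinatorial arguments developed in the remainder of Section \ref{sec:PRP}.
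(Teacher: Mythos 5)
Your proof is correct and follows essentially the same route as the paper's: both invoke the Monderer--Shapley four-cycle characterization of exact potential games and refute it by computing the signed utility sum around one concrete $2\times 2$ cycle (your values $0.5$ and $0.05$ check out against Figure \ref{example:utilities}). The only difference is the choice of counterexample: you reuse the two-author illustrative game of Subsection \ref{subsec:example}, whereas the paper constructs a separate three-author, two-topic instance and fixes the third author's strategy to obtain the offending $2\times 2$ subgame.
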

\begin{proof}[Proof sketch of Proposition \ref{prop:pot}]
We show that the necessary condition for the existence of an exact potential \cite{MondererShapley96} does not hold for a general authors game with $n\geq 3$ authors. This result is obtained for both utility schemes.
\end{proof}

As mentioned in Section \ref{sec:intro} above, showing the convergence of any better-response dynamics in the lack of exact potential is challenging, and is nevertheless our goal for the rest of this section. In light of that, we shall introduce a further notation. 

\begin{definition}
Given a finite improvement path $\gamma=(\bl a^1,\dots \bl a^l)$, we define 
\[
W_k(\gamma)\defeq\min\limits_{1 \leq r \leq l}\{H_k(\bl a^r)\},
\] 
i.e., $W_k(\gamma)$ is the minimal number of authors writing documents with the highest quality on topic $k$. 
\end{definition}
Note that the minimum is taken over all steps in $\gamma$.

\subsection{Exposure-Targeted Utility}
\label{subsec:Exposure-Targeted-Utility}
We now focus on games with $R^{PRP}$  and $u^{Ex}$, namely the PRP mediator and the exposure-targeted utility function. We show that every improvement path is finite, suggesting that any better-response dynamics converges. The proof of this convergence relies on several supporting claims.

The following Proposition \ref{prop:leqgeq} claims that in every improvement step, the improving author writes with a quality of at least the highest quality obtained in the preceding improvement step, on that particular topic.
\begin{proposition}
\label{prop:leqgeq}
Let $\gamma$ be a finite improvement path, and let $a^{r+1}_{p_r}=k$ for an arbitrary improvement step $r$. It holds that $Q_{p_r,k}\geq B_k(\bl a^r)$.
\end{proposition}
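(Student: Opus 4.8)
The plan is to analyze what it means for author $p_r$ to perform an improvement step by deviating to topic $k$ at step $r$, i.e., moving from profile $\bl a^r$ to $\bl a^{r+1}=(k, \bl a^r_{-p_r})$. By definition of an improvement step, we have $u^{Ex}_{p_r}(\bl a^{r+1}) > u^{Ex}_{p_r}(\bl a^r) \geq 0$. Since the exposure-targeted utility is a sum of nonnegative terms, the strict positivity $u^{Ex}_{p_r}(\bl a^{r+1})>0$ forces the single active term (for the topic $a^{r+1}_{p_r}=k$ the author now writes on) to be positive, which means $R^{PRP}_{p_r}(Q,k,\bl a^{r+1})>0$.

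The crux is then to unpack what a positive PRP ranking probability entails. By the definition of $R^{PRP}$, having $R^{PRP}_{p_r}(Q,k,\bl a^{r+1})>0$ means author $p_r$ is among the highest-quality authors writing on topic $k$ under $\bl a^{r+1}$, i.e., $Q_{p_r,k}\cdot\ind_{a^{r+1}_{p_r}=k}=B_k(\bl a^{r+1})$. Since $a^{r+1}_{p_r}=k$, this reduces to $Q_{p_r,k}=B_k(\bl a^{r+1})$. So the task becomes relating $B_k(\bl a^{r+1})$, the top quality on topic $k$ after the deviation, to $B_k(\bl a^r)$, the top quality before it.

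The comparison of these two $B_k$ values is the main obstacle, and I would handle it as follows. The only author who changes her strategy between $\bl a^r$ and $\bl a^{r+1}$ is $p_r$, who \emph{joins} topic $k$ (she was not writing on $k$ under $\bl a^r$, otherwise the ``deviation'' would be trivial, and in any case her contribution to $B_k$ can only be added). The set of authors writing on topic $k$ under $\bl a^{r+1}$ therefore contains every author who was writing on topic $k$ under $\bl a^r$, together with $p_r$. Since $B_k$ is a maximum of the qualities $Q_{j,k}$ over the authors currently writing on $k$, enlarging this set can only weakly increase the maximum, giving $B_k(\bl a^{r+1})\geq B_k(\bl a^r)$.

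Combining the two observations closes the argument: $Q_{p_r,k}=B_k(\bl a^{r+1})\geq B_k(\bl a^r)$, which is exactly the claimed inequality $Q_{p_r,k}\geq B_k(\bl a^r)$. The one subtlety to state carefully is the monotonicity of $B_k$ under adding an author to a topic, and the edge case where no author wrote on $k$ under $\bl a^r$ (so $B_k(\bl a^r)=0$ by the convention that the max over an empty-contribution set is $0$), in which case the inequality holds trivially since qualities are nonnegative. I expect the entire proof to be short once the reduction to $R^{PRP}_{p_r}(Q,k,\bl a^{r+1})>0$ is made explicit.
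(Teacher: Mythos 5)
Your proof is correct and takes essentially the same approach as the paper: the paper runs the identical argument as a contradiction (if $Q_{p_r,k} < B_k(\bl a^r)$, then $R^{PRP}$ assigns author $p_r$ probability zero under $\bl a^{r+1}$, so $u^{Ex}_{p_r}(\bl a^{r+1}) = 0 \leq u^{Ex}_{p_r}(\bl a^{r})$, contradicting the improvement), whereas you argue directly from positive utility to top quality. Your explicit statement of the monotonicity $B_k(\bl a^{r+1}) \geq B_k(\bl a^r)$ is a detail the paper leaves implicit, but the underlying reasoning is the same.
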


We now bound the utility the improving author obtains in the corresponding improvement step, when her document's quality does not exceed the highest quality (on that particular topic) in the preceding improvement step. 
\begin{proposition}
\label{prop:boundonutility}
Let $\gamma$ be a finite improvement path, and let $a^{r+1}_{p_r}=k$ for an arbitrary improvement step $r$. If  $Q_{p_r,k}\leq B_k(\bl a^r)$, then 
\[
u^{Ex}_{p_{r}}(\bl a^ {r+1})\leq \frac{D(k)}{W_{k}(\gamma)+1}.
\]
\end{proposition}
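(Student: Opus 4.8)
The plan is to first sharpen the hypothesis into an equality using the preceding proposition, then track exactly how the group of top-quality authors on topic $k$ changes when $p_r$ moves in, and finally read the utility directly off the PRP definition.

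First I would invoke Proposition \ref{prop:leqgeq}, which guarantees $Q_{p_r,k}\geq B_k(\bl a^r)$. Combined with the standing hypothesis $Q_{p_r,k}\leq B_k(\bl a^r)$, this pins down $Q_{p_r,k}= B_k(\bl a^r)$; that is, the deviating author arrives on topic $k$ with quality exactly equal to the incumbent top quality there.

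Next comes the structural step. Since an improvement step changes exactly one author's action, $\bl a^{r+1}$ differs from $\bl a^r$ only in coordinate $p_r$, and moreover $a^r_{p_r}\neq k$ (otherwise the step would not alter $p_r$'s action and could not strictly raise her utility). Hence the set of authors other than $p_r$ who write on $k$ is the same in both profiles, so their maximal quality remains $B_k(\bl a^r)$. Adjoining $p_r$, whose quality is also $B_k(\bl a^r)$, leaves the top quality unchanged, $B_k(\bl a^{r+1})=B_k(\bl a^r)$, and enlarges the group of top-quality authors on $k$ by exactly the one newcomer $p_r$, so that $H_k(\bl a^{r+1})=H_k(\bl a^r)+1$. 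The utility then follows from the PRP definition: $p_r$ is among the top-quality documents on $k$ in $\bl a^{r+1}$, whence $R^{PRP}_{p_r}(Q,k,\bl a^{r+1})=1/H_k(\bl a^{r+1})$ and $u^{Ex}_{p_r}(\bl a^{r+1})=D(k)/(H_k(\bl a^r)+1)$. Finally, because $W_k(\gamma)$ is the minimum of $H_k(\bl a^s)$ over all steps $s$ of $\gamma$, in particular $W_k(\gamma)\leq H_k(\bl a^r)$, so that $D(k)/(H_k(\bl a^r)+1)\leq D(k)/(W_k(\gamma)+1)$, which is the desired bound.

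The step I expect to require the most care is the counting argument $H_k(\bl a^{r+1})=H_k(\bl a^r)+1$: one must justify that $p_r$ is a genuinely new member of the top group (using $a^r_{p_r}\neq k$, so that $p_r$ was not already counted in $H_k(\bl a^r)$), and one must verify the degenerate case $B_k(\bl a^r)=0$, where the ``top quality'' is zero and possibly no author previously wrote on $k$. In that case $H_k(\bl a^r)$ simply counts the (possibly empty) set of authors writing on $k$, and the same reasoning yields $H_k(\bl a^{r+1})=H_k(\bl a^r)+1$, so the bound continues to hold, attaining equality when $W_k(\gamma)=H_k(\bl a^r)$.
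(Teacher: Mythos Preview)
Your proof is correct and follows exactly the paper's approach: use Proposition~\ref{prop:leqgeq} to upgrade the hypothesis to the equality $Q_{p_r,k}=B_k(\bl a^r)$, observe that $H_k(\bl a^{r+1})=H_k(\bl a^r)+1$ because $p_r$ joins the top-quality set on topic $k$, read off $u^{Ex}_{p_r}(\bl a^{r+1})=D(k)/(H_k(\bl a^r)+1)$ from the PRP definition, and bound via $W_k(\gamma)\leq H_k(\bl a^r)$. Your closing aside on the degenerate case $B_k(\bl a^r)=0$ slightly misreads the definition of $H_k$ (authors \emph{not} writing on $k$ also satisfy $Q_{j,k}\cdot\ind_{a_j=k}=0$ and would be counted), but the paper's own proof does not treat this edge case either, so your core argument matches it line for line.
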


Next, we characterize a property that must hold in improvement cycles, under the false assumption that such exist. We prove that if an improvement cycle exists, the quality of the first-ranked document is constant throughout the improvement cycle; this must hold for every topic.

\begin{lemma}
\label{lemma:cyclesexpuretargeted}
If $c=(\bl a^1,\ldots,\bl a^l=\bl a^1)$ is an improvement cycle, then for every improvement step $r$ and every topic $k$ it holds that $B_k(\bl a^r)= B_k(\bl a^{r+1})$.
\end{lemma}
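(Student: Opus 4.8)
The plan is to argue by contradiction: assuming the cycle $c$ contains a step at which some $B_k$ changes, I will produce a configuration that the cycle can never leave, contradicting the fact that $c$ returns to $\bl a^1$. The first ingredient is a description of how a single improvement step affects the quantities $B_k$. By Proposition~\ref{prop:leqgeq}, whenever the active author $p_r$ moves \emph{onto} a topic $k$ she arrives with quality at least $B_k(\bl a^r)$, so $B_k$ can only weakly increase on the entered topic; symmetrically, on the topic she \emph{leaves} $B_k$ can only weakly decrease, and on every other topic the set of writers, and hence $B$, is unchanged. Moreover $B_k$ strictly decreases on the vacated topic only when $p_r$ was its unique top author, i.e.\ when $H_k(\bl a^r)=1$. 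Since along the cycle every $B_k$ returns to its initial value $B_k(\bl a^1)=B_k(\bl a^l)$, a strict increase at some step would force a strict decrease at another; hence it suffices to show that \emph{no} $B_k$ ever strictly decreases.

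Second, I would locate a hypothetical decrease on the highest-demand topic. Suppose toward a contradiction that some $B_k$ strictly decreases, and let $k^\circ$ be a topic of maximal $D(k^\circ)$ among all topics whose $B$ strictly decreases somewhere in $c$. At such a step the vacating author $a$ was the unique top author of $k^\circ$, so her utility just before leaving equals $D(k^\circ)$, and since the step is an improvement her new utility strictly exceeds $D(k^\circ)$. As the utility on any topic is at most its demand, she must move to a topic $k'$ with $D(k')>D(k^\circ)$. For the highest-demand topic ($k^\circ=1$, the base case) this is already impossible, because $D(1)$ is the largest attainable utility in the entire game; thus $B_1$ never decreases and is therefore constant.

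For general $k^\circ$, maximality of $D(k^\circ)$ implies that $B_{k'}$ decreases nowhere in $c$, hence (by the reduction above) that $B_{k'}$ is constant throughout the cycle. Consequently $a$ cannot enter $k'$ with strictly higher quality, since that would raise the constant $B_{k'}$, so her move onto $k'$ is a tie, $Q_{a,k'}=B_{k'}$, and $H_{k'}$ increases by one. Proposition~\ref{prop:boundonutility} then bounds her new utility by $D(k')/(W_{k'}(\gamma)+1)$, which together with $u^{Ex}_a(\bl a^{r+1})>D(k^\circ)$ constrains the occupancy on $k'$: the topic $k'$ admits a low-occupancy configuration whose top authors enjoy a share strictly larger than $D(k^\circ)$.

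The main obstacle is to convert this ``accumulation on high-demand topics'' into a formal contradiction with the closure of the cycle. The line I would pursue is a freezing argument at the minimum-occupancy configuration: because $B_{k'}$ is constant, $k'$ is permanently occupied, every arrival is a tie and every departure is that of a co-author, and at the step realizing $W_{k'}(\gamma)$ the top authors of $k'$ hold the maximal share $D(k')/W_{k'}(\gamma)$ available on $k'$. Anchoring this at the globally maximum-utility topic, an author holding the global maximum utility can make no strictly improving move and is therefore immobilized. Propagating this immobility downward along the demand order, with each decreasing topic forcing a tie-entry into a strictly higher, already-constant topic, one exhausts the finitely many higher-demand topics and reaches an author who can neither remain in place (the cycle must advance) nor improve (every alternative is worth strictly less), contradicting the existence of the improvement cycle. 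Making this cascade precise, in particular handling ties in $D$ and the bookkeeping of which authors occupy the top slots, is the delicate part of the proof.
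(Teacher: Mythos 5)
Your setup is sound and, up to reformulation, mirrors the paper's own: the reduction to ``no $B_k$ ever strictly decreases,'' the base case on the highest-demand topic, and the observation that a hypothetical decrease on a maximal-demand decreasing topic $k^\circ$ forces the departing author (whose utility was exactly $D(k^\circ)$) into a topic $k'$ with $D(k')>D(k^\circ)$ whose $B_{k'}$ is constant, so that her entry is a tie (Proposition~\ref{prop:leqgeq} plus constancy) and Proposition~\ref{prop:boundonutility} yields $D(k^\circ)<D(k')/(W_{k'}(c)+1)$ --- all of this is exactly the first half of the paper's inductive step (your ``maximal counterexample'' choice plays the role of the paper's induction on the topic index). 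The genuine gap is in how you close. Your freezing argument --- an author holding the global maximum utility is immobilized, and immobility cascades down the demand order --- rests on a property that is false: utilities are not static along a cycle. An author who at some step earns $D(k)/H_k(\bl a^r)$ on her topic sees that utility drop the moment another author tie-enters the topic, after which she may well have a strictly improving move; this join-then-leave pattern is precisely what an improvement cycle under the PRP looks like, so no author is ever permanently frozen. You flag this step yourself as ``the delicate part,'' but as sketched it is not merely imprecise --- persistent immobility simply does not hold, so the cascade has nothing to propagate.

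The paper closes the argument differently, and you already have every piece except one. The missing ingredient is Proposition~\ref{prop:boundonutilityequal}: if $B_{k'}$ is constant throughout the cycle and $H_{k'}$ changes at some step, then there exists a step $r'$ at which some author writes on $k'$ with utility exactly $D(k')/(W_{k'}(c)+1)$ and then deviates away. Since her deviation is improving, her new utility exceeds $D(k')/(W_{k'}(c)+1)>D(k^\circ)$, so her destination $k''$ again satisfies $D(k'')>D(k^\circ)$; by your maximality of $D(k^\circ)$, $B_{k''}$ is also constant, her entry into $k''$ is again a tie, and Proposition~\ref{prop:boundonutility} gives $D(k')/(W_{k'}(c)+1)<D(k'')/(W_{k''}(c)+1)$. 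Iterating produces a strictly increasing chain of values $D(k_i)/(W_{k_i}(c)+1)$, hence pairwise distinct topics, which is impossible after more than $m$ steps by pigeonhole. Replacing your freezing cascade with this chain-plus-counting argument (which is Lemma~\ref{lemma:changeinqualitsexp} and the proof of Theorem~\ref{thm:exposure-targeted} in miniature) turns your proposal into essentially the paper's proof.
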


\begin{proof}[Proof sketch]
We give here a high-level overview of the proof and refer the reader to the appendix for the formal proof.

Under the false assumption that an improvement cycle exists, assume that the claim does not hold. Namely, assume that $c=(\bl a^1,\ldots,\bl a^l=\bl a^1)$ is an improvement cycle (w.l.o.g. $c$ is a simple improvement cycle), and that there exist an improvement step $r$ and a topic $k$ such that $B_k(\bl a^r)\neq B_k(\bl a^{r+1})$.

Recall that $D(1)\geq \cdots \geq D(m)$, i.e., the topics are sorted according to the query distribution mass in a non-increasing order. We prove by induction on the topic index $k$ that $B_k(\bl a^r)\leq B_k(\bl a^{r+1})$ holds for every $r$, $1\leq r \leq l-1$. Clearly, if this holds for every improvement step $r$ then
\[
B_k(\bl a^1)\leq  \cdots \leq B_k(\bl a^l)=B_k(\bl a^1);
\]
thus, all inequalities hold in equality and $B_k(\bl a^r)\neq  B_k(\bl a^{r+1})$ cannot occur.

\noindent\textbf{Base, $k=1$:} 
Assume the assertion does not hold for $k=1$; hence, there exists $r$, $1\leq r \leq l-1$, such that $B_1(\bl a^r)>B_1(\bl a^{r+1})$. This means that there exists an author who writes with the highest quality on topic 1 in the step $r$, and then she deviates to another topic in step $r+1$. Moreover, due to the strict inequality, that author is the unique author to write with the highest quality on topic 1 in step $r$; hence, her utility in step $r$ is exactly $D(1)$. When she deviates, she can obtain at most $D(2)$, but recall that $D(1)\geq D(2)$; hence, this deviation is not beneficial.

\noindent\textbf{Step:} Assume the assertion holds for $k\in \{ 1,2,\dots K-1 \}$, i.e., $B_k(\bl a^r)=B_k(\bl a^{r+1})$ for every step $r$.
We show that $B_K(\bl a^r)> B_K(\bl a^{r+1})$ for a step $r$ implies that the improving author in improvement step $r$ deviates to a topic with a lower index. Using the bound obtained in Proposition \ref{prop:boundonutility} and the induction hypothesis, we show that there must be an improving author which does not increase her utility after preforming the deviation, which is clearly a contradiction.
\end{proof}

Lemma \ref{lemma:cyclesexpuretargeted} implies that the only element that varies throughout an improvement cycle, if such exists, is the number of authors who write on each topic. In particular, the highest quality on each topic remains constant. It also suggests that any improving author is not the only author writing the highest quality document on the topic to which she deviated. 

Consider an arbitrary improvement step, and denote by $k$ the topic that the improving author writes on in the improvement step. The improving author joins a (non-empty) set of authors which are already writing documents with the highest quality  on topic $k$. Since we deal with a cycle, at some point an author abandons topic $k$, and deviates to another topic, say $k'$. In Lemma \ref{lemma:changeinqualitsexp} we bound the utility of the improving author (deviating to topic $k$) with that of the author who deviated to $k'$.

\begin{lemma}
\label{lemma:changeinqualitsexp}
If $c=(\bl a^1,\ldots,\bl a^l=\bl a^1)$ is an improvement cycle, then for every improvement step $r$ and topic $k$ such that $a_{p_r}^{r+1}=k$ there exist $(r',k')$ such that $a_{p_{r'}}^{r'+1}=k'$ and
\[
\frac{D(k)}{W_k(c)+1}<\frac{D(k')}{W_{k'}(c)+1}.
\]
\end{lemma}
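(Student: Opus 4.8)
The plan is to exploit Lemma \ref{lemma:cyclesexpuretargeted}, which guarantees that $B_k(\bl a^r)$ is constant across the cycle for every topic. The first thing I would record is that in an improvement cycle every deviating author lands exactly on the highest-quality level of her new topic. Indeed, if $a_{p_r}^{r+1}=k$, then Proposition \ref{prop:leqgeq} gives $Q_{p_r,k}\geq B_k(\bl a^r)$, while $Q_{p_r,k}\leq B_k(\bl a^{r+1})$ since $p_r$ writes on $k$ in $\bl a^{r+1}$; as $B_k(\bl a^r)=B_k(\bl a^{r+1})$ by Lemma \ref{lemma:cyclesexpuretargeted}, equality holds and $p_r$ joins the set counted by $H_k$. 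Consequently her post-deviation utility is exactly $D(k)/H_k(\bl a^{r+1})$, and because she was not writing on $k$ before, $H_k(\bl a^{r+1})=H_k(\bl a^{r})+1$.

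Next I would use the hypothesized deviation to $k$ at step $r$ to argue that the sequence $H_k(\bl a^1),\dots,H_k(\bl a^l)=H_k(\bl a^1)$ is non-constant, since it strictly increases at step $r$. As exactly one author moves per step and, by Lemma \ref{lemma:cyclesexpuretargeted}, no deviation can reset $B_k$ to a higher value, $H_k$ changes by at most one between consecutive profiles. Being periodic and non-constant, it must decrease to its minimum somewhere: there is a step $s$ with $H_k(\bl a^s)=W_k(c)+1$ and $H_k(\bl a^{s+1})=W_k(c)$, and this decrease is caused by a highest-quality author $p_s$ abandoning $k$ for some topic $k'\neq k$. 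Locating this ``touch-down'' step is the crux of the argument, and isolating it cleanly, by walking backwards cyclically from a profile achieving the minimum until the count first exceeds $W_k(c)$, is where I expect the only genuine care to be needed.

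Finally I would chain the utilities at step $s$. On topic $k$, author $p_s$ is one of $H_k(\bl a^s)=W_k(c)+1$ highest-quality authors, so $u^{Ex}_{p_s}(\bl a^s)=D(k)/(W_k(c)+1)$. Her move to $k'$ is an improvement step, hence $u^{Ex}_{p_s}(\bl a^{s+1})>D(k)/(W_k(c)+1)$. Applying the first observation to her deviation to $k'$, she joins the highest-quality level of $k'$, so $u^{Ex}_{p_s}(\bl a^{s+1})=D(k')/H_{k'}(\bl a^{s+1})$ with $H_{k'}(\bl a^{s+1})=H_{k'}(\bl a^s)+1\geq W_{k'}(c)+1$, giving $u^{Ex}_{p_s}(\bl a^{s+1})\leq D(k')/(W_{k'}(c)+1)$. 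Combining the two bounds yields
\[
\frac{D(k)}{W_k(c)+1} < u^{Ex}_{p_s}(\bl a^{s+1}) \leq \frac{D(k')}{W_{k'}(c)+1},
\]
so taking $(r',k')=(s,k')$ completes the proof.
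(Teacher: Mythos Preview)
Your proposal is correct and follows essentially the same line of argument as the paper. The only difference is presentational: the paper packages the ``touch-down'' step (finding $s$ with $H_k(\bl a^s)=W_k(c)+1$ and $H_k(\bl a^{s+1})=W_k(c)$) into a separate Proposition~\ref{prop:boundonutilityequal}, and the post-deviation bound $u^{Ex}_{p_s}(\bl a^{s+1})\leq D(k')/(W_{k'}(c)+1)$ into Proposition~\ref{prop:boundonutility}, while you inline both arguments directly; the underlying logic, including the key use of Lemma~\ref{lemma:cyclesexpuretargeted} to pin each deviating author to the current highest-quality level, is identical.
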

\begin{proof}[Proof sketch] 
Let $r,k$ be such that $a^{r+1}_{p_r}=k$. By definition of improvement step  $a_{p_r}^{r}\neq k$. From Lemma \ref{lemma:cyclesexpuretargeted} we know that $B_k(\bl a^r)=B_k(\bl a^{r+1})$; thus, $Q_{p_r,k} = B_k(\bl a^r)$ and $H_k(\bl a^r)\neq H_k(\bl a^{r+1})$. Afterwards, we prove another claim which guarantees that there exists $r'$ such that
\begin{equation*}
\frac{D(k)}{W_{k}(c)+1}=u^{Ex}_{p_{r'}}(\bl a^ {r'})
\end{equation*}
holds. In addition, $p_{r'}$ is the improving author, and so
\begin{equation}
\begin{split}\label{eq:lemmaproofsketch}
\frac{D(k)}{W_{k}(c)+1}=u^{Ex}_{p_{r'}}(\bl a^ {r'})<u^{Ex}_{p_{r'}}(\bl a^ {r'+1}).
\end{split}
\end{equation}
Clearly, $ a^{r'+1}_{p_{r'}}=k'\neq k$. Lemma \ref{lemma:cyclesexpuretargeted} indicates that $B_{k'}(\bl a^{r'})=B_{k'}(\bl a^{r'+1})$; hence, $Q_{p_{r'},k'}\leq B_{k'}(\bl a^{r'})$. Having showed that the condition of Proposition \ref{prop:boundonutility} holds, we invoke it for $r',k'$ and conclude that
\[
u^{Ex}_{p_{r'}}(\bl a^ {r'+1})\leq  \frac{D(k')}{W_{k'}(c)+1}.
\]
Combining this fact with Equation (\ref{eq:lemmaproofsketch}), we get
\[
\frac{D(k)}{W_k(c)+1}<\frac{D(k')}{W_{k'}(c)+1}.
\]
\end{proof} 

In Theorem \ref{thm:exposure-targeted} below we leverage Lemma \ref{lemma:changeinqualitsexp} to show that improvement cycles cannot exist.
\begin{theorem}
\label{thm:exposure-targeted}
$R^{PRP}$ is $u^{Ex}$-learnable.
\end{theorem}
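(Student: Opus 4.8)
The plan is to prove the FIP for every game induced by $R^{PRP}$ and $u^{Ex}$ by arguing that no improvement cycle can exist, and then invoking finiteness of the strategy space. Since $A=M^n$ is finite, any infinite improvement path must revisit a profile and therefore contain an improvement cycle; so to establish the FIP it suffices to rule cycles out. I would accordingly suppose, for contradiction, that some improvement cycle $c=(\bl a^1,\ldots,\bl a^l=\bl a^1)$ exists, and then reach a contradiction purely through an extremal argument on the already-established Lemma \ref{lemma:changeinqualitsexp}.

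The core step is to identify the right quantity to extremize. First I would let $T\subseteq M$ be the set of topics that are the target of some deviation along $c$; because a cycle contains at least one improvement step, $T\neq\emptyset$. For each topic $k$ I define $\Phi(k)\defeq\frac{D(k)}{W_k(c)+1}$. Lemma \ref{lemma:changeinqualitsexp} says precisely that for every $k\in T$ there is a $k'$ with $\Phi(k)<\Phi(k')$, and moreover $k'\in T$ since $a^{r'+1}_{p_{r'}}=k'$ witnesses $k'$ as a deviation target. In other words, every value of the finite set $\{\Phi(k):k\in T\}$ is strictly exceeded by another value in that set, which is impossible. Concretely, I would pick $k^{\star}\in\argmax_{k\in T}\Phi(k)$, which exists because $T$ is finite and nonempty, and apply Lemma \ref{lemma:changeinqualitsexp} to $k^{\star}$ to obtain $k'\in T$ with $\Phi(k^{\star})<\Phi(k')$, contradicting the maximality of $\Phi(k^{\star})$. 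Hence no improvement cycle exists, every improvement path is finite, and since the argument is uniform over all games with mediator $R^{PRP}$ and utility $u^{Ex}$, the mediator $R^{PRP}$ is $u^{Ex}$-learnable.

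I expect the theorem itself to be the short capstone rather than the main obstacle: essentially all of the difficulty has been front-loaded into Proposition \ref{prop:boundonutility} and Lemmas \ref{lemma:cyclesexpuretargeted}--\ref{lemma:changeinqualitsexp}. The only genuinely new ingredient is recognizing $\Phi$ as the quantity whose strict monotonicity along deviation targets forbids a maximizer. The one place I would take care is the bookkeeping that makes the extremal argument airtight, namely verifying that $T$ is nonempty and that the $k'$ produced by Lemma \ref{lemma:changeinqualitsexp} genuinely lies back in $T$, so that the maximum of $\Phi$ over $T$ is both well defined and the object to which the lemma applies.
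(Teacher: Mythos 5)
Your proof is correct and takes essentially the same route as the paper: both arguments reduce all the difficulty to Lemma \ref{lemma:changeinqualitsexp} and then rule out improvement cycles via finiteness of the topic set. The only difference is the finishing move --- the paper iterates the lemma $m+1$ times to build a strictly increasing chain $\frac{D(k_1)}{W_{k_1}(c)+1}<\cdots<\frac{D(k_{m+1})}{W_{k_{m+1}}(c)+1}$ and invokes pigeonhole, whereas you apply the lemma once to a maximizer of your $\Phi$ over the set of deviation targets; this is a cosmetic (if slightly cleaner) variant of the same finiteness argument.
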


\begin{proof}[Proof of Theorem \ref{thm:exposure-targeted}]
To show that $R^{PRP}$ is $u^{Ex}$-learnable it suffices to show that every improvement path is finite. Moreover, every improvement path cannot contain more than a finite number of different strategy profiles, as $m$ and $n$ are finite; therefore, if $\gamma$ is infinite it must contain an improvement cycle. We are left to prove that $\gamma$ cannot contain an improvement cycle.

Assume by contradiction that $\gamma$ contains an improvement cycle $c=(\bl a^1,\bl a^2,\ldots,\bl a^l=\bl a^1)$. Let $r_1$ be an arbitrary improvement step and denote by $k_1$ the topic such that $a_{p_{r_1}}^{r_1+1}=k_1$.
From Lemma \ref{lemma:changeinqualitsexp} we know that there exist  $(r_2,k_2)$ such that $a^{r_2+1}_{p{r_2}}=k_2$ and
\[
\frac{D(k_1)}{W_{k_1}(c)+1}<\frac{D(k_2)}{W_{k_2}(c)+1}.
\]
Since $a^{r_2+1}_{p_{r_2}}=k_2$, we can now use Lemma \ref{lemma:changeinqualitsexp} again in order to find $(r_3,k_3)$ such that $a^{r_3+1}_{p_{r_3}}=k_3$ and
\[
\frac{D(k_2)}{W_{k_2}(c)+1}<\frac{D(k_3)}{W_{k_3}(c)+1}.
\]
This process can be extended to achieve additional $k_4,k_5,\ldots, k_{m+1}$ such that
\[
\frac{D(k_1)}{W_{k_1}(c)+1}<\frac{D(k_2)}{W_{k_2}(c)+1}
<\ldots<\frac{D(k_{m+1})}{W_{k_{m+1}}(c)+1}.
\]
Since there are only $m$ topics and that the inequality above contains $m+1$ elements, there are at least two elements which are identical; thus we obtain a contradiction. We deduce that an improvement cycle cannot exist. 
\end{proof}

Theorem \ref{thm:exposure-targeted} concludes the analysis of the exposure-targeted utility function.

\subsection{Action-Targeted Utility}
\label{subsec:Impression-Focused}
After analyzing games with exposure-targeted utility, we proceed to action-targeted utility. The main result of this subsection is that $R^{PRP}$ is $u^{Ac}$-learnable, which is analogous to the main result of the previous one. Interestingly, achieving this result requires a more subtle treatment. To motivate it, consider the following: under $u^{Ex}$, in a case where the quality of an author's document on topic $k$ exceeds the quality of all other authors writing on topic $k$, she will not deviate to a topic with a higher index (a topic with a lower or equal user mass). This, however, is not true for $u^{Ac}$. For instance, consider the strategy profile $(2,1)$ in the example given in Subsection \ref{subsec:example}. Under $u^{Ex}$, author 1 cannot increase her utility by deviating to topic 3 (a topic with a lower user mass). In contrast, under $u^{Ac}$, author 1 \textit{can} improve her utility by deviating to topic 3. To assist in that, let $S_k(\gamma)$ denote the highest quality of a document written on topic $k$ throughout a finite improvement path $\gamma$. Formally, 
\begin{definition}
Given a topic $k$ and an improvement path $\gamma=(\bl a^1,\dots ,\bl a^l)$,
\[
S_k(\gamma)\defeq\max\limits_{1 \leq r \leq l}\{B_k(\bl a^r)\}.
\] 
\end{definition}
In Proposition \ref{prop:boundonutilityUAC} we bound the utility of an improving author in an improvement step. 

\begin{proposition}
\label{prop:boundonutilityUAC}
Let $\gamma$ be a finite improvement path, and let $a^{r+1}_{p_r}=k$ for an arbitrary improvement step $r$. If $Q_{p_r,k}\leq B_k(\bl a^r)$, then 
\[
u^{Ac}_{p_{r}}(\bl a^ {r+1})\leq \frac{D(k)\cdot S_k(\gamma)}{W_{k}(\gamma)+1}.
\]
\end{proposition}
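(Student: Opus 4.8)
The plan is to mirror the proof of the exposure-targeted analog, Proposition~\ref{prop:boundonutility}, while carefully tracking the extra quality factor $Q_{p_r,k}$ that distinguishes $u^{Ac}$ from $u^{Ex}$; it is precisely this factor that forces $S_k(\gamma)$ into the bound rather than leaving $D(k)/(W_k(\gamma)+1)$ alone. The whole argument hinges on first pinning down the exact quality with which $p_r$ writes when she deviates to topic $k$.

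First I would combine the two constraints on $Q_{p_r,k}$. The hypothesis gives $Q_{p_r,k}\leq B_k(\bl a^r)$, while the reasoning underlying Proposition~\ref{prop:leqgeq} applies verbatim to $u^{Ac}$ (a deviation to a topic on which one is not of top quality yields rank-first probability $0$ and hence zero utility under $R^{PRP}$, which cannot be a strict improvement from a nonnegative value), so $Q_{p_r,k}\geq B_k(\bl a^r)$. Hence $Q_{p_r,k}=B_k(\bl a^r)$. Because $p_r$ was not writing on $k$ at step $r$ and every other author's action is unchanged, her arrival does not raise the top quality on $k$, so $B_k(\bl a^{r+1})=B_k(\bl a^r)=Q_{p_r,k}$; in particular $p_r$ is now a top-quality author on $k$, and therefore $R^{PRP}_{p_r}(Q,k,\bl a^{r+1})=1/H_k(\bl a^{r+1})$.

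Next I would count. Since $p_r$ joins the $H_k(\bl a^r)$ incumbents already writing with quality $B_k(\bl a^r)$ and herself attains exactly that quality, we get $H_k(\bl a^{r+1})=H_k(\bl a^r)+1\geq W_k(\gamma)+1$ by the definition of $W_k(\gamma)$ as the per-topic minimum over $\gamma$. Substituting into the definition of $u^{Ac}$ and bounding the quality factor by $Q_{p_r,k}=B_k(\bl a^{r+1})\leq S_k(\gamma)$ then yields
\[
u^{Ac}_{p_r}(\bl a^{r+1})=\frac{D(k)\cdot Q_{p_r,k}}{H_k(\bl a^{r+1})}\leq \frac{D(k)\cdot S_k(\gamma)}{W_k(\gamma)+1},
\]
as required.

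The counting is routine; the only genuinely delicate point is the opening equality $Q_{p_r,k}=B_k(\bl a^r)$, since everything downstream depends on $p_r$ \emph{tying} the incumbent top quality rather than strictly undercutting or exceeding it: the tie is what makes $R^{PRP}_{p_r}(Q,k,\bl a^{r+1})=1/H_k(\bl a^{r+1})$, what makes $H_k$ increase by exactly one, and what lets $S_k(\gamma)$ absorb the quality factor. I would therefore be explicit that the hypothesis of the proposition is exactly the clause $Q_{p_r,k}\leq B_k(\bl a^r)$ needed to collapse Proposition~\ref{prop:leqgeq} into an equality, and I would note that the degenerate case $B_k(\bl a^r)=0$ (no incumbent on $k$) is harmless, as it renders the left-hand side zero and the inequality trivial.
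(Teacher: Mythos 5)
Your proof is correct and follows essentially the same route as the paper's: the paper likewise combines the hypothesis with the $u^{Ac}$ analog of Proposition~\ref{prop:leqgeq} (stated there as a separate proposition) to get $Q_{p_r,k}=B_k(\bl a^r)$, then uses $H_k(\bl a^{r+1})=H_k(\bl a^r)+1\geq W_k(\gamma)+1$ and $B_k(\bl a^r)\leq S_k(\gamma)$ to conclude. The only cosmetic difference is that you inline the argument for $Q_{p_r,k}\geq B_k(\bl a^r)$ rather than citing it as a standalone result.
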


Notice that $S_k(\gamma) \leq 1$ for every $k$ and every $\gamma$; thus, the bound given in Proposition \ref{prop:boundonutility} trivially holds for $u^{Ac}$. However, proving this tighter bound becomes essential for refuting the existence of improvement cycles under $u^{Ac}$. By proving additional supporting lemmas (which are further elaborated in the appendix), we show that 
\begin{theorem}
\label{thm:action-targeted}
$R^{PRP}$ is $u^{Ac}$-learnable.
\end{theorem}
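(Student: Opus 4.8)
The plan is to mirror the exposure-targeted argument of Theorem~\ref{thm:exposure-targeted}, with the single change that the \emph{level} attached to a topic $k$ along an improvement cycle $c$ is now $v_k\defeq\frac{D(k)\cdot S_k(c)}{W_k(c)+1}$ rather than $\frac{D(k)}{W_k(c)+1}$. Since $S_k(c)$ and $W_k(c)$ are constants of the cycle, $v_k$ depends only on the topic index, so the endgame is identical to that of Theorem~\ref{thm:exposure-targeted}: if an improvement cycle $c$ existed, I would repeatedly produce improvement steps deviating to topics $k_1,k_2,\dots$ with $v_{k_1}<v_{k_2}<\cdots$, and a chain of length $m+1$ over only $m$ topics would force two equal levels, a contradiction. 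Everything therefore reduces to an action-targeted analogue of the ascent Lemma~\ref{lemma:changeinqualitsexp}: for every step $r$ with $a^{r+1}_{p_r}=k$ there are $(r',k')$ with $a^{r'+1}_{p_{r'}}=k'$ and $v_k<v_{k'}$.

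First I would record the $u^{Ac}$ version of Proposition~\ref{prop:leqgeq}: an improving author who moves to $k$ must write there with $Q_{p_r,k}\ge B_k(\bl a^r)$, for otherwise $R^{PRP}$ ranks her strictly below an incumbent, her action-utility is $0$, and a strict improvement over a nonnegative utility is impossible. Combined with Proposition~\ref{prop:boundonutilityUAC}, this splits every deviation to $k$ into a \emph{tie-join} ($Q_{p_r,k}=B_k(\bl a^r)$, after-utility $\le v_k$) and a \emph{strict-top} entrance ($Q_{p_r,k}>B_k(\bl a^r)$, after-utility $D(k)\,Q_{p_r,k}$, with $H_k=1$ afterwards). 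Proposition~\ref{prop:boundonutilityUAC} supplies the clean upper bound on the \emph{join} side of the chain: whenever the migrating author enters $k'$ at a tie, her new utility is at most $v_{k'}$.

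The heart of the proof, and the place where the action-targeted case genuinely departs from the exposure-targeted one, is the \emph{source} side. In Theorem~\ref{thm:exposure-targeted} this step relied on Lemma~\ref{lemma:cyclesexpuretargeted}, i.e.\ on $B_k$ being constant along the cycle, which let one read off a departing author whose pre-deviation utility equals $\frac{D(k)}{W_k(c)+1}$ exactly. Under $u^{Ac}$ that lemma is \emph{false}: its base case fails because the unique top author on a high-mass topic can strictly improve by migrating to a lower-mass topic on which her quality is far higher, so $B_k$ really does fluctuate inside a cycle. I would instead exploit $S_k(c)$ as the equaliser, tracking an author $j^{*}$ realising $Q_{j^{*},k}=S_k(c)$: whenever $j^{*}$ occupies $k$ she is a top author and $B_k=S_k(c)$ there, so her utility is $\frac{D(k)\,S_k(c)}{H_k}$. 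The goal is to locate a step at which such a top author abandons $k$ with $H_k\le W_k(c)+1$, so that her departing utility is at least $v_k$; since the departure is an improvement, the topic $k'$ she enters satisfies $v_k<(\text{her new utility})$, and if she tie-joins $k'$ we conclude $v_k<v_{k'}$ via Proposition~\ref{prop:boundonutilityUAC}.

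The main obstacle is precisely pinning down this source step, and it has two faces, both caused by the non-constancy of $B_k$. (i) The maximal quality $S_k(c)$ and the minimal occupancy $W_k(c)$ need not be attained at the same step, so a departure at the min-occupancy step only yields utility $\frac{D(k)\,B_k}{W_k(c)+1}\le v_k$, the wrong direction; producing a single departure that simultaneously sees quality $S_k(c)$ and occupancy at most $W_k(c)+1$ is the crux, and is what the additional supporting lemmas must establish (for example by analysing the maximal sub-interval on which a top-quality author occupies $k$ and selecting the thinnest such occupancy). (ii) The migrating author may enter $k'$ as a strict-top, a case outside the hypothesis of Proposition~\ref{prop:boundonutilityUAC}, where the naive bound $D(k')\,Q_{p_{r'},k'}\le D(k')\,S_{k'}(c)$ badly overshoots $v_{k'}$; I expect to neutralise this by following the cycle to the later step at which that freshly-raised top on $k'$ is itself relinquished at a tie, re-entering the regime where Proposition~\ref{prop:boundonutilityUAC} applies. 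Once the ascent inequality $v_k<v_{k'}$ is secured, the pigeonhole finish is verbatim that of Theorem~\ref{thm:exposure-targeted}.
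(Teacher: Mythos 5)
Your skeleton matches the paper's: the correct level function $v_k=\frac{D(k)\cdot S_k(c)}{W_k(c)+1}$, the action-targeted analogues of Propositions \ref{prop:leqgeq} and \ref{prop:boundonutility} (the paper's Propositions \ref{prop:leqgeqUAC} and \ref{prop:boundonutilityUAC}), an ascent lemma, and the pigeonhole finish verbatim from Theorem \ref{thm:exposure-targeted}. However, there is a genuine gap at exactly the point you call ``the crux.'' You assert that the action-targeted analogue of Lemma \ref{lemma:cyclesexpuretargeted} is \emph{false}, and on that basis you replace it with a tracking argument whose two key steps --- producing a departure that simultaneously sees quality $S_k(c)$ and occupancy at most $W_k(c)+1$, and handling strict-top entrances into $k'$ --- you explicitly leave to unspecified ``additional supporting lemmas'' and to what you ``expect to neutralise.'' Nothing in the proposal actually proves the ascent inequality $v_k<v_{k'}$, so the proof is incomplete precisely where the real work lies.

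The assertion that drives this detour is also wrong. What fails under $u^{Ac}$ is only the \emph{proof} of Lemma \ref{lemma:cyclesexpuretargeted} that orders topics by $D$ alone (your migration example shows an improvement step that lowers $B_k$ can exist in the game --- but it does not show such a step can occur \emph{inside an improvement cycle}; in the paper's illustrative example that step leads straight to the PNE). The conclusion itself --- $B_k(\bl a^r)=B_k(\bl a^{r+1})$ for every step of an improvement cycle --- remains true, and the paper proves it (Lemma \ref{lemma:cyclesactiontargeted}) by the one idea your proposal is missing: re-index the topics so that $D(1)\cdot S_1(c)\geq D(2)\cdot S_2(c)\geq\cdots\geq D(m)\cdot S_m(c)$ and run the same induction with the weighted quantities. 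The base case then goes through: if $B_1$ ever drops from $S_1(c)$, the departing author was the unique top author with utility exactly $D(1)\cdot S_1(c)$, while her post-deviation utility is at most $D(k')\cdot S_{k'}(c)\leq D(1)\cdot S_1(c)$, a contradiction. Once constancy is in hand, both of your obstacles evaporate: (i) $B_k(\bl a^r)=S_k(c)$ at \emph{every} step, so maximal quality and minimal occupancy coexist automatically (this is the hypothesis of the paper's Proposition \ref{prop:boundonutilityequalUAC}, which supplies the source-side equality $u^{Ac}_{p_{r'}}(\bl a^{r'})=\frac{D(k)\cdot S_k(c)}{W_k(c)+1}$); and (ii) a strict-top entrance into $k'$ would raise $B_{k'}$ and so cannot occur in a cycle, hence Proposition \ref{prop:boundonutilityUAC} always applies to the destination topic, yielding the ascent lemma (the paper's Lemma \ref{lemma:changeinqualitsexpUAC}) and the theorem.
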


\section{Non-Learnability under Other Mediators}
\label{sec:nonlearnability}
In the previous section we showed a powerful result: 
$R^{PRP}$ is both $u^{Ex}$-learnable and $u^{Ac}$-learnable. In other words,
when using $R^{PRP}$, any better-response dynamics converges; this is true for both utility schemes. In fact, $R^{PRP}$ is not the only mediator under which such convergence occurs. For instance, Let  $R^{RAND}$ be the \textit{random} mediator, such that for any author $j$ and any topic $k$,
\[
R_j^{RAND}(Q,k,\bl a) \defeq 
\begin{cases}
\frac{1}{\sum_{i=1}^n\ind_{a_i=k}} & a_j = k\\
0 & \text{otherwise}
\end{cases}.
\]
By showing that under $u^{Ex}$ any game with $R^{RAND}$ can be reduced to a game with $R^{PRP}$, we conclude that
\begin{proposition}
\label{prop:randislearnable}
$R^{RAND}$ is $u^{Ex}$-learnable.
\end{proposition}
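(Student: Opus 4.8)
The plan is to exploit a structural reduction, exactly as the hint suggests: I would show that, under $u^{Ex}$, every game played with $R^{RAND}$ is \emph{utility-equivalent} to a game played with $R^{PRP}$, and then invoke Theorem~\ref{thm:exposure-targeted} directly rather than re-proving convergence from scratch.

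First I would unwind the definitions. Under $R^{RAND}$ the exposure-targeted utility of author $j$ at a profile $\bl a$ collapses to
\[
u^{Ex}_j(\bl a) = D(a_j)\cdot \frac{1}{\sum_{i=1}^n \ind_{a_i=a_j}},
\]
since the only nonzero summand is $k=a_j$ and $R^{RAND}_j$ is independent of $Q$. The key observation is therefore that under $R^{RAND}$ and $u^{Ex}$ the quality matrix plays no role whatsoever; the utility is simply the demand of the chosen topic split evenly among the authors selecting it.

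Second, I would construct the companion game. Given an arbitrary $\mathcal G=\tupbracket{N,M,D,Q,R^{RAND},u^{Ex}}$, define $\mathcal G'=\tupbracket{N,M,D,\mathbf 1,R^{PRP},u^{Ex}}$, where $\mathbf 1$ is the $n\times m$ all-ones quality matrix. With $Q'=\mathbf 1$ we have $B_k(\bl a)=1$ whenever some author writes on topic $k$, and every author on that topic attains this maximum, so $H_k(\bl a)=\sum_{i=1}^n \ind_{a_i=k}$. Consequently $R^{PRP}_j(\mathbf 1,k,\bl a)=\frac{1}{\sum_i \ind_{a_i=k}}$ for each $j$ with $a_j=k$, which coincides exactly with $R^{RAND}_j(Q,k,\bl a)$. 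Plugging this into the definition of $u^{Ex}$ shows that $\mathcal G$ and $\mathcal G'$ have \emph{identical} utility functions as maps from $A$ to $\mathbb R^n$.

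Third, since the two games agree pointwise on every author's utility, they induce exactly the same better responses and hence exactly the same improvement paths. Theorem~\ref{thm:exposure-targeted} guarantees that $\mathcal G'$, being a game induced by $R^{PRP}$ and $u^{Ex}$, has the FIP, so $\mathcal G$ has the FIP as well; as $\mathcal G$ was arbitrary, $R^{RAND}$ is $u^{Ex}$-learnable. I do not anticipate a serious obstacle here: the only points requiring care are confirming that $R^{PRP}$ with a constant quality matrix genuinely reproduces $R^{RAND}$'s uniform tie-breaking (in particular that $H_k$ counts \emph{all} authors on topic $k$, not a strict subset), and checking that the definition of a game induced by $R^{PRP}$ admits the degenerate matrix $\mathbf 1$, so that $\mathcal G'$ is a legitimate member of that class and Theorem~\ref{thm:exposure-targeted} applies to it.
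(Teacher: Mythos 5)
Your proposal is correct and matches the paper's own proof essentially line for line: the same reduction to a companion game with the $R^{PRP}$ mediator and the all-ones quality matrix, the same pointwise identification of utilities via $H_k(\bl a)=\sum_{i=1}^n \ind_{a_i=k}$, and the same invocation of Theorem~\ref{thm:exposure-targeted} to transfer the FIP back to the original game. The two points of care you flag (that $H_k$ counts all authors on the topic under the constant matrix, and that $\mathbf 1\in[0,1]^{n\times m}$ is an admissible quality matrix) are exactly the facts the paper's argument rests on.
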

\begin{proof}[Proof sketch]
We prove the claim by showing that under $u^{Ex}$ any game with $R^{RAND}$ can be reduced to a game with $R^{PRP}$, such that the two games are strategically equivalent.
This is done by taking any game $G$ with $R^{RAND}$ as the mediator and a quality matrix $Q$, and reduce it to a game $G'$ with $R^{PRP}$ as the mediator and $Q'$ as the quality matrix, such that $Q_{j,k}' = 1$ for every $ j \in N$ and $k \in M$ . 

Since both $G,G'$ consists of the exposure-targeted utility function, we omit the super-script $Ex$ and use the super-script $G$ to specify the utility of author $j$ under the strategy profile $\bl a$ in $G$, i.e., $u^G_j(\bl a)$, and equivalently $u^{G'}_j(\bl a)$ for $G'$. By definition of exposure-targeted utility and $R^{PRP}$, for every valid $j$ and $\bl a$ it holds that
\begin{align*}
u^{G'}_j(\bl a)&= \sum_{k=1}^m \ind_{a_j=k} \cdot D(k)\cdot R^{PRP}_j(Q',k,\bl a) \\ &=D(a_j)\cdot R^{PRP}_j(Q',a_j,\bl a)\\
&= D(a_j)\cdot \frac{1}{H_{a_j}(\bl a)}\\
&=D(a_j)\cdot R^{RAND}_j(Q,a_j,\bl a)\\
&= \sum_{k=1}^m \ind_{a_j=k} \cdot D(k)\cdot R^{RAND}_j(Q,k,\bl a) =u^{G}_j(\bl a).
\end{align*}
Since $G'$ possesses $R^{PRP}$ as the mediator, Theorem \ref{thm:exposure-targeted} guarantees that $G'$ has the FIP property. Since we showed $G$ and $G'$ are strategically equivalent, $G$ also has the FIP property, and in particular does not contain improvement cycles.
\end{proof}

Notice that $R^{RAND}$  treats every document the same, regardless of its quality. However, in many (and perhaps even most) scenarios mediators seek to promote high-quality content. Therefore, the reader may wonder whether other plausible mediators are  $u^{Ex}$-learnable or $u^{Ac}$-learnable. We now focus on a wide and intuitive family of mediators, which we term scoring mediators.
\begin{definition}
Let $R$ be a mediator. We say that $R$ is a scoring mediator if there exists a non-decreasing function $f:\mathbb R \rightarrow \mathbb R_{+}$ such that for every $Q, k, \bl a$ and author index $j$ it holds that
\[
R_j(Q, k, \bl a) \defeq
\begin{cases}
\frac{f(Q_{j,k})}{\sum_{i=1}^{n} \ind_{a_i = k}\cdot f(Q_{i,k})} & a_j = k\\
0 & \text{otherwise}
\end{cases}.
\]
It this case, we denote $R=R^f$ for the corresponding $f$.
\end{definition}
Under a scoring mediator every author receives a probability according to the proportion of her score over the sum of the scores of all author writing on that topic. 
Notice that if $R^f$ is a scoring mediator such that the corresponding $f$ is constant, we get $R^f=R^{RAND}$. In addition, this family also includes celebrated mediators, e.g. the softmax function (for $f(Q_{j,k})=e^{Q_{j,k}}$), which is very popular in machine learning applications, or the linear function (for $f(Q_{j,k})=Q_{j,k}$) that is common in probabilistic models for decision making (for instance, in the Bradley–-Terry model \cite{bradley1952rank}). Noticeably, the $R^{PRP}$ is not a scoring mediator. 
In the rest of this section, we show non-convergence of better-response dynamics for general families of scoring mediators.

\subsection{Exposure-Targeted Utility}
\label{subsec:Exposure-Targeted-Utility Non Leranbility}
In this subsection we prove that, under mild assumptions, scoring mediators are not $u^{Ex}$-learnable (as opposed to $R^{RAND}$). We restrict ourselves to mediators for which the corresponding function $f$ is continuous, and exhibits the following property: the ratio between the score of the highest quality and the lowest quality is greater than two (note that this property holds trivially if the score of the lowest quality is zero, i.e., $f(0)=0$). Among others, this class of mediators contains mediators based on softmax and linear functions, as described above.  
\begin{theorem}
\label{thm:exposure-targeted NL proof}
Let $R^f$ be a scoring mediator. If $f$ is a continuous function such that $f(1)>2 f(0)$, then $R^f$ is not $u^{Ex}$-learnable.
\end{theorem}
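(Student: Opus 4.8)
The statement is a non-existence-of-FIP claim, so to establish it I only need to exhibit a single authors game with mediator $R^f$ and utility $u^{Ex}$ whose improvement paths are not all finite; equivalently, I must construct one instance containing an improvement cycle. The plan is therefore purely constructive: fix a quality matrix $Q$ and a demand vector $D$, and then display an explicit cyclic sequence of profiles together with a verification that the deviating author strictly increases her $u^{Ex}$ at every step.

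The design is driven by the observation that under a scoring mediator the exposure game is a singleton (parallel-topic) congestion game with player-specific payoffs: an author on topic $k$ sharing it with a set $S$ of co-authors obtains
\[
u^{Ex}_j(\bl a)=D(k)\cdot\frac{f(Q_{j,k})}{\sum_{i\in S}f(Q_{i,k})}.
\]
The hypothesis $f(1)>2f(0)$ is used precisely to control such a split: when an author of quality $1$ shares a topic with a single author of quality $0$, she collects strictly more than $\frac{2}{3}$ of the topic's demand, while her rival collects strictly less than $\frac{1}{3}$. Continuity of $f$ (via the intermediate value theorem) lets me realize any intermediate score in $[f(0),f(1)]$ by a suitable quality, so I may treat the per-author, per-topic scores as free parameters whose only binding restriction is that their maximum-to-minimum ratio may be taken to exceed $2$. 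The core of the argument is then to choose these scores and the demands $D(1)\ge\cdots\ge D(m)$ so that a prescribed cycle of single-author deviations is improving at each step.

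The principal obstacle is that ``comfortable'' authors never move, so a naive small instance collapses to a pure Nash equilibrium rather than cycling: an author who sits alone on the highest-demand topic, or who strictly dominates her topic, has no profitable deviation, and indeed for instances with very few authors or few topics every candidate cycle is self-defeating (e.g.\ on two topics the improvement inequalities telescope and cannot all be strict simultaneously). I therefore engineer the instance so that every topic remains contested throughout the cycle---in particular with enough authors that no author can ever secure a topic alone---and the $f(1)>2f(0)$ gap is what makes a dominant-but-shared author strictly prefer to keep churning rather than settle. Concretely I expect to specify a constant number of authors and topics, assign each author a high score on the topic she is about to occupy and a low score elsewhere so that the split inequalities point the right way, and pick the demands inside the (nonempty, thanks to $f(1)>2f(0)$) interval that makes all of the finitely many strict improvement inequalities hold at once.

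Finally, having exhibited the improvement cycle, I conclude that the game does not have the FIP, whence $R^f$ is not $u^{Ex}$-learnable by Definition~\ref{def:leanable}; since the construction only invokes continuity of $f$ and the ratio bound $f(1)>2f(0)$, it covers the softmax and linear scoring mediators as special cases and stands in contrast to the learnable mediators $R^{PRP}$ (Theorem~\ref{thm:exposure-targeted}) and $R^{RAND}$ (Proposition~\ref{prop:randislearnable}).
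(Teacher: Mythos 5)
There is a genuine gap: your proposal correctly identifies the only viable strategy (exhibit a single game with an improvement cycle, which is exactly what the paper does), but it never carries that strategy out, and the construction \emph{is} the entire mathematical content of this theorem. You write that you ``expect to specify'' the number of authors and topics, the scores, and the demands, and that the demands can be picked ``inside the (nonempty, thanks to $f(1)>2f(0)$) interval that makes all of the finitely many strict improvement inequalities hold at once'' --- but the nonemptiness of that feasible region is precisely what must be proved, and it is not automatic. The paper has to work for it: it uses the Intermediate Value Theorem to extract three qualities $0<x_3<x_2<x_1\leq 1$ satisfying the \emph{nested} ratio condition $\frac{f(x_2)}{f(x_3)}>\frac{2f(x_1)}{f(x_2)}>2$ (not merely a max-to-min ratio exceeding $2$, as you assert), builds a specific $4$-author, $3$-topic quality matrix around these values, proves a separate claim that a suitable $\epsilon$ exists so that three delicate inequalities involving $c_1=f(x_1)/f(x_2)$ and $c_2=f(x_2)/f(x_3)$ hold simultaneously, defines $D$ in terms of that $\epsilon$, and then verifies six improvement steps one by one. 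Without exhibiting the matrix, the demand vector, the cycle, and the verification, the statement ``the inequalities can all be made strict at once'' is an assumption equivalent to the theorem itself.

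A secondary issue is that your design heuristic --- ``assign each author a high score on the topic she is about to occupy and a low score elsewhere'' --- does not match how a working construction actually operates, so following it would likely lead you astray. In the paper's cycle, authors sometimes profitably deviate to topics where their quality is strictly \emph{lower} (e.g., an author with quality $x_2$ on topic $1$ moves to topic $3$ where her quality is only $x_3$), because what matters is the share of demand she captures there (being alone on a low-demand topic can beat splitting a high-demand one), not the absolute score. Likewise, your side claim that two-topic instances cannot cycle because ``the improvement inequalities telescope'' is unsupported; it plays only a motivational role, but it signals that the feasibility analysis you are deferring is subtler than the proposal suggests.
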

\begin{proof}[Proof sketch]
It is sufficient to show that for every $f$ that satisfies the theorem's conditions, we can construct a game instance with an improvement cycle. We exploit the properties of $f$ to construct a game with four authors and three topics, and show that an improvement cycle exists. Let $R^f$ be a scoring mediator with the corresponding function $f$, which we assume exhibits $f(1)> 2f(0)$. Due to the Intermediate Value Theorem, there exist $x_1,x_2,x_3$ such that $0<x_3<x_2<x_1\leq 1$ and 
\[
\frac{f(x_2)}{f(x_3)}>\frac{2f(x_1)}{f(x_2)}>2.
\]
For brevity, denote $c_1=\frac{f(x_1)}{f(x_2)}$ and $c_2=\frac{f(x_2)}{f(x_3)}$, and observe that $c_2>2 c_1$. Consider a game with $\abs{N}=4$ authors, $\abs{M}=3$ topics and a quality matrix $Q$ such that 
\[
\begin{pmatrix}
x_1 & 0 & 0 \\
x_1 & x_2 & 0 \\
x_2 & 0 & x_3\\
0 & x_3 & x_2
\end{pmatrix}.
\]
The only missing ingredient is the distribution $D$ over the topics. The selection of such $D$ is crucial: we shall select $D$ to allow improvement cycles. Denote
\[
D(1)=\frac{1}{2-3\epsilon},  D(2)=\frac{1-2\epsilon}{2(2-3\epsilon)} ,  D(3)=\frac{1-4\epsilon}{2(2-3 \epsilon)},
\]
for some $0<\epsilon\leq \frac{1}{4}$. It can be verified that $D$ is a valid distribution over the set of topics.  Consider the strategy profiles
\begin{gather*}
\bl a^1=(1,1,1,2), \quad
\bl a^2=(1,2,1,2), \quad
\bl a^3=(1,2,3,2), \\
\bl a^4=(1,2,3,3), \quad
\bl a^5=(1,1,3,3), \quad
\bl a^6=(1,1,1,3).
\end{gather*}
In the rest of the proof we show that $\epsilon$ can be selected such that the cycle $c=(\bl a^1,\bl a^2,\bl a^3, \bl a^4,\bl a^5,\bl a^6, \bl a^7=\bl a^1)$
is an improvement cycle of the game we constructed. More precisely, we prove that for every $r$, $1\leq r \leq 6$, $u^{Ex}_{p_r}(\bl a^r)<u^{Ex}_{p_r}(\bl a^{r+1})$. This suggests that $R^f$ is not $u^{Ex}$-learnable.
\end{proof}
While all it takes to prove Theorem \ref{thm:exposure-targeted NL proof} is to show a \textit{single} game instance with an improvement cycle, we can actually construct infinitely many games which do not possess FIP. Moreover, our construction can be viewed as a sub-game in a much broader game, i.e., with more authors and topics.

\subsection{Action-Targeted Utility} 
\label{subsec:Action-Targeted-Utility Non Leranbility}
When analyzing scoring mediators, an additional difference between the two utility schemes emerges. In the improvement cycle constructed in the proof of Theorem \ref{thm:exposure-targeted NL proof}, there exists an improvement step in which the improving author decreases the quality of her document but still increases her utility. Under the action-targeted utility function, such a decrease may not be translated to improved utility. Namely, the technique employed in Theorem \ref{thm:exposure-targeted NL proof} for constructing a game that possesses an improvement cycle might not work here. Nevertheless, the following theorem shows non-learnability under $u^{Ac}$ of scoring mediators that boost high-quality content. For example, a mediator $R^f$ where the corresponding $f$ satisfies $f(1)>6f(\frac{1}{2})$ assigns a substantially higher score to the highest quality than a mediocre one.
\begin{theorem}
\label{thm:action-targeted NL proof11}
Let $R^f$ be a scoring mediator. If $f$ is a continuous function such that  ${f(1)>2(2\alpha -1) f\left(\frac{1}{\alpha}\right)}$
for some $\alpha >1$, then $R^f$ is not $u^{Ac}$-learnable. 
\end{theorem}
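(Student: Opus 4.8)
The plan is to mirror the strategy of Theorem~\ref{thm:exposure-targeted NL proof}: since $u^{Ac}$-learnability asks for the FIP in \emph{every} induced game, to refute it I only need to exhibit a \emph{single} game with $R^f$ and $u^{Ac}$ that admits an improvement cycle. I would reuse the combinatorial scaffold of the exposure-targeted construction --- four authors, three topics, the same sparse quality pattern, and the same six profiles $\bl a^1,\dots,\bl a^6$ with the closing identification $\bl a^7=\bl a^1$ from the proof of Theorem~\ref{thm:exposure-targeted NL proof} --- but with quality values and a query distribution retuned to the action-targeted objective.

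The first task is to pin down the three quality levels. Writing out $u^{Ac}_{p_r}(\bl a^{r+1})>u^{Ac}_{p_r}(\bl a^{r})$ for each $r$ and eliminating the distribution yields a handful of scalar conditions on $f(x_1),f(x_2),f(x_3)$ and on the ratios $x_2/x_1,\ x_3/x_2$; the governing one is again
\[
f(x_2)^2>2f(x_1)f(x_3),
\]
inherited from the two steps in which the deviating author joins the two incumbents already on a topic, now accompanied by conditions that additionally weigh the quality ratios, because every utility carries the extra factor $Q_{j,k}$. Using the continuity of $f$ and the hypothesis $f(1)>2(2\alpha-1)f(1/\alpha)$, I would invoke the Intermediate Value Theorem to select $0<x_3<x_2<x_1\le 1$ meeting all of these simultaneously. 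The role of $\alpha$ is to fix the top ratio (morally $x_2\approx x_1/\alpha$), and the constant $2(2\alpha-1)$ reflects the combination of the factor $2$ coming from the crowding (splitting) steps with the $(2\alpha-1)$ penalty that appears once the distribution is tied to the quality ratio $x_2/x_1$; the hypothesis is precisely the amount of spread in $f$ that keeps this feasible region nonempty.

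With the $x_i$ fixed, I would next choose the distribution. In contrast with the exposure case --- where the limiting ratios were $D(1)/D(2)\to 2$ and $D(3)/D(2)\to 1$ --- here the distribution must \emph{track} the quality ratios, with $D(1)/D(2)$ slightly above $2\,x_2/x_1$ and $D(3)/D(2)$ slightly below $x_3/x_2$. I would therefore take a one-parameter family $D=D(\epsilon)$ having these limits as $\epsilon\to 0$, check that $D(\epsilon)$ is a valid distribution, and verify that for all sufficiently small $\epsilon>0$ each of the six inequalities holds strictly, which exhibits the cycle.

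The main obstacle is exactly the phenomenon flagged before the theorem. In the exposure construction the improving author sometimes \emph{lowers} her document's quality yet still gains exposure, and under $u^{Ac}$ such a move need not pay. Here the quality-decreasing steps are $\bl a^1\to\bl a^2$, $\bl a^2\to\bl a^3$ and $\bl a^6\to\bl a^7$, and these are where the extra factor $Q_{j,k}$ threatens to cancel the gain; for instance in $\bl a^6\to\bl a^7$ an author leaves a topic she holds alone at quality $x_2$ for a lower-mass topic where her quality is only $x_3$, so improvement demands $D(2)x_3>D(3)x_2$, i.e.\ $D(3)/D(2)<x_3/x_2$. Keeping all three profitable is what forces the distribution ratios to be pinned to $x_2/x_1$ and $x_3/x_2$ rather than to the constants $2$ and $1$, and what forces $f$ to boost the top quality strongly --- hence the strengthening from $f(1)>2f(0)$ to $f(1)>2(2\alpha-1)f(1/\alpha)$. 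Showing that one choice of $(x_1,x_2,x_3,\epsilon)$ clears all six inequalities at once is the crux; the remaining verification is bookkeeping.
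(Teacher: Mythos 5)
Your scaffold (four authors, three topics, the six profiles $\bl a^1,\dots,\bl a^6$) does match the paper's, but the paper does \emph{not} reuse the exposure-targeted quality pattern, and this is precisely the point where your plan breaks. The paper's proof of Theorem \ref{thm:action-targeted NL proof11} uses the matrix
\[
Q=\begin{pmatrix} x_1 & 0 & 0\\ x_1 & x_1 & 0\\ x_2 & 0 & x_2\\ 0 & x_3 & x_2\end{pmatrix},
\]
in which author 2 has the \emph{same} quality $x_1$ on topics 1 and 2, and author 3 the \emph{same} quality $x_2$ on topics 1 and 3. Hence four of the six deviations involve no quality change at all, and the only quality-losing move is author 4's drop from $x_2$ to $x_3$; that loss is controlled by choosing, via the IVT, all three qualities inside $\left(\frac{1}{\alpha},1\right]$, so that $\frac{x_2}{x_3}<\alpha$. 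This is the true role of $\alpha$ in the hypothesis (a uniform bound on every quality ratio), not ``$x_2\approx x_1/\alpha$''; accordingly the paper's distribution is pinned to $\alpha$, namely $D\propto\bigl(2\alpha,\ \alpha(1-\epsilon),\ 1-2\epsilon\bigr)$, and not to the quality ratios, and the governing score condition becomes $f(x_2)>2f(x_3)$ (with $c_2=f(x_1)/f(x_3)$), not $f(x_2)^2>2f(x_1)f(x_3)$.

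The deeper problem is that the step you defer as ``the crux'' is not merely unproven --- it is false for your construction, so no retuning of $(x_1,x_2,x_3,\epsilon)$ can complete it. With your (exposure-style) matrix, eliminating $D$ from the improvement inequalities of steps $\bl a^1\to\bl a^2$ and $\bl a^4\to\bl a^5$ forces $f(x_2)^2>2f(x_1)f(x_3)$, exactly as you say; and eliminating $D$ from steps $\bl a^2\to\bl a^3$, $\bl a^4\to\bl a^5$, $\bl a^6\to\bl a^1$ forces $2x_2^3f(x_2)<x_1x_3^2\bigl(f(x_1)+f(x_2)\bigr)$. Both are necessary for \emph{any} admissible $D$. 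Now take $f$ constant equal to $c$ on $[0,1-\delta]$ and linear up to $\kappa c$ at $1$, with $\delta$ small and $\kappa$ just above $2\tfrac{1+\delta}{1-\delta}$; this $f$ satisfies the theorem's hypothesis with $\alpha$ slightly above $\tfrac{1}{1-\delta}$. The first condition forces $f(x_2)>2c$ (since $f(x_1)\geq f(x_2)$ and $f(x_3)\geq c$), which pins $x_2$ at the very top of the ramp, $x_2\geq 1-O(\delta^2)$, while it also forces $f(x_3)<\tfrac{f(x_2)^2}{2f(x_1)}\leq\tfrac{f(x_2)}{2}\leq\tfrac{\kappa}{2}c$, which pins $x_3$ at the bottom of the ramp, $x_3\leq 1-\delta+O(\delta^2)$. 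Therefore $\tfrac{x_1x_3^2}{x_2^3}\leq 1-2\delta+O(\delta^2)$, whereas the second condition demands $\tfrac{x_1x_3^2}{x_2^3}>\tfrac{2f(x_2)}{f(x_1)+f(x_2)}\geq\tfrac{4}{\kappa+2}=1-\delta+O(\delta^2)$ --- a contradiction for small $\delta$. So for such $f$ your game admits no improvement cycle of the proposed form, even though the theorem applies to it. The change of quality matrix is thus essential, not cosmetic: it is what removes the coupling between the geometric positions $x_i$ and the score values $f(x_i)$ that your conditions cannot escape.
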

Notice the resemblance between the condition of Theorem \ref{thm:exposure-targeted NL proof} to that of Theorem \ref{thm:action-targeted NL proof11}. Due to space limitations, additional results on the non-learnability of other scoring mediators under $u^{Ac}$ are omitted and further elaborated in the appendix.
\section{Discussion} 
\label{sec:disc}
We introduced the study of learning dynamics in the context of information retrieval games. Our results address learning in the framework  introduced by \citeauthor{basat2017game}  \shortcite{basat2017game}, where authors are action-targeted as well as for a complementary type of information retrieval game in which the authors' aim is to maximize their exposure. In particular, our results show that a mediator that operates according to the PRP \cite{Robertson:77a} induces a game in which learning-dynamics converges; the latter is true for both exposure-targeted and action-targeted utility schemes. Moreover, we have also demonstrated that this convergence is a virtue of the PRP, and does not apply for other relevant mediators. 

One prominent question is the time required for the authors to converge, namely, finding the worst-case length of an improvement path. It turns out that there is a class of games where the length of the best-response paths is easy to analyze.

Consider the exposure target utility, and assume that $D$ is strictly decreasing, the number of authors equals the number of topics, and that the matrix $Q$ is generic, i.e., has $n \times m$ distinct values. The induced game exhibits a unique equilibrium:
topic 1 is assigned to the author with the highest quality w.r.t. topic 1. Topic 2 is assigned to the author with the highest quality on that topic, from the set of authors who were not assigned before. Clearly, the PNE is computed by following this process until every author/topic is assigned. Consequently, any best-response dynamics where the authors play in a round-robin fashion will converge after at most a quadratic number of improvement steps in the number of authors. A similar observation applies to action targeted utility under a slightly different notion of generality of $Q$. The general question of convergence rate is nevertheless left open.

Our model, as any other novel model that pretends to explain theoretical aspects of real-world systems, has its limitations. To name a few, we assume the set of authors and topics are fixed, while in reality they are often dynamic; we assume that the quality of documents is perfectly observed by the mediator, which only approximates modern search engines. Although not ultimate, we do believe that our modeling, which extends a model that is already acknowledged as valuable \cite{basat2017game}, serves as an important justification for the use of the PRP, and may be an important step for future work to circumvent the limitations presented above. We note that our learning dynamics is based on applying an author's response to the current behavior of other authors. In fact, this assumes that the only information available to the author is the quality of the documents currently published, and assumes nothing about information available to an author on other authors' (unobserved) qualities. Relaxing the assumption that published documents' qualities can be observed goes beyond the scope of our work, and may be a subject for future research.

An interesting future direction is to expand the information retrieval setting to a setup where each author's document may include several topics. This issue is treated in a preliminary manner in \cite{basat2017game} and it may be of interest to see whether our results can be extended to that context as well. It may be also interesting to study the quality of the equilibrium (as far as users' social welfare is concerned) reached under PRP. Would the best equilibrium  be obtained under better-response learning dynamics? 

\section*{Acknowledgments}\label{sec:Acknowledgments}
This project has received funding from the European Research Council (ERC) under the European Union's Horizon 2020 research and innovation programme (grant agreement n$\degree$  740435).

{
\small
\bibliographystyle{aaai}

}
{\ifnum\Includeappendix=1{ 

\appendix
\onecolumn

\section{Omitted Proofs from Section \ref{sec:PRP}}
\begin{proof}[\textbf{Proof of Proposition \ref{prop:pot}}]
The proof makes use of the following theorems by \citeauthor{MondererShapley96}, which are stated slightly different for ease of presentation.

\begin{theorem}[Theorem 2.8,\cite{MondererShapley96}]
\label{thm:potential}
Let $G$ be a two-by-two bimatrix game such that 
\[
\begin{pmatrix}
      a_{1,1},b_{1,1} & a_{1,2},b_{1,2}   \\
      a_{2,1},b_{2,1}&  a_{2,2},b_{2,2}  \\
\end{pmatrix}.
\]
Then $G$ is an exact potential game if and only if 
\begin{equation}
\label{eq:haspotential}
a_{2,1}-a_{1,1}+b_{2,2}-b_{2,1}+a_{1,2}-a_{2,2}+b_{1,1}-b_{1,2}=0
\end{equation}
\end{theorem}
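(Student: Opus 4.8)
The plan is to reduce the existence of an exact potential to the solvability of a tiny linear system and to recognize equation (\ref{eq:haspotential}) as exactly its consistency condition. Write $u_1(i,j)=a_{i,j}$ for the row player and $u_2(i,j)=b_{i,j}$ for the column player. An exact potential is a function $P$ on the four profiles $(1,1),(1,2),(2,1),(2,2)$ such that every unilateral deviation changes $P$ by the same amount it changes the deviating player's payoff. In a two-by-two game there are precisely four such deviations — the row player switching rows at each fixed column, and the column player switching columns at each fixed row — and the key observation is that these four deviations are the four edges of a single four-cycle on the profiles, namely
\[
(1,1)\longrightarrow(2,1)\longrightarrow(2,2)\longrightarrow(1,2)\longrightarrow(1,1),
\]
where players $1,2,1,2$ deviate in turn.

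First I would write the matching equation for each edge of this cycle:
\[
\begin{aligned}
P(2,1)-P(1,1)&=a_{2,1}-a_{1,1}, & P(2,2)-P(2,1)&=b_{2,2}-b_{2,1},\\
P(1,2)-P(2,2)&=a_{1,2}-a_{2,2}, & P(1,1)-P(1,2)&=b_{1,1}-b_{1,2}.
\end{aligned}
\]
For the \emph{only if} direction, summing the left-hand sides telescopes to $0$, while the right-hand sides sum to exactly the left-hand side of (\ref{eq:haspotential}); hence if any potential $P$ exists, the displayed equation must hold. For the \emph{if} direction, given (\ref{eq:haspotential}) I normalize $P(1,1)=0$ (potentials are unique only up to an additive constant) and define $P$ on the remaining three profiles using the first three equations above, traversing the edges $(1,1)\to(2,1)\to(2,2)\to(1,2)$. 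The fourth equation, closing the cycle back to $(1,1)$, then holds precisely because the cyclic sum of payoff differences vanishes, which is the hypothesis. Since the three edges used form a spanning tree of the four profiles and the fourth edge carries the one constraint I have just verified, the constructed $P$ satisfies all four matching equations and is therefore an exact potential.

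The only delicate point — the main ``obstacle'' — is sign and index bookkeeping: confirming that the four unilateral deviations really are the four edges of this particular oriented $4$-cycle (so that fixing $P$ along three edges determines it and leaves a single consistency constraint), and verifying that the telescoped sum of payoff differences around the cycle reproduces (\ref{eq:haspotential}) with the signs exactly as written. Both are routine once the cycle orientation above is fixed, and together they establish the biconditional.
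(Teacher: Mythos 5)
The paper itself offers no proof of this statement: it is imported verbatim (lightly rephrased) from Monderer and Shapley (1996) as Theorem 2.8 and used as a black box inside the proof of Proposition \ref{prop:pot}. Your argument is correct and is essentially the standard proof from that source specialized to the two-by-two case: the four unilateral deviations are exactly the four edges of the oriented $4$-cycle $(1,1)\to(2,1)\to(2,2)\to(1,2)\to(1,1)$, the telescoping sum around the cycle yields the displayed condition (giving \emph{only if}), and fixing $P(1,1)=0$ and propagating along the three spanning-tree edges leaves the fourth edge as the single consistency constraint, which is exactly the hypothesis (giving \emph{if}); your sign bookkeeping checks out.
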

Further,
\begin{theorem}[Corollary 2.9,\cite{MondererShapley96}]
\label{moshapely}
A game $G$ is an exact potential game if and only if every two-by-two subgame of $G$ is an exact potential game.
\end{theorem}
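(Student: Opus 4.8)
The plan is to reduce this global statement to the local two-by-two condition of Theorem~\ref{thm:potential} by passing through a \emph{path-integral} characterization of exact potential games, which I would establish along the way. For a path $\gamma=(\bl a^0,\dots,\bl a^N)$ in which consecutive profiles differ in a single coordinate, let $p_k$ be the unique player deviating at step $k$ and set $I(\gamma)\defeq\sum_{k=1}^N\big(u_{p_k}(\bl a^k)-u_{p_k}(\bl a^{k-1})\big)$. The framing lemma is that $G$ admits an exact potential if and only if $I(\gamma)=0$ for every \emph{closed} path ($\bl a^0=\bl a^N$). The ``only if'' half is a one-line telescoping argument; for the ``if'' half I would fix a base profile $\bl z$, define $\Phi(\bl a)\defeq I(\gamma_{\bl z\to\bl a})$ along any path, use vanishing on closed paths to see $\Phi$ is well defined, and verify the exact-potential identity directly.

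The forward direction of the corollary is immediate: restricting an exact potential of $G$ to any two coordinates with the remaining players frozen yields an exact potential of the corresponding subgame, so in particular every two-by-two subgame is an exact potential game. For the converse I would first observe that the left-hand side of Equation~(\ref{eq:haspotential}) in Theorem~\ref{thm:potential} is literally $I(c)$ for the length-four closed path $c$ that cycles through the four profiles $(1,1)\to(2,1)\to(2,2)\to(1,2)\to(1,1)$ of the subgame. Hence the hypothesis ``every two-by-two subgame is an exact potential game'' is exactly the assertion that $I$ vanishes on every four-cycle of $G$. It then remains to upgrade this to: $I$ vanishes on every closed path.

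That upgrade is the heart of the argument and the main obstacle. I would carry it out in two moves. First, a \emph{swap-invariance} step: if at two consecutive steps distinct players $i\neq j$ deviate, rerouting the path through the opposite corner changes $I(\gamma)$ by exactly the $I$-value of a four-cycle in the two coordinates $i,j$ with all other players frozen, which is $0$ by hypothesis; thus $I(\gamma)$ is invariant under swapping adjacent moves of different players. Second, a \emph{normal-form} step: using such swaps as in bubble sort, any closed path is rearranged so that all moves of each player form a contiguous block. Because the whole path is closed, each coordinate returns to its starting value, so each block is itself a closed loop in a single coordinate with the others frozen, and its $I$-value telescopes to $0$. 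Summing over the blocks gives $I=0$ for the rearranged path, and swap-invariance transfers this back to the original. Feeding this into the path-integral characterization produces the desired exact potential $\Phi$.

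The remaining delicacies are bookkeeping rather than conceptual: checking that the alternating sum in Equation~(\ref{eq:haspotential}) matches $I(c)$ with the correct orientation, verifying that an adjacent swap of different-player moves keeps $\gamma$ a legal one-coordinate-at-a-time path, and confirming that after sorting each single-player block is genuinely a closed loop in its own coordinate. None of these introduces an idea beyond the four-cycle hypothesis, so once swap-invariance and the telescoping of single-player loops are in place, the corollary follows.
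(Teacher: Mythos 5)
Your proof is correct, but note that the paper itself contains no proof of this statement: Theorem \ref{moshapely} is imported verbatim as Corollary 2.9 of \cite{MondererShapley96} and is used only as a black box in the proof of Proposition \ref{prop:pot}. What you have written is essentially a reconstruction of Monderer and Shapley's own argument: their Theorem 2.8 is exactly your path-integral characterization ($G$ admits an exact potential iff $I(\gamma)=0$ on every closed path, iff $I(\gamma)=0$ on every closed path of length four), and your identification of the left-hand side of Equation (\ref{eq:haspotential}) with $I(c)$ for the four-cycle $(1,1)\to(2,1)\to(2,2)\to(1,2)\to(1,1)$ is correct sign for sign. The one organizational difference is in the upgrade step: Monderer and Shapley reduce general closed paths to four-cycles by induction on path length, whereas you use swap-invariance plus a bubble-sort normal form in which each player's moves form a contiguous block. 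Your version checks out: an adjacent swap of moves by distinct players $i\neq j$ changes $I$ by the $I$-value of an elementary four-cycle in coordinates $i,j$ with the rest frozen (zero by the two-by-two hypothesis); the sort terminates because it only ever transposes adjacent moves of distinct players; and after sorting, coordinate $p$ changes only inside block $p$, so closedness of the whole path forces each block to be a closed loop in its own coordinate --- with all other coordinates frozen at the base profile, since earlier blocks have already returned to their initial values --- whence each block telescopes to zero. One small point your sketch leaves implicit but which is needed for well-definedness of $\Phi$: $I$ negates under path reversal, which holds step by step since reversing a single deviation flips the sign of its utility increment. With that noted, the argument is complete; your normal-form decomposition buys a more transparent global picture than the inductive reduction in the original reference, at the cost of the routine sorting bookkeeping you already flagged.
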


To show that the class of games induced by the PRP mediator and $u^{Ex}$ (equivalently, $u^{Ac}$) does not have an exact potential, it is sufficient to show a subgame of a larger game which is not an exact potential game. 

Consider a game $G$ with $n=3$ authors, $m=2$ topics, $D(1)=D(2)=0.5$ and a quality matrix
\[
Q=
\begin{pmatrix}

0.3 & 0.4 \\
0.5 & 0.7 \\
0.1 & 0.4  \\
\end{pmatrix}.
\]
We first focus on $u^{Ex}$. Observe that the bimatrix game describing the utilities of authors 1 (rows) and 2 (columns) induced by setting $a_3=2$ is
\[
\kbordermatrix{
     & \text{topic } 1 & \text{topic } 2 \\
     \text{topic } 1 & 0,0.5 & 0.5,0.5   \\
     \text{topic } 2 & 0.25,0.5 & 0,0.5 \\
     }.
\]
The above bimatrix game does not satisfy the condition given in Equality (\ref{eq:haspotential}), since
\[
0.25-0+0.5-0.5+0.5-0+0.5-0.5=0.75\neq 0;
\]
hence, Theorem \ref{thm:potential} implies that $G$ does not have an exact potential.

Next, consider a game $G'$ with the same $N,M,Q,D$, and let $u^{Ac}$ be the utility function. Let $a_3=2$, and observe that the utility bimatrix of authors 1 and 2 is
\[
\kbordermatrix{
     & \text{topic } 1 & \text{topic } 2 \\
     \text{topic } 1 & 0,0.25 & 0.15,0.35   \\
     \text{topic } 2 & 0.1,0.25 & 0,0.35 \\
     }. 
\]
Here again,
\[
0.1-0+0.35-0.25+0.15-0+0.25-0.35=0.25\neq 0;
\]
thus, Theorem \ref{thm:potential} implies that $G'$ does not have an exact potential.
\end{proof}

\section{Omitted Proofs from Subsection \ref{subsec:Exposure-Targeted-Utility}}

\begin{proof}[\textbf{Proof of Proposition \ref{prop:leqgeq}}]
Since author $p_r$ improves her utility, $u^{Ex}_{p_{r}}(\bl a^{r})<u^{Ex}_{p_r}(\bl a^{r+1})$. By definition of $R^{PRP}$, if $Q_{p_r,k} < B_{k}(\bl a^{r})$ then $u^{Ex}_{p_r}(\bl a^{r+1})=0\leq u^{Ex}_{p_{r}}(\bl a^{r})$, which results in a contradiction.
\end{proof}

\begin{proof}[\textbf{Proof of Proposition \ref{prop:boundonutility}}]
Combined with Proposition \ref{prop:leqgeq}, we know that 
\begin{eqnarray}
\label{prop:dmflssdkmf}
Q_{p_{r},k} = B_{k}(\bl a^{r}).
\end{eqnarray}
Notice that $a_{p_r}^{r} \neq k$ and $a_{p_r}^{r+1} = k$; hence, together with Equation (\ref{prop:dmflssdkmf}) we obtain
\begin{equation}
\label{prop:eq:dasasdfk}
H_{k}(\bl a^{r+1})=H_{k}(\bl a^{r})+1\stackrel{\text{Def. of }W_{k}(\gamma) }{\geq} W_{k}(\gamma)+1.
\end{equation}
Observe that Equation (\ref{prop:eq:dasasdfk}) suggests that
\[
u^{Ex}_{p_{r}}(\bl a^ {r+1})=\frac{D(k)}{H_{k}(\bl a^{r+1})} =\frac{D(k)}{H_{k}(\bl a^{r})+1}\leq \frac{D(k)}{W_{k}(\gamma)+1},
\]
which concludes the proof of this proposition.
\end{proof}

\begin{proposition}
\label{prop:boundonutilityequal}
If $c=(\bl a^1,\dots,\bl a^l=\bl a^1)$ is an improvement cycle and $k$ is a topic such that the following properties hold
\begin{enumerate}
\item \label{assumptionone} there exists an improvement step $r_1$ satisfying $H_k(\bl a^{r_1})\neq H_k(\bl a^{r_1+1})$, and
\item \label{assumptiontwo} for every improvement step $r_2$, $B_k(\bl a^{r_2})=B_k(\bl a^{r_2+1})$,
\end{enumerate}
then there exist an index $r$ such that $a_{p_{r}}^{r}=k$ and
\[
u^{Ex}_{p_{r}}(\bl a^ {r})= \frac{D(k)}{W_{k}(c)+1}.
\]
\end{proposition}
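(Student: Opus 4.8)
The plan is to track the integer-valued sequence $\bigl(H_k(\bl a^r)\bigr)_{r}$ along the cycle and to locate a single step at which a top-quality author abandons topic $k$ precisely when the count of top-quality authors on $k$ equals $W_k(c)+1$. First I would fix notation using Assumption \ref{assumptiontwo}: since the highest quality on $k$ is constant throughout $c$, write $b \defeq B_k(\bl a^r)$ (independent of $r$), and recall that $H_k(\bl a^r)$ counts exactly the authors writing on $k$ with quality $b$ in $\bl a^r$. Set $w \defeq W_k(c) = \min_r H_k(\bl a^r)$. The quantity I am chasing, $D(k)/(w+1)$, is exactly the exposure-targeted utility of a top author on $k$ at a profile in which $H_k = w+1$.

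The key structural observation is that $H_k$ changes by at most one per step. Since exactly one author $p_r$ deviates at step $r$ and the threshold $b$ is fixed by Assumption \ref{assumptiontwo}, the top set $\{j : a_j^r = k,\ Q_{j,k} = b\}$ can gain or lose only the single author $p_r$; hence $H_k(\bl a^{r+1}) - H_k(\bl a^r) \in \{-1,0,+1\}$, and a strict decrease occurs only when $p_r$ is a top-quality author leaving $k$, i.e.\ $a_{p_r}^r = k$ and $Q_{p_r,k} = b$. Consequently it suffices to exhibit a step $r$ with $H_k(\bl a^r) = w+1$ and $H_k(\bl a^{r+1}) = w$: at such a step $p_r$ leaves $k$ as a top author, so $a_{p_r}^r = k$, and her pre-deviation utility is $u^{Ex}_{p_r}(\bl a^r) = D(k)/H_k(\bl a^r) = D(k)/(w+1)$, which is the claimed value.

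To produce such a step I would argue on the cyclic sequence of values, and this is the one delicate point of the proof. By Assumption \ref{assumptionone} the sequence $H_k$ is not constant along $c$; since $\bl a^l = \bl a^1$, the minimum value $w$ is attained at some profile while some other profile attains a value strictly above $w$. Because $w$ is the minimum and consecutive values differ by at most one, any profile attaining $w$ has a predecessor attaining either $w$ or $w+1$. If no step ever dropped from $w+1$ to $w$, then every $w$-valued profile would have a $w$-valued predecessor, and walking backwards around the finite cycle would force every profile to attain $w$, contradicting non-constancy. Hence a step $r$ with $H_k(\bl a^r) = w+1$ and $H_k(\bl a^{r+1}) = w$ must exist; I expect this cyclic pigeonhole argument to be the main obstacle, with everything else being bookkeeping that relies on the fixed threshold $b$ and the single-deviator structure of improvement steps.

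Finally I would read off the conclusion at this step $r$: the drop from $w+1$ to $w$ forces $p_r$ to be a top-quality author leaving $k$, so $a_{p_r}^r = k$ and $Q_{p_r,k} = b = B_k(\bl a^r)$, whence $R^{PRP}_{p_r}(Q,k,\bl a^r) = 1/H_k(\bl a^r) = 1/(w+1)$ and therefore $u^{Ex}_{p_r}(\bl a^r) = D(k)/(w+1) = D(k)/(W_k(c)+1)$, exactly as required.
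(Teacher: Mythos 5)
Your proposal is correct and follows essentially the same route as the paper's proof: both exploit that Assumption 2 forces $\abs{H_k(\bl a^{r})-H_k(\bl a^{r+1})}\leq 1$, use Assumption 1 together with attainment of the minimum $W_k(c)$ to locate (by a discrete intermediate-value/pigeonhole argument on the cyclic sequence) a step where $H_k$ drops from exactly $W_k(c)+1$ to $W_k(c)$, and then read off that the deviator at that step is a top-quality author leaving $k$ with utility $D(k)/(W_k(c)+1)$. The only cosmetic difference is that the paper traverses forward from a strictly-decreasing step to a minimum-attaining profile, whereas you argue by contradiction with a backward walk; the substance is identical.
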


\begin{proof}[\textbf{Proof of Proposition \ref{prop:boundonutilityequal}}]
From Property \ref{assumptionone} we know that there exists an improvement step $r_1$ such that $H_k(\bl a^{r_1})\neq H_k(\bl a^{r_1+1}).$ Assume w.l.o.g. that $H_k(\bl a^{r_1})> H_k(\bl a^{r_1+1})$; hence
\begin{equation}
\label{prop4:1}
H_k(\bl a^{r_1})> H_k(\bl a^{r_1+1})\geq W_k(c)+1.
\end{equation}
By the definition of $W_k(c)$ we know that there exists an improvement step  $r_3$ such that \begin{equation}
\label{prop4:2}
H_k(\bl a^{r_3})=W_k(c).
\end{equation}
From Property \ref{assumptiontwo} we get that for every improvement step $r_2$, $B_k(\bl a^{r_2})=B_k(\bl a^{r_2+1})$, which implies that
\begin{equation}
\label{prop4:3}
\abs{H_k(\bl a^{r_2})-H_k(\bl a^{r_2+1})}\leq 1
\end{equation}
Combining Equations (\ref{prop4:1}),(\ref{prop4:2}) and (\ref{prop4:3}) with the fact that $c$ is an improvement cycle leads to the fact that there must exist an improvement step $r$ such that $\bl a^{r} \in \{\bl a^{r_1},\bl a^{r_1+1},\ldots,\bl a^{r_3-1}\}$, $H_k(\bl a^{r})=W_k(c)+1$ and $H_k(\bl a^{r+1})=W_k(c)$. This implies that $a_{p_{r}}^{r}=k$ and $Q_{p_{r},k}=B_k(\bl a^{r})$; therefore,

\begin{equation*}
u^{Ex}_{p_{r}}(\bl a^{r})=\frac{D(k)}{H_{k}(\bl a^{r})}=\frac{D(k)}{W_{k}(c)+1}.
\end{equation*}
\end{proof}

\begin{proof}[\textbf{Proof of Lemma \ref{lemma:cyclesexpuretargeted}}]

Assume w.l.o.g. that $c$ is a simple improvement cycle. First, we prove by induction on the topic index $k$ that $B_k(\bl a^r)\leq B_k(\bl a^{r+1})$ holds for every $r$, $1\leq r \leq l-1$. Later, we leverage this result to prove the statement of the lemma.

\textbf{Base:} 
Assume the assertion does not hold for $k=1$; hence, there exists $r$, $1\leq r \leq l-1$, such that 
$B_1(\bl a^r)>B_1(\bl a^{r+1})$. As a result, it holds for the improving author $p_r$ in step $r$ that $Q_{p_r,1} > B_1(\bl 
a_{-p_r}^r)$ and $H_1(\bl a^r)=1$. In words, the quality of $p_r$'s document exceeds all other qualities under $\bl a^r$ on topic 1; thus,
\begin{equation}
\begin{split}\label{Lemma1:Add0}
D(1)=u^{Ex}_{p_r}(\bl a^{r}).
\end{split}
\end{equation}
In addition, $\bl a^{r+1}$ is an improvement step for author $p_r$, and so $u^{Ex}_{p_r}(\bl a^{r})<u^{Ex}_{p_r}(\bl 
a^{r+1})$. Combined with Equation (\ref{Lemma1:Add0}), 
\begin{equation}
\begin{split}\label{Lemma1:Add1}
D(1)<u^{Ex}_{p_r}(\bl a^{r+1}).
\end{split}
\end{equation}
On the other hand, $u^{Ex}_{p_r}(\bl a^{r+1})\leq D(a_{p_r}^{r+1})$ holds; thus, Equation (\ref{Lemma1:Add1}) implies that $D(1)< D(a_{p_r}^{r+1})$, which is clearly a contradiction since $D(1)\geq \ldots \geq D(m)$.

\textbf{Step:}  Suppose the assertion holds for every $k$ where $k<K\leq m$, but does not hold for $K$. Similarly to the base case, there exists $r_1$, $1\leq r_1 \leq l-1$, such that $B_K(\bl a^{r_1})>B_K(\bl a^{r_1+1})$. As a result, $Q_{p_{r_1},K} > B_K(\bl a_{-p_{r_1}}^{r_1})$  and $H_K(\bl a^{r_1})=1$ hold, implying that 
\begin{equation}
\begin{split}\label{Lemma1:Add2}
D(K)=u^{Ex}_{p_{r_1}}(\bl a^{r_1}).
\end{split}
\end{equation}
In addition, $u^{Ex}_{p_{r_1}}(\bl a^{r_1})<u^{Ex}_{p_{r_1}}(\bl a^{r_1+1})$ holds since $p_{r_1}$ is the improving author; hence, with Equation (\ref{Lemma1:Add2}) we get
\begin{equation}
\begin{split}\label{Lemma1:Add4}
D(K)<u^{Ex}_{p_{r_1}}(\bl a^ {r_1+1}).
\end{split}
\end{equation}
Let $k_1$ denote the topic that author $p_{r_1}$ is writing on under $\bl a^{r_1+1}$, i.e., $k_1=a^{r_1+1}_{p_{r_1}}$. By definition of $u^{Ex}$ we obtain  
\begin{equation}
\begin{split}\label{Lemma1:Add5}
u^{Ex}_{p_{r_1}}(\bl a^{r_1+1})\leq D(k_1).
\end{split}
\end{equation}
Recall that $D(1)\geq \dots \geq D(m)$; hence, Equations (\ref{Lemma1:Add4}) and (\ref{Lemma1:Add5}) suggest that $D(K)<D(k_1)$ holds, and therefore we are guaranteed that $k_1<K$.

Since $k_1<K$, the induction hypothesis hints that  $B_{k_1}(\bl a^{r_1})=B_{k_1}(\bl a^{r_1+1})$; therefore, $Q_{p_{r_1},k_1}\leq B_{k_1}(\bl a^{r_1})$ holds and by Proposition \ref{prop:leqgeq} we get that $Q_{p_{r_1},k_1} = B_{k_1}(\bl a^{r_1})$. Notice that $c$ is a finite improvement path, and that the condition of Proposition \ref{prop:boundonutility} holds; thus, by invoking it for $r_1,k_1$, we get
\[
u^{Ex}_{p_{r_1}}(\bl a^ {r_1+1})\leq \frac{D(k_1)}{W_{k_1}(c)+1}.
\]

Together with Equation (\ref{Lemma1:Add4}), we conclude that
\begin{equation}
\label{Lemma1:Kk1}
D(K)<\frac{D(k_1)}{W_{k_1}(c)+1}.
\end{equation}

Next, we wish to find an improvement step such that the improving author's utility strictly bounds the right-hand-side of Equation (\ref{Lemma1:Kk1}). Since $a_{p_{r_1}}^{r_1+1}=k_1$ and $Q_{p_{r_1},k_1} = B_{k_1}(\bl a^{r_1})$ we get that $H_{k_1}(\bl a^{r_1}) \neq H_{k_1}(\bl a^{r_1+1})$. In addition, from the induction hypothesis, we get that for every improvement step $r'$, $B_{k_1}(\bl a^{r'})=B_{k_1}(\bl a^{r'+1})$; hence we can invoke Proposition \ref{prop:boundonutilityequal} which guarantees the existence of an index $r_2$ such that $a_{p_{r_2}}^{r_2}=k_1$ and
\begin{equation}
\begin{split}\label{Lemma1:3}
\frac{D(k_1)}{W_{k_1}(c)+1}=u^{Ex}_{p_{r_2}}(\bl a^{r_2}).
\end{split}
\end{equation}
Since $p_{r_2}$ is the improving author $u^{Ex}_{p_{r_2}}(\bl a^{r_2})<u^{Ex}_{p_{r_2}}(\bl a^{r_2+1})$ holds, which together with Equation (\ref{Lemma1:3}) implies
\begin{equation}
\begin{split}\label{Lemma1:Add10}
\frac{D(k_1)}{W_{k_1}(c)+1}<u^{Ex}_{p_{r_2}}(\bl a^ {r_2+1}).
\end{split}
\end{equation}
Let $a^{r_2+1}_{p_{r_2}}=k_2$. By definition of $u^{Ex}$, we know that
\begin{equation}
\begin{split}\label{Lemma1:Add8}
u^{Ex}_{p_{r_2}}(\bl a^{r_2+1})\leq D(k_2).
\end{split}
\end{equation}
Observe that $k_2<K$ must hold. To see this, assume otherwise that $k_2\geq K$, and $D(k_2)\leq D(K)$ follows. Incorporating this assumption with Equations (\ref{Lemma1:Kk1}),(\ref{Lemma1:Add10}) and (\ref{Lemma1:Add8}) we obtain
\[
D(K)<\frac{D(k_1)}{W_{k_1}(c)+1}<u^{Ex}_{p_{r_2}}(\bl a^{r_2+1})\leq D(k_2) \leq D(K),
\]
which is a contradiction; hence, $k_2<K$. The induction hypothesis hints that  $B_{k_2}(\bl a^{r_2})=B_{k_2}(\bl a^{r_2+1})$, implying $Q_{p_{r_2},k_2}\leq B_{k_2}(\bl a^{r_2})$.

Here again, the condition of Proposition \ref{prop:boundonutility} holds; thus, by invoking it for $r_2,k_2$ we conclude that
\[
u^{Ex}_{p_{r_2}}(\bl a^ {r_2+1})\leq \frac{D(k_2)}{W_{k_2}(c)+1}.
\]
Together with Equation (\ref{Lemma1:Add10}), we conclude that
\[
\frac{D(k_1)}{W_{k_1}(c)+1}<\frac{D(k_2)}{W_{k_2}(c)+1}.
\]
We have therefore bound the right-hand-side of Equation (\ref{Lemma1:Kk1}) as desired.

This process can be extended to obtain additional $k_3,k_4,\ldots,k_{K}$, such that for all $i\in[K]$, $k_i<K$ and 
\[
\frac{D(k_{1})}{W_{k_{1}}(c)+1}<\frac{D(k_{2})}{W_{k_{2}}(c)+1}<\frac{D(k_{3})}{W_{k_{3}}(c)+1}< \ldots < \frac{D(k_{K})}{W_{k_{K}}(c)+1}.
\]
While the inequality above contains $K$ elements, there are only $K-1$ topics with index lower than $K$; hence, at least two of them must be identical, and we obtain a contradiction. We deduce that $B_K(\bl a^{r})\leq B_K(\bl a^{r+1})$ for every step $r$.

This concludes the proof of the induction. Ultimately, to end the proof of this lemma, fix a topic $k$. Due to the induction above, $B_k(\bl a^r)\leq B_k(\bl a^{r+1})$ holds for every $1\leq r \leq l-1$, i.e.,
\[
B_k(\bl a^1)\leq B_k(\bl a^2)\leq\ldots\leq B_k(\bl a^{l-1})\leq B_k(\bl a^l)=B_k(\bl a^1).
\]
The left-hand-side and the right-hand-side of the inequality above are identical; thus, they must all hold in equality. This concludes the proof of this lemma.
\end{proof}

\begin{proof}[\textbf{Proof of Lemma \ref{lemma:changeinqualitsexp}}]
Let $r,k$ such that $a^{r+1}_{p_r}=k$. From Lemma \ref{lemma:cyclesexpuretargeted} we know that for every improvement step $r''$, $B_k(\bl a^{r''})=B_k(\bl a^{r''+1})$; thus, $Q_{p_r,k}\leq B_k(\bl a^r)$ which by Proposition \ref{prop:leqgeq} leads to
\begin{equation}
\label{newLemma2}
Q_{p_r,k} = B_k(\bl a^r).
\end{equation}
By definition of improvement step  $a_{p_r}^{r}\neq k$; hence together with Equation (\ref{newLemma2}) we get that $H_k(\bl a^{r})\neq H_k(\bl a^{r+1}).$  Notice that $c$ is a finite improvement path, and that the condition of Proposition \ref{prop:boundonutilityequal} holds; hence, by invoking it for $r,k$ we conclude the existence of an index $r'$ such that $a_{p_{r'}}^{r'}=k$ and
\begin{equation*}
\frac{D(k)}{W_{k}(c)+1}=u^{Ex}_{p_{r'}}(\bl a^ {r'}).
\end{equation*}
In addition, $p_{r'}$ is the improving author, and so
\begin{equation}
\begin{split}\label{Lemma2add:1}
\frac{D(k)}{W_{k}(c)+1}=u^{Ex}_{p_{r'}}(\bl a^ {r'})<u^{Ex}_{p_{r'}}(\bl a^ {r'+1}).
\end{split}
\end{equation}
Clearly, $ a^{r'+1}_{p_{r'}}=k'\neq k$. Lemma \ref{lemma:cyclesexpuretargeted} indicates that $B_{k'}(\bl a^{r'})=B_{k'}(\bl a^{r'+1})$; hence, $Q_{p_{r'},k'}\leq B_{k'}(\bl a^{r'})$. Having showed the condition of Proposition \ref{prop:boundonutility} holds, we invoke it for $r',k'$ and conclude that
\[
u^{Ex}_{p_{r'}}(\bl a^ {r'+1})\leq  \frac{D(k')}{W_{k'}(c)+1}.
\]
Combining this fact with Equation (\ref{Lemma2add:1}), we get
\[
\frac{D(k)}{W_k(c)+1}<\frac{D(k')}{W_{k'}(c)+1}.
\]
\end{proof}

\section{Omitted Proofs from Subsection \ref{subsec:Impression-Focused}}

\begin{proposition}
\label{prop:leqgeqUAC}
Let $\gamma$ be a finite improvement path, and let $a^{r+1}_{p_r}=k$ for an arbitrary improvement step $r$. It holds that $Q_{p_r,k}\geq B_k(\bl a^r)$.
\end{proposition}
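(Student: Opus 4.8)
The plan is to mirror the argument used for the exposure-targeted Proposition~\ref{prop:leqgeq}, since the structure of $R^{PRP}$ is identical regardless of the utility scheme. The feature I would exploit is that under $R^{PRP}$ an author who does not hold the highest quality on her chosen topic is ranked first with probability zero, and therefore receives zero utility under $u^{Ac}$ just as under $u^{Ex}$.

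First I would argue by contradiction, assuming $Q_{p_r,k} < B_k(\bl a^r)$. An improvement step involves a single deviation by $p_r$, so every author other than $p_r$ writes on the same topic in $\bl a^{r+1}$ as in $\bl a^r$; moreover $a^r_{p_r}\neq k$, so $p_r$ does not write on topic $k$ in step $r$. Hence the document attaining the maximum quality $B_k(\bl a^r)$ on topic $k$ in step $r$ belongs to an author distinct from $p_r$, and that author still writes on topic $k$ in step $r+1$. This ensures $Q_{p_r,k} < B_k(\bl a^r) \leq B_k(\bl a^{r+1})$, so $p_r$'s quality on topic $k$ remains strictly below the maximum after the deviation.

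Next I would invoke the definition of $R^{PRP}$: because $p_r$'s quality on topic $k$ is strictly below $B_k(\bl a^{r+1})$, we have $R^{PRP}_{p_r}(Q,k,\bl a^{r+1})=0$, and therefore $u^{Ac}_{p_r}(\bl a^{r+1})=0$. Since action-targeted utilities are non-negative, this gives $u^{Ac}_{p_r}(\bl a^{r+1})=0 \leq u^{Ac}_{p_r}(\bl a^r)$, contradicting the defining inequality $u^{Ac}_{p_r}(\bl a^r) < u^{Ac}_{p_r}(\bl a^{r+1})$ of an improvement step. This forces $Q_{p_r,k}\geq B_k(\bl a^r)$, as claimed.

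I do not anticipate a genuine obstacle: the statement is simply the action-targeted counterpart of Proposition~\ref{prop:leqgeq}, and the proof carries over almost verbatim because the extra quality factor $Q_{p_r,k}$ in $u^{Ac}$ only rescales a utility that is already zero in the contradictory case. The only point needing a line of care is the observation that the maximizer of $B_k$ in step $r$ is not $p_r$ herself, which is immediate from the single-deviation nature of improvement steps.
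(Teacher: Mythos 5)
Your proposal is correct and follows essentially the same argument as the paper: assume $Q_{p_r,k} < B_k(\bl a^r)$ for contradiction, observe that $R^{PRP}$ then ranks $p_r$ first with probability zero under $\bl a^{r+1}$, so $u^{Ac}_{p_r}(\bl a^{r+1})=0\leq u^{Ac}_{p_r}(\bl a^r)$, contradicting the improvement inequality. The only difference is that you spell out explicitly why $B_k(\bl a^{r+1})\geq B_k(\bl a^r)$ (the maximizer is a different author who stays put), a detail the paper leaves implicit.
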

\begin{proof}[\textbf{Proof of Proposition \ref{prop:leqgeqUAC}}]
Similarly to the proof of Proposition \ref{prop:leqgeq}, since author $p_r$ improves her utility, $u^{Ac}_{p_{r}}(\bl a^{r})<u^{Ac}_{p_r}(\bl a^{r+1})$. By definition of $R^{PRP}$, if $Q_{p_r,k} < B_{k}(\bl a^{r})$ then $u^{Ac}_{p_r}(\bl a^{r+1})=0\leq u^{Ac}_{p_{r}}(\bl a^{r})$, which results in a contradiction.
\end{proof}

\begin{proof}[\textbf{Proof of Proposition \ref{prop:boundonutilityUAC}}]
Combined with Proposition \ref{prop:leqgeqUAC}, we know that 
\begin{eqnarray}
\label{prop:dmflssdkmfUAC}
Q_{p_{r},k} = B_{k}(\bl a^{r}).
\end{eqnarray}
Notice that $a_{p_r}^{r} \neq k$ and $a_{p_r}^{r+1} = k$; hence, together with Equation (\ref{prop:dmflssdkmfUAC}) we obtain
\begin{equation}
\label{prop:eq:dasasdfkUAC}
H_{k}(\bl a^{r+1})=H_{k}(\bl a^{r})+1\stackrel{\text{Def. of }W_{k}(\gamma) }{\geq} W_{k}(\gamma)+1.
\end{equation}
Observe that Equation (\ref{prop:eq:dasasdfkUAC}) suggests that
\[
u^{Ac}_{p_{r}}(\bl a^ {r+1})=\frac{D(k)\cdot Q_{p_r,k}}{H_{k}(\bl a^{r+1})} =\frac{D(k)\cdot B_k(\bl a^r)}{H_{k}(\bl a^{r})+1}\leq \frac{D(k)\cdot S_k(\gamma)}{W_{k}(\gamma)+1},
\]
which concludes the proof of this proposition.
\end{proof}

\begin{proposition}
\label{prop:boundonutilityequalUAC}
If $c=(\bl a^1,\dots,\bl a^l=\bl a^1)$ is an improvement cycle and $k$ is a topic such that the following properties hold
\begin{enumerate}
\item \label{assumptiononeUAC} there exists an improvement step $r_1$ satisfying $H_k(\bl a^{r_1})\neq H_k(\bl a^{r_1+1})$, and
\item \label{assumptiontwoUAC} for every improvement step $r_2$, $B_k(\bl a^{r_2})=S_k(c)$,
\end{enumerate}
then there exist an index $r$ such that $a_{p_{r}}^{r}=k$ and
\[
u^{Ac}_{p_{r}}(\bl a^ {r})= \frac{D(k)\cdot S_k(c)}{W_{k}(c)+1}.
\]
\end{proposition}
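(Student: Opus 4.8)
The plan is to prove Proposition \ref{prop:boundonutilityequalUAC} by following closely the structure of the proof of its exposure-targeted analogue, Proposition \ref{prop:boundonutilityequal}, but replacing the quantity $\frac{D(k)}{W_k(c)+1}$ by $\frac{D(k)\cdot S_k(c)}{W_k(c)+1}$ throughout. The key observation that drives the argument is that Assumption \ref{assumptiontwoUAC} is strictly stronger than its counterpart in Proposition \ref{prop:boundonutilityequal}: it asserts not merely that $B_k$ is constant across consecutive steps, but that $B_k(\bl a^{r_2})=S_k(c)$ for \emph{every} improvement step $r_2$, i.e., the highest quality on topic $k$ is pinned at its global maximum over the whole cycle. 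This is exactly what is needed so that the utility of an author writing the top-quality document on topic $k$ equals $\frac{D(k)\cdot S_k(c)}{H_k(\bl a^{r})}$ with $S_k(c)$ (rather than a step-dependent $B_k$) in the numerator.

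First I would use Assumption \ref{assumptiononeUAC} to locate an improvement step $r_1$ with $H_k(\bl a^{r_1})\neq H_k(\bl a^{r_1+1})$, and assume without loss of generality that $H_k(\bl a^{r_1})>H_k(\bl a^{r_1+1})$, which yields $H_k(\bl a^{r_1})>H_k(\bl a^{r_1+1})\geq W_k(c)+1$. Second, by the definition of $W_k(c)$ there is a step $r_3$ with $H_k(\bl a^{r_3})=W_k(c)$. Third, since Assumption \ref{assumptiontwoUAC} forces $B_k$ to be constant (all equal to $S_k(c)$) across consecutive steps, the number of authors writing the top-quality document on topic $k$ can change by at most one per step, that is $\abs{H_k(\bl a^{r_2})-H_k(\bl a^{r_2+1})}\leq 1$ for every step $r_2$. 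This is the same discrete ``intermediate value'' mechanism used in Proposition \ref{prop:boundonutilityequal}: as the cycle walks from a state with $H_k=W_k(c)+1$ or higher down to a state with $H_k=W_k(c)$, it must pass through a step $r$ where $H_k(\bl a^r)=W_k(c)+1$ and $H_k(\bl a^{r+1})=W_k(c)$, and the author decreasing the count is precisely an improving author $p_r$ who was writing on $k$ with $a_{p_r}^r=k$ and $Q_{p_r,k}=B_k(\bl a^r)=S_k(c)$.

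Finally, for that step $r$ I would compute the utility directly: since $a_{p_r}^r=k$ and $p_r$ holds a top-quality document ($Q_{p_r,k}=B_k(\bl a^r)=S_k(c)$), the PRP mediator ranks her first with probability $\frac{1}{H_k(\bl a^r)}=\frac{1}{W_k(c)+1}$, so by the definition of $u^{Ac}$,
\[
u^{Ac}_{p_r}(\bl a^r)=\frac{D(k)\cdot Q_{p_r,k}}{H_k(\bl a^r)}=\frac{D(k)\cdot S_k(c)}{W_k(c)+1},
\]
which is exactly the claimed identity. The main obstacle, as in the exposure case, is justifying the existence of the ``crossing'' step $r$ with $H_k(\bl a^r)=W_k(c)+1$ and $H_k(\bl a^{r+1})=W_k(c)$; this rests on combining the strict inequality from Assumption \ref{assumptiononeUAC}, the attainment of the minimum $W_k(c)$, the unit-step bound, and the cyclic structure of $c$. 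The one genuinely new ingredient relative to Proposition \ref{prop:boundonutilityequal} is verifying that the stronger Assumption \ref{assumptiontwoUAC} is what legitimately promotes the factor $B_k(\bl a^r)$ in the numerator to the global constant $S_k(c)$; everything else transfers essentially verbatim.
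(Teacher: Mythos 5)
Your proposal is correct and follows essentially the same route as the paper's own proof: the w.l.o.g. reduction to $H_k(\bl a^{r_1})>H_k(\bl a^{r_1+1})\geq W_k(c)+1$, the step $r_3$ attaining $H_k(\bl a^{r_3})=W_k(c)$, the unit-step bound $\abs{H_k(\bl a^{r_2})-H_k(\bl a^{r_2+1})}\leq 1$ derived from Property \ref{assumptiontwoUAC}, the discrete intermediate-value argument locating the crossing step $r$, and the final utility computation are all exactly the paper's steps. Your observation that Property \ref{assumptiontwoUAC} pins $B_k$ at the global constant $S_k(c)$, which is what licenses the factor $S_k(c)$ in the numerator, is precisely the point where the paper's action-targeted proof diverges from its exposure-targeted analogue.
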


\begin{proof}[\textbf{Proof of Proposition \ref{prop:boundonutilityequalUAC}}]
From Property \ref{assumptiononeUAC} we know that there exists an improvement step $r_1$ such that $H_k(a^{r_1})\neq H_k(a^{r_1+1}).$ Assume w.l.o.g. that $H_k(\bl a^{r_1})> H_k(\bl a^{r_1+1})$; hence
\begin{equation}
\label{prop4:1UAC}
H_k(\bl a^{r_1})> H_k(\bl a^{r_1+1})\geq W_k(c)+1.
\end{equation}
By the definition of $W_k(c)$ we know that there exists an improvement step  $r_3$ such that \begin{equation}
\label{prop4:2UAC}
H_k(\bl a^{r_3})=W_k(c).
\end{equation}
From Property \ref{assumptiontwoUAC} we get that for every improvement step $r_2$, $B_k(\bl a^{r_2})=S_k(c)$, which implies that
\begin{equation}
\label{prop4:3UAC}
\abs{H_k(\bl a^{r_2})-H_k(\bl a^{r_2+1})}\leq 1
\end{equation}
Combining Equations (\ref{prop4:1UAC}),(\ref{prop4:2UAC}) and (\ref{prop4:3UAC}) with the fact that $c$ is an improvement cycle leads to the fact that there must exist an improvement step $r$ such that $\bl a^{r} \in \{\bl a^{r_1},\bl a^{r_1+1},\ldots,\bl a^{r_3-1}\}$, $H_k(\bl a^{r})=W_k(c)+1$, and $H_k(\bl a^{r+1})=W_k(c)$. This implies that $a_{p_{r}}^{r}=k$ and $Q_{p_{r},k}=B_k(\bl a^{r})=S_k(c)$; therefore,

\begin{equation*}
u^{Ac}_{p_{r}}(\bl a^{r})=\frac{D(k)\cdot Q_{p_{r},k}}{H_{k}(\bl a^{r})}=\frac{D(k)\cdot S_k(c)}{W_{k}(c)+1}.
\end{equation*}
\end{proof}

\subsection{Proof of Theorem \ref{thm:action-targeted}}
To ease presentation of the proof, throughout this subsection we re-index the topics according to the following order
\[
D(1)\cdot S_1(c) \geq D(2)\cdot  S_2(c)\geq \ldots \geq D(m)\cdot  S_m(c).
\]
The proof of Theorem \ref{thm:action-targeted} relies on several supporting lemmas, which are proven first.
\begin{lemma}
\label{lemma:cyclesactiontargeted}
If $c=(\bl a^1,\ldots,\bl a^l=\bl a^1)$ is an improvement cycle, then for every improvement step $r$ and every topic $k$ it holds that $B_k(\bl a^r)= B_k(\bl a^{r+1})$.
\end{lemma}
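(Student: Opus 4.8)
The plan is to prove Lemma \ref{lemma:cyclesactiontargeted} by mirroring the structure of the proof of its exposure-targeted analogue, Lemma \ref{lemma:cyclesexpuretargeted}, but replacing every quantity $D(k)$ with the product $D(k)\cdot S_k(c)$ and using the action-targeted machinery (Propositions \ref{prop:leqgeqUAC}, \ref{prop:boundonutilityUAC}, \ref{prop:boundonutilityequalUAC}) in place of the exposure-targeted ones. The crucial preparatory move, already hinted at in the subsection preamble, is the re-indexing of topics so that $D(1)\cdot S_1(c)\geq D(2)\cdot S_2(c)\geq\cdots\geq D(m)\cdot S_m(c)$. This re-indexing is what makes the induction work: in the exposure case the sorted order $D(1)\geq\cdots\geq D(m)$ was available by assumption, whereas here the natural potential driving an author's best topic choice is $D(k)S_k(c)$, so we sort by that instead.

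First I would assume w.l.o.g. that $c$ is a simple improvement cycle and prove by induction on the (re-indexed) topic index $k$ that $B_k(\bl a^r)\leq B_k(\bl a^{r+1})$ for every improvement step $r$. For the base case $k=1$, suppose toward contradiction that $B_1(\bl a^{r})>B_1(\bl a^{r+1})$ for some $r$; then the improving author $p_r$ is the unique top-quality author on topic $1$ at step $r$ with $Q_{p_r,1}=B_1(\bl a^r)=S_1(c)$, so her utility is $u^{Ac}_{p_r}(\bl a^r)=D(1)\cdot S_1(c)$. After deviating to a topic $k_1$, her utility is at most $D(k_1)\cdot Q_{p_r,k_1}\leq D(k_1)\cdot S_{k_1}(c)$, which by the re-indexing is $\leq D(1)\cdot S_1(c)$, contradicting that this was an improvement step. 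For the inductive step, assuming $B_k(\bl a^r)=B_k(\bl a^{r+1})$ for all $k<K$, a violation $B_K(\bl a^{r_1})>B_K(\bl a^{r_1+1})$ forces $u^{Ac}_{p_{r_1}}(\bl a^{r_1})=D(K)\cdot S_K(c)$, and the improvement condition gives $D(K)\cdot S_K(c)<u^{Ac}_{p_{r_1}}(\bl a^{r_1+1})\leq D(k_1)\cdot S_{k_1}(c)$, forcing $k_1<K$. I would then invoke Proposition \ref{prop:boundonutilityUAC} for $(r_1,k_1)$ (its hypothesis $Q_{p_{r_1},k_1}\leq B_{k_1}(\bl a^{r_1})$ holds by the induction hypothesis and Proposition \ref{prop:leqgeqUAC}) to bound $u^{Ac}_{p_{r_1}}(\bl a^{r_1+1})\leq \frac{D(k_1)S_{k_1}(c)}{W_{k_1}(c)+1}$, and use Proposition \ref{prop:boundonutilityequalUAC} to locate a further improving author, chaining the strict inequalities $\frac{D(k_1)S_{k_1}(c)}{W_{k_1}(c)+1}<\frac{D(k_2)S_{k_2}(c)}{W_{k_2}(c)+1}<\cdots$ with all $k_i<K$. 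Since there are only $K-1$ topics below $K$ but $K$ strict inequalities, a pigeonhole contradiction follows.

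The subtle point—and the reason this direction needs \emph{more} care than the exposure case—is verifying the hypothesis of Proposition \ref{prop:boundonutilityequalUAC}. That proposition requires not merely $B_k(\bl a^{r_2})=B_k(\bl a^{r_2+1})$ (constancy across steps) but the stronger $B_k(\bl a^{r_2})=S_k(c)$, i.e.\ that the running top quality on topic $k$ actually equals its global maximum $S_k(c)$ at \emph{every} step. In the exposure proof, Lemma's induction delivered exactly the constancy needed; here I must additionally argue that for a topic $k$ with $k<K$, the induction hypothesis $B_k(\bl a^{r})=B_k(\bl a^{r+1})$ for all $r$ combined with the cyclic identity $B_k(\bl a^1)=B_k(\bl a^l)$ forces $B_k(\bl a^r)$ to be a constant in $r$, and that constant is by definition $S_k(c)=\max_r B_k(\bl a^r)$. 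So constancy already implies $B_k\equiv S_k(c)$ on topic $k$, which discharges Property \ref{assumptiontwoUAC}. I expect this equivalence between ``constant'' and ``equal to $S_k$'' to be the main obstacle to get right, since it is what licenses replacing $D(k)$ by $D(k)S_k(c)$ uniformly throughout.

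Finally, having established $B_k(\bl a^r)\leq B_k(\bl a^{r+1})$ for every $r$ and every $k$, I close the argument exactly as in the exposure case: telescoping around the cycle gives
\[
B_k(\bl a^1)\leq B_k(\bl a^2)\leq\cdots\leq B_k(\bl a^{l-1})\leq B_k(\bl a^l)=B_k(\bl a^1),
\]
so all inequalities are equalities and $B_k(\bl a^r)=B_k(\bl a^{r+1})$ for every step $r$ and every topic $k$, as claimed.
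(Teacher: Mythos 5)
Your high-level plan---re-indexing the topics by $D(k)\cdot S_k(c)$, inducting on the topic index to establish $B_k(\bl a^r)\leq B_k(\bl a^{r+1})$, chaining Propositions \ref{prop:leqgeqUAC}, \ref{prop:boundonutilityUAC} and \ref{prop:boundonutilityequalUAC} into a pigeonhole contradiction among topics of index below $K$, and then telescoping around the cycle---is exactly the paper's proof, and the subtlety you flag (that constancy of $B_k$ across the cycle, delivered by the induction hypothesis, already implies $B_k(\bl a^r)=S_k(c)$ for all $r$, which is what Property \ref{assumptiontwoUAC} of Proposition \ref{prop:boundonutilityequalUAC} demands) is real and is resolved correctly.

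However, there is a genuine gap at the opening move of both your base case and your inductive step. From the mere existence of a strict decrease, $B_K(\bl a^{r_1})>B_K(\bl a^{r_1+1})$, you conclude that the departing author's quality equals the cycle-wide maximum, $Q_{p_{r_1},K}=B_K(\bl a^{r_1})=S_K(c)$, so that her pre-deviation utility is $D(K)\cdot S_K(c)$. That identification is unjustified: for the topic $K$ currently under consideration the induction hypothesis is unavailable, and $B_K$ may fluctuate along the cycle---an author of quality $S_K(c)$ may sit on topic $K$ only during part of the cycle, so a strict decrease can occur from an intermediate value $B_K(\bl a^{r_1})<S_K(c)$. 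In that case the improving author's utility at step $r_1$ is only $D(K)\cdot B_K(\bl a^{r_1})$, and the resulting inequality $D(K)\cdot B_K(\bl a^{r_1})<u^{Ac}_{p_{r_1}}(\bl a^{r_1+1})\leq D(k_1)\cdot S_{k_1}(c)$ neither contradicts the ordering in the base case nor forces $k_1<K$ in the inductive step, so the whole chain collapses. The paper closes this hole with an extra cycle argument that your write-up is missing: by the definition of $S_K(c)$ there is a step $r'$ with $B_K(\bl a^{r'})=S_K(c)$, while immediately after any strict decrease one has $B_K<S_K(c)$; walking around the cycle from $r'$, take the \emph{first} step at which $B_K$ falls strictly below $S_K(c)$---at the preceding profile $B_K$ equals $S_K(c)$ exactly, so this particular step is a strict decrease from $S_K(c)$. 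It is at this carefully chosen step (not at an arbitrary decrease step) that your computation $u^{Ac}_{p_{r_1}}(\bl a^{r_1})=D(K)\cdot S_K(c)$ is valid; with that substitution the rest of your argument goes through verbatim.
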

\begin{proof}[\textbf{Proof of Lemma \ref{lemma:cyclesactiontargeted}}]
Assume w.l.o.g. that $c$ is a simple improvement cycle. First, we prove by induction on the topic index $k$ that $B_k(\bl a^r)\leq B_k(\bl a^{r+1})$ holds for every $r$, $1\leq r \leq l-1$. Later, we leverage this result to prove the statement of the lemma.

\textbf{Base:} 
By the definition of $S_1(c)$ we know that there exists an improvement step $r'$, $1\leq r'\leq l-1$ such that $B_1(\bl a^{r'})=S_1(c)$. Now Assume that the assertion does not hold for $k=1$; hence, there exists $r''$, $1\leq r'' \leq l-1$, such that $B_1(\bl a^{r''})>B_1(\bl a^{r''+1})$. Therefore, combining the above with the fact that $c$ is an improvement cycle implies that there exists an improvement step $r$ such that $\bl a^r\in \{a^{r'},a^{r'+1},\ldots,a^{r''}\}$ and  $S_1(c)=B_1(\bl a^{r})>B_1(\bl a^{r+1})$. 
As a result, it holds for the improving author $p_r$ in step $r$ that $Q_{p_r,1} =S_1(c)> B_1(\bl a_{-p_r}^r)$ and $H_1(\bl a^r)=1$. In words, the quality of $p_r$'s document exceeds all other qualities under $\bl a^r$ on topic 1; thus,
\begin{equation}
\begin{split}\label{Lemma2:New0}
D(1)\cdot S_1(c)=u^{Ac}_{p_r}(\bl a^{r}).
\end{split}
\end{equation}
In addition, $\bl a^{r+1}$ is an improvement step for author $p_r$, and so $u^{Ac}_{p_r}(\bl a^{r})<u^{Ac}_{p_r}(\bl 
a^{r+1})$. Combined with Equation (\ref{Lemma2:New0}), 
\begin{equation}
\begin{split}\label{Lemma2:New1}
D(1)\cdot S_1(c)<u^{Ac}_{p_r}(\bl a^{r+1}).
\end{split}
\end{equation}
On the other hand, $u^{Ac}_{p_r}(\bl a^{r+1})\leq D(a_{p_r}^{r+1})\cdot S_{a_{p_r}^{r+1}}(c)$ holds; thus, Equation (\ref{Lemma2:New1}) 
implies that $D(1)\cdot S_1(c)< D(a_{p_r}^{r+1})\cdot S_{a_{p_r}^{r+1}}(c)$, which is clearly a contradiction since $D(1)\cdot S_1(c) \geq D(2)\cdot  S_2(c)\geq \ldots \geq D(m)\cdot  S_m(c)$.

\textbf{Step:}  Suppose the assertion holds for every $k$ where $k<K\leq m$, but does not hold for $K$. Similarly to the base case, by the definition of $S_K(c)$, there exists $r'$, $1\leq r'\leq l-1$ such that $B_K(\bl a^{r'})=S_K(c)$. Now since the assertion does not hold for $K$, there exists $r''$, $1\leq r'' \leq l-1$, such that $B_K(\bl a^{r''})>B_K(\bl a^{r''+1})$. Therefore, combining the above with the fact that $c$ is an improvement cycle implies that there exists $r_1$ such that $\bl a^{r_1}\in \{a^{r'},a^{r'+1},\ldots,a^{r''}\}$ and  $S_K(c)=B_K(\bl a^{r_1})>B_K(\bl a^{r_1+1})$.

As a result, it holds for the improving author $p_{r_1}$ in step $r_1$ that $Q_{p_{r_1},K} =S_K(c)> B_K(\bl a_{-p_{r_1}}^{r_1})$ and $H_K(\bl a^{r_1})=1$. In words, the quality of $p_{r_1}$'s document exceeds all other qualities under $\bl a^{r_1}$ on topic K; thus,
\begin{equation}
\begin{split}\label{Lemma2:New3}
D(K)\cdot S_K(c)=u^{Ac}_{p_{r_1}}(\bl a^{r_1}).
\end{split}
\end{equation}

In addition, $u^{Ac}_{p_{r_1}}(\bl a^{r_1})<u^{Ac}_{p_{r_1}}(\bl a^{r_1+1})$ holds since $p_{r_1}$ is the improving author; hence, with Equation (\ref{Lemma2:New3}) we get
\begin{equation}
\begin{split}\label{Lemma2:New4}
D(K)\cdot S_K(c)<u^{Ac}_{p_{r_1}}(\bl a^{r_1+1}).
\end{split}
\end{equation}
Let $k_1$ denote the topic that author $p_{r_1}$ is writing on under $\bl a^{r_1+1}$, i.e $k_1=a^{r_1+1}_{p_{r_1}}$. By definition of $u^{Ac}$ we obtain  
\begin{equation}
\begin{split}\label{Lemma2:New5}
u^{Ac}_{p_{r_1}}(\bl a^{r_1+1})\leq D(k_1)\cdot Q_{p_{r_1},k_1}\leq D(k_1)\cdot S_{k_1}(c) .
\end{split}
\end{equation}
Recall that $D(1)\cdot S_1(c) \geq D(2)\cdot  S_2(c)\geq \ldots \geq D(m)\cdot  S_m(c)$; hence, Equations (\ref{Lemma2:New4}) and (\ref{Lemma2:New5}) suggest that $D(K)\cdot S_K(c)<D(k_1)\cdot S_{k_1}(c)$ holds, and therefore we are guaranteed that $k_1<K$.

Since $k_1<K$, the induction hypothesis hints that  $B_{k_1}(\bl a^{r_1})=B_{k_1}(\bl a^{r_1+1})$; therefore, $Q_{p_{r_1},k_1}\leq B_{k_1}(\bl a^{r_1})$ holds and by Proposition \ref{prop:leqgeqUAC} we get that $Q_{p_{r_1},k_1} = B_{k_1}(\bl a^{r_1})$.Notice that $c$ is a finite improvement path, and that the condition of Proposition \ref{prop:boundonutilityUAC} holds; thus, by invoking it for $r_1,k_1$, we get
\[
u^{Ac}_{p_{r_1}}(\bl a^ {r_1+1})\leq \frac{D(k_1)\cdot S_{k_1}(c)}{W_{k_1}(c)+1}.
\]

Together with Equation (\ref{Lemma2:New4}), we conclude that
\begin{equation}
\label{Lemma2:New6}
D(K)\cdot S_K(c)<\frac{D(k_1)\cdot S_{k_1}(c)}{W_{k_1}(c)+1}.
\end{equation}

Next, we wish to find an improvement step such that the improving author's utility strictly bounds the right-hand-side of Equation 
(\ref{Lemma2:New6}). Since $a_{p_{r_1}}^{r_1+1}=k_1$  and $Q_{p_{r_1},k_1}=B_{k_1}(\bl a^{r_1})$ we get that $H_{k_1}(\bl a^{r_1}) \neq H_{k_1}(\bl a^{r_1+1})$. In addition from the induction hypothesis, we get that for every improvement step $r'$, $B_{k_1}(\bl a^{r'})=S_{k_1}(c)$; hence, we can invoke Proposition \ref{prop:boundonutilityequalUAC}  which guarantees the existence of an index $r_2$ such that $a_{p_{r_2}}^{r_2}=k_1$ 
 and
\begin{equation}
\begin{split}\label{Lemma2:New7}
\frac{D(k_1)\cdot S_{k_1}(c)}{W_{k_1}(c)+1}=u^{Ac}_{p_{r_2}}(\bl a^{r_2}).
\end{split}
\end{equation}
Since $p_{r_2}$ is the improving author $u^{Ac}_{p_{r_2}}(\bl a^{r_2})<u^{Ac}_{p_{r_2}}(\bl a^{r_2+1})$ holds, which together with 
Equation (\ref{Lemma2:New7}) implies
\begin{equation}
\begin{split}\label{Lemma2:New8}
\frac{D(k_1)\cdot S_{k_1}(c)}{W_{k_1}(c)+1}<u^{Ac}_{p_{r_2}}(\bl a^ {r_2+1}).
\end{split}
\end{equation}
Let $a^{r_2+1}_{p_{r_2}}=k_2$. By definition of $u^{Ac}$, we know that
\begin{equation}
\begin{split}\label{Lemma2:New9}
u^{Ac}_{p_{r_2}}(\bl a^{r_2+1})\leq D(k_2)\cdot S_{k_2}(c).
\end{split}
\end{equation}

Observe that $k_2<K$ must hold. To see this, assume otherwise that $k_2\geq K$, and $D(k_2)\cdot S_{k_2}(c)\leq D(K)\cdot S_{K}(c)$ follows. Incorporating this assumption with Equations (\ref{Lemma2:New6}),(\ref{Lemma2:New8}) and (\ref{Lemma2:New9}) we obtain
\[
D(K)\cdot S_{K}(c)<\frac{D(k_1)\cdot S_{k_1}(c)}{W_{k_1}(c)+1}<u^{Ac}_{p_{r_2}}(\bl a^{r_2+1})\leq D(k_2)\cdot S_{k_2}(c) \leq D(K)\cdot S_{K}(c),
\]
which is a contradiction; hence, $k_2<K$. The induction hypothesis hints that  $B_{k_2}(\bl a^{r_2})=B_{k_2}(\bl a^{r_2+1})$, implying $Q_{p_{r_2},k_2}\leq B_{k_2}(\bl a^{r_2})$.

Here again, the condition of Proposition \ref{prop:boundonutilityUAC} holds; thus, by invoking it for $r_2,k_2$ we conclude that
\[
u^{Ac}_{p_{r_2}}(\bl a^ {r_2+1})\leq \frac{D(k_2)\cdot S_{k_2}(c)}{W_{k_2}(c)+1}.
\]
Together with Equation (\ref{Lemma2:New8}), we conclude that
\[
\frac{D(k_1)\cdot S_{k_1}(c)}{W_{k_1}(c)+1}<\frac{D(k_2)\cdot S_{k_2}(c)}{W_{k_2}(c)+1}.
\]
We have therefore bound the right-hand-side of Equation (\ref{Lemma2:New6}) as desired.

This process can be extended to obtain additional $k_3,k_4,\ldots,k_{K}$, such that for all $i\in[K]$, $k_i<K$ and 
\[
\frac{D(k_{1})\cdot S_{k_1}(c)}{W_{k_{1}}(c)+1}<\frac{D(k_{2})\cdot S_{k_2}(c)}{W_{k_{2}}(c)+1}<\frac{D(k_{3})\cdot S_{k_3}(c)}{W_{k_{3}}(c)+1}< \ldots < \frac{D(k_{K})\cdot S_{k_K}(c)}{W_{k_{K}}(c)+1}.
\]
While the inequality above contains $K$ elements, there are only $K-1$ topics with index lower than $K$; hence, at least two of them must be identical, and we obtain a contradiction. We deduce that $B_K(\bl a^{r})\leq B_K(\bl a^{r+1})$ for every step $r$.

This concludes the proof of the induction. Ultimately, to end the proof of this lemma, fix a topic $k$. Due to the induction above, $B_k(\bl a^r)\leq B_k(\bl a^{r+1})$ holds for every $1\leq r \leq l-1$, i.e.,
\[
B_k(\bl a^1)\leq B_k(\bl a^2)\leq\ldots\leq B_k(\bl a^{l-1})\leq B_k(\bl a^l)=B_k(\bl a^1).
\]
The left-hand-side and the right-hand-side of the inequality above are identical; thus, they must all hold in equality. This concludes the proof of this lemma.
\end{proof}

In addition,
\begin{lemma}
\label{lemma:changeinqualitsexpUAC}
If $c=(\bl a^1,\ldots,\bl a^l=\bl a^1)$ is an improvement cycle, then for every improvement step $r$ and topic $k$ such that $a_{p_r}^{r+1}=k$ there exist $(r',k')$ such that $a_{p_{r'}}^{r'+1}=k'$ and
\[
\frac{D(k)\cdot S_{k}(c)}{W_k(c)+1}<\frac{D(k')\cdot S_{k'}(c)}{W_{k'}(c)+1}.
\]
\end{lemma}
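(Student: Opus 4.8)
The plan is to mirror the structure of Lemma \ref{lemma:changeinqualitsexp} (the exposure-targeted analogue), replacing the role of $W_k$-based bounds with the refined $D(k)\cdot S_k(c)$-weighted bounds, and invoking the action-targeted supporting results (Proposition \ref{prop:leqgeqUAC}, Proposition \ref{prop:boundonutilityUAC}, Proposition \ref{prop:boundonutilityequalUAC}, and Lemma \ref{lemma:cyclesactiontargeted}) in place of their exposure-targeted counterparts. First I would fix an improvement step $r$ and topic $k$ with $a^{r+1}_{p_r}=k$. Since this is an improvement step, $a^r_{p_r}\neq k$, and Lemma \ref{lemma:cyclesactiontargeted} gives $B_k(\bl a^r)=B_k(\bl a^{r+1})$; combined with Proposition \ref{prop:leqgeqUAC} (which forces $Q_{p_r,k}\geq B_k(\bl a^r)$) this pins down $Q_{p_r,k}=B_k(\bl a^r)$, so that the newly arriving author matches the existing maximum quality and hence $H_k(\bl a^r)\neq H_k(\bl a^{r+1})$.

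Next I would verify the two hypotheses of Proposition \ref{prop:boundonutilityequalUAC} for this topic $k$. The first hypothesis (existence of a step where $H_k$ changes) is exactly the inequality $H_k(\bl a^r)\neq H_k(\bl a^{r+1})$ just established. For the second hypothesis, I would argue that Lemma \ref{lemma:cyclesactiontargeted} makes $B_k$ constant along the whole cycle, and since $S_k(c)$ is by definition the maximum of $B_k$ over the cycle, this constant value must equal $S_k(c)$; thus $B_k(\bl a^{r_2})=S_k(c)$ holds at every step $r_2$, as required. Invoking Proposition \ref{prop:boundonutilityequalUAC} then produces an index $r'$ with $a^{r'}_{p_{r'}}=k$ and $u^{Ac}_{p_{r'}}(\bl a^{r'})=\frac{D(k)\cdot S_k(c)}{W_k(c)+1}$. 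Because $p_{r'}$ is an improving author, $u^{Ac}_{p_{r'}}(\bl a^{r'})<u^{Ac}_{p_{r'}}(\bl a^{r'+1})$, so writing $k'=a^{r'+1}_{p_{r'}}\neq k$ we obtain
\[
\frac{D(k)\cdot S_k(c)}{W_k(c)+1}<u^{Ac}_{p_{r'}}(\bl a^{r'+1}).
\]

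To close the argument I would bound the right-hand side from above. Applying Lemma \ref{lemma:cyclesactiontargeted} to topic $k'$ gives $B_{k'}(\bl a^{r'})=B_{k'}(\bl a^{r'+1})$, so $Q_{p_{r'},k'}\leq B_{k'}(\bl a^{r'})$, which is precisely the precondition of Proposition \ref{prop:boundonutilityUAC}. Invoking that proposition for $(r',k')$ yields $u^{Ac}_{p_{r'}}(\bl a^{r'+1})\leq\frac{D(k')\cdot S_{k'}(c)}{W_{k'}(c)+1}$, and chaining this with the strict inequality above gives the desired conclusion $\frac{D(k)\cdot S_k(c)}{W_k(c)+1}<\frac{D(k')\cdot S_{k'}(c)}{W_{k'}(c)+1}$, with $(r',k')$ being the witnessing pair. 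The main obstacle, and the one point demanding genuine care rather than mechanical translation, is the verification that the second hypothesis of Proposition \ref{prop:boundonutilityequalUAC} is $B_k(\bl a^{r_2})=S_k(c)$ (not merely constant $B_k$): in the exposure case the corresponding proposition only required $B_k$ constant, whereas here the stronger identification with $S_k(c)$ is what makes the $S_k$-weighted bound from Proposition \ref{prop:boundonutilityUAC} tight enough to be useful. I would therefore be explicit that constancy of $B_k$ along the full cycle, together with the definition of $S_k(c)$ as a maximum attained somewhere on the cycle, forces that constant to be $S_k(c)$.
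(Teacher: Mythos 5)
Your proposal is correct and follows essentially the same route as the paper's own proof: pin down $Q_{p_r,k}=B_k(\bl a^r)=S_k(c)$ via Lemma \ref{lemma:cyclesactiontargeted} and Proposition \ref{prop:leqgeqUAC}, deduce $H_k(\bl a^r)\neq H_k(\bl a^{r+1})$, invoke Proposition \ref{prop:boundonutilityequalUAC} to produce the witness step $r'$, and close with Proposition \ref{prop:boundonutilityUAC} applied to $(r',k')$. Your explicit justification that constancy of $B_k$ along the cycle forces $B_k(\bl a^{r_2})=S_k(c)$ (via the definition of $S_k(c)$ as a maximum attained on the cycle) is exactly the step the paper asserts implicitly when citing Lemma \ref{lemma:cyclesactiontargeted}, so you have filled in the same argument, just more carefully.
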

\begin{proof}[\textbf{Proof of Lemma \ref{lemma:changeinqualitsexpUAC}}]
Let $r,k$ such that $a^{r+1}_{p_r}=k$. From Lemma \ref{lemma:cyclesactiontargeted} we know that for every improvement step $r''$, $B_k(\bl a^{r''})=S_k(c)$; thus, $Q_{p_r,k}\leq B_k(\bl a^r)$ which by Proposition \ref{prop:leqgeqUAC} leads to
\begin{equation}
\label{newLemma4}
Q_{p_r,k} = B_k(\bl a^r)=S_k(c).
\end{equation}
By definition of improvement step  $a_{p_r}^{r}\neq k$; hence together with Equation (\ref{newLemma4}) we get that $H_k(\bl a^{r})\neq H_k(\bl a^{r+1}).$  Notice that $c$ is a finite improvement path, and that the condition of Proposition \ref{prop:boundonutilityequalUAC} holds; hence, by invoking it for $r,k$ we conclude the existence of an index $r'$ such that $a_{p_{r'}}^{r'}=k$ and
\begin{equation*}
\frac{D(k)\cdot S_{k}(c)}{W_{k}(c)+1}=u^{Ac}_{p_{r'}}(\bl a^ {r'}).
\end{equation*}
In addition, $p_{r'}$ is the improving author, and so
\begin{equation}
\begin{split}\label{Lemma4add}
\frac{D(k)\cdot S_{k}(c)}{W_{k}(c)+1}=u^{Ac}_{p_{r'}}(\bl a^ {r'})<u^{Ac}_{p_{r'}}(\bl a^ {r'+1}).
\end{split}
\end{equation}
Clearly, $ a^{r'+1}_{p_{r'}}=k'\neq k$. Lemma \ref{lemma:cyclesactiontargeted} indicates that $B_{k'}(\bl a^{r'})=B_{k'}(\bl a^{r'+1})$; hence, $Q_{p_{r'},k'}\leq B_{k'}(\bl a^{r'})$. Having showed the condition of Proposition \ref{prop:boundonutilityUAC} holds, we invoke it for $r',k'$ and conclude that
\[
u^{Ac}_{p_{r'}}(\bl a^ {r'+1})\leq  \frac{D(k')\cdot S_{k'}(c)}{W_{k'}(c)+1}.
\]
Combining this fact with Equation (\ref{Lemma4add}), we get
\[
\frac{D(k)\cdot S_{k}(c)}{W_k(c)+1}<\frac{D(k')\cdot S_{k'}(c)}{W_{k'}(c)+1}.
\]
\end{proof}

We are now ready to prove Theorem \ref{thm:action-targeted}.
\begin{proof}[\textbf{Proof of Theorem \ref{thm:action-targeted}}]
Similarly to Theorem \ref{thm:exposure-targeted}, to show that every better-response dynamics converges it suffices to show that every improvement path is finite. Moreover, every improvement path cannot contain more than a finite number of different strategy profiles, as $m,n$ are finite; therefore, if $\gamma$ is infinite it must contain an improvement cycle. We are left to prove that $\gamma$ cannot contain an improvement cycle.

Assume by contradiction that $\gamma$ contains an improvement cycle $c=(\bl a^1,\bl a^2,\ldots,\bl a^l=\bl a^1)$. Let $r_1$ be an arbitrary improvement step and denote by $k_1$ the topic such that $a_{p_{r_1}}^{r_1+1}=k_1$.
From Lemma \ref{lemma:changeinqualitsexpUAC} we know that there exist  $(r_2,k_2)$ such that $a^{r_2+1}_{p_{r_2}}=k_2$ and
\[
\frac{D(k_1)\cdot S_{k_1}(c)}{W_{k_1}(c)+1}<\frac{D(k_2)\cdot S_{k_2}(c)}{W_{k_2}(c)+1}.
\]
Since $a^{r_2+1}_{p_{r_2}}=k_2$, we can now use Lemma \ref{lemma:changeinqualitsexpUAC} again in order to find $(r_3,k_3)$ such that $a^{r_3+1}_{p_{r_3}}=k_3$ and
\[
\frac{D(k_2)\cdot S_{k_2}(c)}{W_{k_2}(c)+1}<\frac{D(k_3)\cdot S_{k_3}(c)}{W_{k_3}(c)+1}.
\]
This process can be extended to achieve additional $k_4,k_5,\ldots, k_{m+1}$ such that
\[
\frac{D(k_1)\cdot S_{k_1}(c)}{W_{k_1}(c)+1}<\frac{D(k_2)\cdot S_{k_2}(c)}{W_{k_2}(c)+1}
<\ldots<\frac{D(k_{m+1})\cdot S_{k_{m+1}}(c)}{W_{k_{m+1}}(c)+1}.
\]
Since there are only $m$ topics and that the inequality above contains $m+1$ elements, there are at least two elements which are identical; thus we obtain a contradiction. We deduce that an improvement cycle can not exist. 

The above suggests that every better-response dynamics must converge.
\end{proof}

\section{Omitted Proofs from Section \ref{sec:nonlearnability}}

\begin{proof}[Proof of Proposition \ref{prop:randislearnable}]
To prove that $R^{RAND}$ is $u^{Ex}$-learnable, one must show that 
every game induced by $R^{RAND}$ and the utility function $u^{Ex}$ has the FIP property. 

Let $G=\langle N,M,D,Q,R^{RAND}, u^{Ex}\rangle$ be an arbitrary game. We now reduce $G$ to a game $G'$ with $R^{PRP}$ as the mediator, where the utility of any author under any strategy profile in $G$ equals to her utility under the same strategy profile in $G'$. If this holds, then every improvement step in $G$ is also an improvement step in $G'$; hence, if there are no improvement cycles in $G'$, then there can be no improvement cycles under $G$ either. Let $G'=\langle N,M,D,Q',R^{PRP}, u^{Ex}\rangle$ for $Q'$ such that
\[
\forall j\in N,k\in M: Q'_{j,k}=1.
\]
Since both $G,G'$ consists of the exposure-targeted utility function, we omit the super-script $Ex$ and use the super-script $G$ to specify the utility of author $j$ under the strategy profile $\bl a$ in $G$, i.e., $u^G_j(\bl a)$, and equivalently for $u^{G'}_j(\bl a)$ for $G'$.

By definition of exposure-targeted utility and $R^{PRP}$, for every valid $j$ and $\bl a$ it holds that
\begin{align*}
u^{G'}_j(\bl a)&= \sum_{k=1}^m \ind_{a_j=k} \cdot D(k)\cdot R^{PRP}_j(Q',k,\bl a)= D(a_j)\cdot R^{PRP}_j(Q',a_j,\bl a)\\
&= D(a_j)\cdot \frac{1}{H_{a_j}(\bl a)}=D(a_j)\cdot R^{RAND}_j(Q,a_j,\bl a)\\
&= \sum_{k=1}^m \ind_{a_j=k} \cdot D(k)\cdot R^{RAND}_j(Q,k,\bl a)\\
& =u^{G}_j(\bl a).
\end{align*}
Since $G'$ possesses $R^{PRP}$ as the mediator, Theorem \ref{thm:exposure-targeted} guarantees that $G'$ has the FIP property. Since we showed $G$ and $G'$ are strategically equivalent, $G$ also has the FIP property, and in particular does not contain improvement cycles.
\end{proof}

\begin{proof}[\textbf{Proof of Theorem \ref{thm:exposure-targeted NL proof}}]
It is sufficient to show that for every $f$ that satisfies the theorem's conditions, we can find a game instance with an improvement cycle. While all it takes to prove the theorem is to construct a \textit{single} counter example (and this is what we do), using the technique below we can actually construct an infinite number of games which do not possess FIP.

Let $R^f$ be a scoring mediator with the corresponding function $f$, which we assume exhibits $f(1)> 2f(0)$. Due to the Intermediate Value Theorem, there exist $x_1,x_2,x_3$ such that $0<x_3<x_2<x_1\leq 1$ and 
\[
\frac{f(x_2)}{f(x_3)}>\frac{2f(x_1)}{f(x_2)}>2.
\]
For brevity, denote $c_1=\frac{f(x_1)}{f(x_2)}$ and $c_2=\frac{f(x_2)}{f(x_3)}$, and observe that $c_2>2 c_1$.

Consider a game with $\abs{N}=4$ authors, $\abs{M}=3$ topics and a quality matrix $Q$ such that 
\[
\begin{pmatrix}
x_1 & 0 & 0 \\
x_1 & x_2 & 0 \\
x_2 & 0 & x_3\\
0 & x_3 & x_2

\end{pmatrix}.
\]
The only missing ingredient is the distribution $D$ over the topics. The selection of such $D$ is crucial: we shall select $D$ to allow improvement cycles. In service of that, we prove the following claim.
\begin{claim}
\label{claim:insidetheoremcycles}
There exists $\epsilon$ such that $0<\epsilon\leq \frac{1}{4}$ and the following properties hold
\begin{enumerate}
\item \label{propertyone} $\frac{c_1(1+c_2)}{c_2(1+2c_1)}<\frac{1-2\epsilon}{2}$,
\item \label{propertytwo} $\frac{1}{1+c_1}<\frac{1-4\epsilon}{2}$, and
\item \label{propertythr} $1<c_2(1-4\epsilon)$.
\end{enumerate}
\end{claim}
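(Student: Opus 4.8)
The plan is to rewrite each of the three properties as an explicit upper bound on $\epsilon$ and then verify that each of these bounds is strictly positive, so that any sufficiently small positive $\epsilon$ satisfies all three simultaneously. The starting point is the chain $c_2 > 2c_1 > 2$ established just before the claim, which yields $c_1 > 1$ and $c_2 > 2$; these three facts are exactly what make the thresholds positive. Note first that each property is a strict inequality whose right-hand side is decreasing in $\epsilon$ (or, in Property \ref{propertythr}, whose quantity $c_2(1-4\epsilon)$ decreases in $\epsilon$), so each one is equivalent to requiring $\epsilon$ to lie below some threshold.

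First I would dispatch the two easy properties. Property \ref{propertythr}, namely $1 < c_2(1-4\epsilon)$, is equivalent to $\epsilon < \frac{c_2-1}{4c_2}$, and this threshold is positive because $c_2 > 2 > 1$. Similarly, Property \ref{propertytwo}, $\frac{1}{1+c_1} < \frac{1-4\epsilon}{2}$, rearranges to $\epsilon < \frac{c_1-1}{4(1+c_1)}$, which is positive precisely because $c_1 > 1$. I would also record that both thresholds are strictly below $\frac14$: indeed $\frac{c_2-1}{4c_2}=\frac14-\frac{1}{4c_2}<\frac14$, and $\frac{c_1-1}{4(1+c_1)}<\frac14$ since $c_1-1 < 1+c_1$.

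The main (and only nontrivial) computation is Property \ref{propertyone}. Clearing the factor $2$ on the right, the inequality $\frac{c_1(1+c_2)}{c_2(1+2c_1)} < \frac{1-2\epsilon}{2}$ is equivalent to $2\epsilon < 1 - \frac{2c_1(1+c_2)}{c_2(1+2c_1)}$. The key algebraic step is to simplify the right-hand side over the common denominator $c_2(1+2c_1)$:
\[
1 - \frac{2c_1(1+c_2)}{c_2(1+2c_1)} = \frac{c_2(1+2c_1) - 2c_1(1+c_2)}{c_2(1+2c_1)} = \frac{c_2 - 2c_1}{c_2(1+2c_1)},
\]
where the cross terms $2c_1 c_2$ cancel. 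This cancellation is exactly where the structural hypothesis $c_2 > 2c_1$ is used: it guarantees the numerator $c_2 - 2c_1$ is positive, so Property \ref{propertyone} amounts to $\epsilon < \frac{c_2 - 2c_1}{2c_2(1+2c_1)}$, again a positive threshold. I expect this simplification to be the crux of the argument; without the margin $c_2 > 2c_1$ no feasible $\epsilon$ would exist, which is precisely why the earlier invocation of the Intermediate Value Theorem was set up to select $x_1,x_2,x_3$ with that gap.

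Finally, since all three thresholds are strictly positive and two of them are strictly below $\frac14$, their minimum $M$ satisfies $0 < M < \frac14$. Choosing any $\epsilon \in (0, M)$ then simultaneously satisfies all three properties together with the requirement $0 < \epsilon \leq \frac14$, which completes the proof of the claim.
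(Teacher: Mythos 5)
Your proof is correct and takes essentially the same route as the paper's: both reduce each of the three properties to a condition that holds for all sufficiently small $\epsilon>0$ (using $c_2>2c_1$, $c_1>1$, and $c_2>2$, respectively) and then take a minimum over the three per-property bounds. The only difference is presentational: you solve each inequality for an explicit threshold, e.g.\ $\epsilon<\frac{c_2-2c_1}{2c_2(1+2c_1)}$ for Property~1, whereas the paper establishes the strict inequality $\frac{c_1(1+c_2)}{c_2(1+2c_1)}<\frac12$ at $\epsilon=0$ and then asserts the existence of a suitable positive $\epsilon_1$ (and similarly for $\epsilon_2,\epsilon_3$).
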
 
The proof of Claim \ref{claim:insidetheoremcycles} appears after this proof. Now, let $\epsilon$ be an arbitrary constant satisfying the properties of Claim \ref{claim:insidetheoremcycles}, and define $D$ such that 
\[
D(1)=\frac{1}{2-3\epsilon}, \quad D(2)=\frac{1-2\epsilon}{2(2-3\epsilon)} , \quad D(3)=\frac{1-4\epsilon}{2(2-3 \epsilon)}.
\]
It can be verified that $D$ is a valid distribution over the set of topics.

We claim that the game we constructed above possesses an improvement cycle. Consider the strategy profiles
\begin{gather*}
\bl a^1=(1,1,1,2), \quad
\bl a^2=(1,2,1,2), \quad
\bl a^3=(1,2,3,2), \\
\bl a^4=(1,2,3,3), \quad
\bl a^5=(1,1,3,3), \quad
\bl a^6=(1,1,1,3).
\end{gather*}
In the rest of this proof we show that the cycle $c=(\bl a^1,\bl a^2,\bl a^3, \bl a^4,\bl a^5,\bl a^6, \bl a^7=\bl a^1)$ is an improvement cycle. More precisely,  we prove that for every $r$, $1\leq r \leq 6$, $u^{Ex}_{p_r}(\bl a^r)<u^{Ex}_{p_r}(\bl a^{r+1})$.

$\bullet$ $u^{Ex}_{p_1}(\bl a^1)<u^{Ex}_{p_1}(\bl a^{2})$: the deviating author is $p_1 = 2$. Observe that $R_2^f(Q,1,\bl a^1)=\frac{c_1 f(x_2)}{(1+2 c_1) f(x_2)}=\frac{c_1}{1+2 c_1 }$ and $R_2^f(Q,2,\bl a^2)=\frac{c_2  f(x_3)}{(1+c_2) f(x_3)}=\frac{c_2}{1+ c_2 }$. It holds that
\begin{align*}
u^{Ex}_2(\bl a^1)&=R_2^f(Q,1,\bl a^1)\cdot D(1)=\frac{c_1}{1+2\ c_1}\cdot \frac{1}{2-3\epsilon}=\frac{c_2 }{(1+c_2)(2-3 \epsilon)}\cdot \frac{c_1(1+c_2) }{c_2(1+2 c_1) }\\
&\stackrel{\text{Property \ref{propertyone}}}{<}
\frac{c_2 }{(1+c_2)(2-3 \epsilon)}\cdot \frac{1-2\epsilon}{2}
= R_2^f(Q,2,\bl a^2) \cdot D(2)=u^{Ex}_2(\bl a^2);
\end{align*}
thus, $\bl a^2$ is an improvement step.

$\bullet$ $u^{Ex}_{p_2}(\bl a^2)<u^{Ex}_{p_2}(\bl a^{3})$: the deviating author is $p_2 = 3$. Observe that $R_3^f(Q,1,\bl a^2)=\frac{1}{1+c_1}$ and $R_3^f(Q,3,\bl a^3)=1$. It holds that
\begin{align*}
u^{Ex}_3(\bl a^2)&= R_3^f(Q,1,\bl a^2) \cdot D(1)=\frac{1}{1+c_1}\cdot \frac{1}{2-3 \epsilon}\stackrel{\text{Property \ref{propertytwo}}}{<}\frac{1-4\epsilon}{2}\cdot \frac{1}{2-3\epsilon}\\&=1\cdot D(3)=R_3^f(Q,3,\bl a^3)\cdot D(3)=u^{Ex}_3(\bl a^3);
\end{align*}
thus, $\bl a^3$ is an improvement step.

$\bullet$ $u^{Ex}_{p_3}(\bl a^3)<u^{Ex}_{p_3}(\bl a^{4})$: the deviating author is $p_3 = 4$. Observe that $R_4^f(Q,2,\bl a^3)=\frac{1}{1+c_2}$ and $R_4^f(Q,3,\bl a^4)=\frac{c_2}{1+c_2}$. It holds that
\begin{align*}
u^{Ex}_4(\bl a^3)&=R_4^f(Q,2,\bl a^3) \cdot D(2)= \frac{1}{1+c_2} \cdot \frac{1-2\epsilon}{2(2-3 \epsilon)}\stackrel{\text{Property \ref{propertythr}}}{<}\frac{c_2}{1+c_2}\cdot \frac{1-4\epsilon}{2(2-3 \epsilon)}\\& =R_4^f(Q,3,\bl a^4)\cdot D(3)=u^{Ex}_4(\bl a^4);
\end{align*}
thus, $\bl a^4$ is an improvement step.

$\bullet$ $u^{Ex}_{p_4}(\bl a^4)<u^{Ex}_{p_4}(\bl a^{5})$: the deviating author is $p_4 = 2$. Observe that $R_2^f(Q,2,\bl a^4)=1$ and $R_2^f(Q,1,\bl a^5)=\frac{1}{2}$. It holds that
\begin{align*}
u^{Ex}_2(\bl a^4)&=R_2^f(Q,2,\bl a^4) \cdot D(2)=1\cdot D(2)=\frac{1-2\epsilon}{2(2-3 \epsilon)}<\frac{1}{2}\cdot \frac{1}{2-3 \epsilon}\\&=\frac{1}{2} D(1) =R_2^f(Q,1,\bl a^5)\cdot D(1)=u^{Ex}_2(\bl a^5);
\end{align*}
thus, $\bl a^5$ is an improvement step.

$\bullet$ $u^{Ex}_{p_5}(\bl a^5)<u^{Ex}_{p_5}(\bl a^{6})$: the deviating author is $p_5 = 3$. Observe that $R_3^f(Q,3,\bl a^5)=\frac{1}{1+c_2}$ and $R_3^f(Q,1,\bl a^6)=\frac{1}{1+2c_1}$. It holds that
\begin{align*}
u^{Ex}_3(\bl a^5)&=R_3^f(Q,3,\bl a^5) \cdot D(3)=\frac{1}{1+c_2}\cdot \frac{1-4\epsilon}{2(2-3 \epsilon)}<\frac{1}{1+c_2}\cdot \frac{1}{2-3 \epsilon}\\&\stackrel{2 c_1< c_2 }{<}\frac{1}{1+2 c_1}\cdot\frac{1}{2-3 \epsilon} =R_3^f(Q,1,\bl a^6)\cdot D(1)=u^{Ex}_3(\bl a^6);
\end{align*}
thus, $\bl a^6$ is an improvement step.

$\bullet$ $u^{Ex}_{p_6}(\bl a^6)<u^{Ex}_{p_6}(\bl a^{1})$: the deviating author is $p_6 = 4$. Observe that $R_4^f(Q,3,\bl a^6)=1$ and $R_4^f(Q,2,\bl a^1)=1$. It holds that
\begin{align*}
u^{Ex}_4(\bl a^6)&=R_4^f(Q,3,\bl a^6) \cdot D(3)=1\cdot D(3)=\frac{1-4\epsilon}{2(2-3 \epsilon)}<\frac{1-2\epsilon}{2(2-3 \epsilon)}\\&=1\cdot D(2)=R_4^f(Q,2,\bl a^1)\cdot D(2)=u^{Ex}_4(\bl a^1);
\end{align*}
thus, $\bl a^1$ is an improvement step.

The above analysis implies that  $c=(\bl a^1,\bl a^2,\bl a^3, \bl a^4,\bl a^5,\bl a^6, \bl a^7=\bl a^1)$ in an improvement cycle. As a result, $R^f$ is not $u^{Ex}$-learnable.

\end{proof}
\begin{proof}[\textbf{Proof of Claim \ref{claim:insidetheoremcycles}}]
Since $c_2>2c_1$, it follows that
\[
\frac{c_1(1+c_2) }{c_2(1+2 c_1) }=\frac{c_1+c_1 c_2}{c_2+2 c_1 c_2}<\frac{\frac{c_2}{2}+c_1 c_2}{c_2+2 c_1 c_2}=\frac{1}{2}\cdot \frac{c_2+2 c_1 c_2}{c_2+2 c_1 c_2}=\frac{1}{2}.
\]
Since the left-hand-side is strictly less than $\frac{1}{2}$, we denote by $\epsilon_1$ a positive real number such that
\[
\frac{c_1(1+c_2)}{c_2(1+2 c_1) }<\frac{1-2\epsilon_1}{2}.
\]
In addition, notice that $c_1>1$; thus, $\frac{1}{1+c_1}<\frac{1}{2}$. We denote by $\epsilon_2$ a positive real number such that
\[
\frac{1}{1+c_1}<\frac{1-4\cdot \epsilon_2}{2}.
\]
Similarly, since $c_2$ is constant and $c_2>2$, there exists $\epsilon_3>0$ such that  $c_2\cdot \epsilon_3< \frac{1}{5}$, which implies that
\[
c_2\cdot (1-4\epsilon_3)>c_2-1>1. 
\]
The proof is completed by setting $\epsilon=\min\{\epsilon_1,\epsilon_2,\epsilon_3\}$.

\end{proof}

\begin{proof}[\textbf{Proof of Theorem \ref{thm:action-targeted NL proof11}}]

It is sufficient to show that for every $f$ that satisfies the theorem's conditions, we can find a game instance with an improvement cycle. While all it takes to prove the theorem is to construct a \textit{single} counter example (and this is what we do), using the technique below we can actually construct an infinite number of games which do not possess FIP.

Let $R^f$ be a scoring mediator with the corresponding function $f$, which we assume exhibits ${f(1)>2(2\alpha -1) f\left(\frac{1}{\alpha}\right)}$
for some $\alpha >1$. Due to the Intermediate Value Theorem, there exist $x_1,x_2,x_3$ such that $\frac{1}{\alpha}<x_3<x_2<x_1\leq 1$ and 
\[
2(2\alpha-1)<\frac{2f(x_1)}{f(x_2)}<\frac{f(x_1)}{f(x_3)}<2(2\alpha-\frac{1}{2})
\]
For brevity, denote $c_1=\frac{f(x_1)}{f(x_2)}$ and $c_2=\frac{f(x_1)}{f(x_3)}$, and observe that $c_2>2 c_1$.

Consider a game with $\abs{N}=4$ authors, $\abs{M}=3$ topics and a quality matrix $Q$ such that 
\[
\begin{pmatrix}
x_1 & 0 & 0 \\
x_1 & x_1 & 0 \\
x_2 & 0 & x_2\\
0 & x_3 & x_2

\end{pmatrix}.
\]
The only missing ingredient is the distribution $D$ over the topics. The selection of such $D$ is crucial: we shall select $D$ to allow improvement cycles. In service of that, we prove the following claim.
\begin{claim}
\label{claim:insidetheoremcyclesofThm4}
There exists $\epsilon$ such that $0<\epsilon\leq \frac{1}{6}$ and the following properties hold
\begin{enumerate}
\item \label{propertyoneThm4} $\frac{c_1(1+c_2)}{c_2(1+2c_1)}<\frac{1-\epsilon}{2}$,
\item \label{propertytwoThm4} $\frac{1}{1+c_1}<\frac{1-2\epsilon}{2\alpha}$, and
\item \label{propertythrThm4} $\frac{\alpha(1-\epsilon)}{1+c_2}<\frac{1-2\epsilon}{2}$.
\end{enumerate}
\end{claim}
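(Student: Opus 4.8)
The plan is to prove Claim~\ref{claim:insidetheoremcyclesofThm4} by mimicking the structure used for Claim~\ref{claim:insidetheoremcycles}: each of the three properties is a strict inequality whose left-hand side sits strictly below the $\epsilon=0$ value of its right-hand side, so continuity in $\epsilon$ furnishes a positive threshold for each property, and taking the minimum of those thresholds together with $\frac{1}{6}$ produces an admissible $\epsilon$.

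First I would record the order relations among the constants implied by the chain $2(2\alpha-1)<2c_1<c_2<2(2\alpha-\frac{1}{2})$ fixed just before the claim (recall $c_1=\frac{f(x_1)}{f(x_2)}$ and $c_2=\frac{f(x_1)}{f(x_3)}$). This gives directly $c_2>2c_1$ and $c_1>2\alpha-1$, and combining them yields $c_2>2c_1>2(2\alpha-1)=4\alpha-2>2\alpha-1$, where the last step uses $\alpha>1$. These three facts—$c_2>2c_1$, $c_1>2\alpha-1$, and the derived $c_2>2\alpha-1$—are precisely what the three properties require at $\epsilon=0$.

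Next I would verify each property at $\epsilon=0$. Property~\ref{propertyoneThm4} there reads $\frac{c_1(1+c_2)}{c_2(1+2c_1)}<\frac{1}{2}$, which after clearing denominators is equivalent to $2c_1<c_2$ and hence holds strictly. Property~\ref{propertytwoThm4} reads $\frac{1}{1+c_1}<\frac{1}{2\alpha}$, equivalent to $c_1>2\alpha-1$, which holds strictly. Property~\ref{propertythrThm4} reads $\frac{\alpha}{1+c_2}<\frac{1}{2}$, equivalent to $c_2>2\alpha-1$, which holds by the derived inequality. Since all three strict inequalities hold at $\epsilon=0$ and every expression appearing in them is continuous in $\epsilon$, for each $i$ there is a positive $\epsilon_i$ below which property $i$ persists; I would then set $\epsilon=\min\{\epsilon_1,\epsilon_2,\epsilon_3,\frac{1}{6}\}$.

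The only subtlety—and the sole difference from Claim~\ref{claim:insidetheoremcycles}—lies in Property~\ref{propertythrThm4}, where both sides depend on $\epsilon$, so I cannot argue as before that the right-hand side decreases while the left-hand side stays fixed. The clean route is to lean on continuity at $\epsilon=0$, where the strict inequality is already established; if an explicit admissible range is wanted, clearing denominators turns the inequality into $2\alpha-1-c_2<\epsilon(2\alpha-2-2c_2)$, in which both $2\alpha-1-c_2$ and the coefficient $2\alpha-2-2c_2$ are negative (the latter since $c_2>2\alpha-1$ forces $2\alpha-2-2c_2<-2\alpha<0$), so dividing by the negative coefficient gives the positive bound $\epsilon<\frac{2\alpha-1-c_2}{2\alpha-2-2c_2}$ and any $\epsilon_3$ in this range works. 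I expect the only non-mechanical step to be deriving $c_2>2\alpha-1$ from the constraints for Property~\ref{propertythrThm4}; once the $\epsilon=0$ inequalities are in hand, the rest is a continuity argument and a minimum.
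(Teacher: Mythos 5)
Your proposal is correct and follows essentially the same strategy as the paper: establish $c_2>2c_1$, $c_1>2\alpha-1$, and $c_2>2\alpha-1$ from the chain fixed before the claim, verify each property as a strict inequality (at $\epsilon=0$), and take the minimum of per-property thresholds together with $\tfrac16$. The only cosmetic difference is Property~\ref{propertythrThm4}: the paper exploits the stronger bound $c_2>4\alpha-2$ to show directly that \emph{every} $0<\epsilon_3<\tfrac16$ works (via $\frac{\alpha(1-\epsilon_3)}{1+c_2}<\frac{\alpha}{4\alpha-1}<\frac13<\frac{1-2\epsilon_3}{2}$), whereas you use the weaker $c_2>2\alpha-1$ plus your correctly derived explicit threshold $\epsilon<\frac{2\alpha-1-c_2}{2\alpha-2-2c_2}$; both are valid.
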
 
The proof of Claim \ref{claim:insidetheoremcyclesofThm4} appears after this proof. Now, let $\epsilon$ be an arbitrary constant satisfying the properties of Claim \ref{claim:insidetheoremcyclesofThm4}, and define $D$ such that 
\[
D(1)=\frac{2\alpha}{3\alpha+1-\epsilon(\alpha+2)}, \quad D(2)=\frac{\alpha(1-\epsilon)}{3\alpha+1-\epsilon(\alpha+2)} , \quad D(3)=\frac{1-2\epsilon}{3\alpha+1-\epsilon(\alpha+2)}.
\]
It can be verified that $D$ is a valid distribution over the set of topics.

We claim that the game we constructed above possesses an improvement cycle. Consider the strategy profiles
\begin{gather*}
\bl a^1=(1,1,1,2), \quad
\bl a^2=(1,2,1,2), \quad
\bl a^3=(1,2,3,2), \\
\bl a^4=(1,2,3,3), \quad
\bl a^5=(1,1,3,3), \quad
\bl a^6=(1,1,1,3).
\end{gather*}
In the rest of this proof we show that the cycle $c=(\bl a^1,\bl a^2,\bl a^3, \bl a^4,\bl a^5,\bl a^6, \bl a^7=\bl a^1)$ is an improvement cycle. More precisely,  we prove that for every $r$, $1\leq r \leq 6$, $u^{Ac}_{p_r}(\bl a^r)<u^{Ac}_{p_r}(\bl a^{r+1})$.

$\bullet$ $u^{Ac}_{p_1}(\bl a^1)<u^{Ac}_{p_1}(\bl a^{2})$: the deviating author is $p_1 = 2$. Observe that $R_2^f(Q,1,\bl a^1)=\frac{c_1 f(x_2)}{(1+2 c_1) f(x_2)}=\frac{c_1}{1+2 c_1 }$ and $R_2^f(Q,2,\bl a^2)=\frac{c_2  f(x_3)}{(1+c_2) f(x_3)}=\frac{c_2}{1+ c_2 }$. It holds that
\begin{align*}
u^{Ac}_2(\bl a^1)&=R_2^f(Q,1,\bl a^1)\cdot D(1)\cdot Q_{2,1}=\frac{c_1}{1+2c_1}\cdot \frac{2\alpha}{3\alpha+1-\epsilon(\alpha+2)}\cdot x_1\\&=\frac{2\alpha c_2}{(1+c_2)\left(3\alpha+1-\epsilon(\alpha+2)\right)}\cdot\frac{c_1(1+c_2)}{c_2(1+2\ c_1)}\cdot  x_1
\\& \stackrel{\text{Property \ref{propertyoneThm4}}}{<}\frac{2\alpha c_2}{(1+c_2)\left(3\alpha+1-\epsilon(\alpha+2)\right)}\cdot\frac{1-\epsilon}{2}\cdot x_1=\frac{c_2}{1+c_2}\cdot \frac{\alpha(1-\epsilon)}{3\alpha+1-\epsilon(\alpha+2)}\cdot x_1\\&= R_2^f(Q,2,\bl a^2) \cdot D(2)\cdot Q_{2,2}=u^{Ac}_2(\bl a^2);
\end{align*}
thus, $\bl a^2$ is an improvement step.

$\bullet$ $u^{Ac}_{p_2}(\bl a^2)<u^{Ac}_{p_2}(\bl a^{3})$: the deviating author is $p_2 = 3$. Observe that $R_3^f(Q,1,\bl a^2)=\frac{1}{1+c_1}$ and $R_3^f(Q,3,\bl a^3)=1$. It holds that
\begin{align*}
u^{Ac}_3(\bl a^2)&= R_3^f(Q,1,\bl a^2) \cdot D(1)\cdot Q_{3,1}=\frac{1}{1+c_1}\cdot\frac{2\alpha}{3\alpha+1-\epsilon(\alpha+2)}\cdot x_2\\&\stackrel{\text{Property \ref{propertytwoThm4}}}{<}\frac{1-2\epsilon}{2\alpha}\cdot\frac{2\alpha}{3\alpha+1-\epsilon(\alpha+2)}\cdot x_2=\frac{1-2\epsilon}{3\alpha+1-\epsilon(\alpha+2)}\cdot x_2\\&=1\cdot D(3)\cdot x_2=R_3^f(Q,3,\bl a^3)\cdot D(3)\cdot Q_{3,3}=u^{Ac}_3(\bl a^3);
\end{align*}
thus, $\bl a^3$ is an improvement step.

$\bullet$ $u^{Ac}_{p_3}(\bl a^3)<u^{Ac}_{p_3}(\bl a^{4})$: the deviating author is $p_3 = 4$. Observe that $R_4^f(Q,2,\bl a^3)=\frac{1}{1+c_2}$ and $R_4^f(Q,3,\bl a^4)=\frac{1}{2}$. It holds that
\begin{align*}
u^{Ac}_4(\bl a^3)&=R_4^f(Q,2,\bl a^3) \cdot D(2)\cdot Q_{4,2}= \frac{1}{1+c_2} \cdot \frac{\alpha(1-\epsilon)}{3\alpha+1-\epsilon(\alpha+2)}\cdot x_3 \\&\stackrel{\text{Property \ref{propertythrThm4}}}{<}\frac{1-2\epsilon}{2}\cdot \frac{1}{3\alpha+1-\epsilon(\alpha+2)}\cdot x_3\stackrel{x_3<x_2}{<}\frac{1}{2}\cdot \frac{1-2\epsilon}{3\alpha+1-\epsilon(\alpha+2)}\cdot x_2\\& =R_4^f(Q,3,\bl a^4)\cdot D(3)\cdot Q_{4,3}=u^{Ac}_4(\bl a^4);
\end{align*}
thus, $\bl a^4$ is an improvement step.

$\bullet$ $u^{Ac}_{p_4}(\bl a^4)<u^{Ex}_{p_4}(\bl a^{5})$: the deviating author is $p_4 = 2$. Observe that $R_2^f(Q,2,\bl a^4)=1$ and $R_2^f(Q,1,\bl a^5)=\frac{1}{2}$. It holds that
\begin{align*}
u^{Ac}_2(\bl a^4)&=R_2^f(Q,2,\bl a^4) \cdot D(2)\cdot Q_{2,2}=1\cdot D(2)\cdot x_2=\frac{\alpha(1-\epsilon)}{3\alpha+1-\epsilon(\alpha+2)}\cdot x_2\\&<\frac{1}{2}\cdot \frac{2\alpha}{3\alpha+1-\epsilon(\alpha+2)}\cdot x_2=\frac{1}{2} D(1)\cdot x_2 =R_2^f(Q,1,\bl a^5)\cdot D(1)\cdot Q_{2,1}=u^{Ac}_2(\bl a^5);
\end{align*}
thus, $\bl a^5$ is an improvement step.

$\bullet$ $u^{Ac}_{p_5}(\bl a^5)<u^{Ac}_{p_5}(\bl a^{6})$: the deviating author is $p_5 = 3$. Observe that $R_3^f(Q,3,\bl a^5)=\frac{1}{2}$ and $R_3^f(Q,1,\bl a^6)=\frac{1}{1+2c_1}$. It holds that
\begin{align*}
u^{Ac}_3(\bl a^5)&=R_3^f(Q,3,\bl a^5) \cdot D(3)\cdot Q_{3,3}=\frac{1}{2}\cdot \frac{1-2\epsilon}{3\alpha+1-\epsilon(\alpha+2)}\cdot x_2\\&=\frac{1}{2}\cdot \frac{1-2\epsilon}{3\alpha+1-\epsilon(\alpha+2)}\cdot\frac{4\alpha}{4\alpha}\cdot x_2\stackrel{2c_1<4\alpha-1}{<}\frac{1}{2}\cdot \frac{1-2\epsilon}{3\alpha+1-\epsilon(\alpha+2)}\cdot\frac{4\alpha}{1+2c_1}\cdot x_2\\&<\frac{1}{1+2c_1}\cdot\frac{2\alpha}{3\alpha+1-\epsilon(\alpha+2)}\cdot x_2 =R_3^f(Q,1,\bl a^6)\cdot D(1)\cdot Q_{3,1}=u^{Ac}_3(\bl a^6);
\end{align*}
thus, $\bl a^6$ is an improvement step.

$\bullet$ $u^{Ac}_{p_6}(\bl a^6)<u^{Ac}_{p_6}(\bl a^{1})$: the deviating author is $p_6 = 4$. Observe that $R_4^f(Q,3,\bl a^6)=1$ and $R_4^f(Q,2,\bl a^1)=1$.Since $\frac{1}{\alpha}<x_3<x_2<1$ we get that
\[
\frac{x_2}{x_3} < \alpha,
\]
implying that $x_2<\alpha x_3$; thus, it holds that
\begin{align*}
u^{Ac}_4(\bl a^6)&=R_4^f(Q,3,\bl a^6) \cdot D(3)\cdot Q_{4,3}=\frac{1-2\epsilon}{3\alpha+1-\epsilon(\alpha+2)}\cdot x_2\\&<\frac{\alpha(1-2\epsilon)}{3\alpha+1-\epsilon(\alpha+2)}\cdot  x_3<\frac{\alpha(1-\epsilon)}{3\alpha+1-\epsilon(\alpha+2)}\cdot  x_3\\&=R_4^f(Q,2,\bl a^1)\cdot D(2)\cdot Q_{4,2}=u^{Ac}_4(\bl a^1);
\end{align*}
thus, $\bl a^1$ is an improvement step.

The above analysis implies that  $c=(\bl a^1,\bl a^2,\bl a^3, \bl a^4,\bl a^5,\bl a^6, \bl a^7=\bl a^1)$ in an improvement cycle. As a result, $R^f$ is not $u^{Ac}$-learnable.

\end{proof}
\begin{proof}[\textbf{Proof of Claim \ref{claim:insidetheoremcyclesofThm4}}]
Since $c_2>2c_1$, it follows that
\[
\frac{c_1(1+c_2) }{c_2(1+2 c_1) }=\frac{c_1+c_1 c_2}{c_2+2 c_1 c_2}<\frac{\frac{c_2}{2}+c_1 c_2}{c_2+2 c_1 c_2}=\frac{1}{2}\cdot \frac{c_2+2 c_1 c_2}{c_2+2 c_1 c_2}=\frac{1}{2}.
\]
Since the left-hand-side is strictly less than $\frac{1}{2}$, we denote by $\epsilon_1$ a positive real number such that
\[
\frac{c_1(1+c_2)}{c_2(1+2 c_1) }<\frac{1-\epsilon_1}{2}.
\]
In addition, notice that $c_1>2\alpha -1$; thus, $\frac{1}{1+c_1}<\frac{1}{2\alpha}$. We denote by $\epsilon_2$ a positive real number such that
\[
\frac{1}{1+c_1}<\frac{1-2 \epsilon_2}{2\alpha}.
\]
Since $\alpha>1$ and $c_2>4\alpha-2$, for every $0<\epsilon_3<\frac{1}{6}$ it holds that
\[
\frac{\alpha(1-\epsilon_3)}{1+c_2}<\frac{\alpha(1-\epsilon_3)}{1+4\alpha -2}=\frac{\alpha(1-\epsilon_3)}{4\alpha -1}<\frac{\alpha}{3\alpha}=\frac{1}{3}=\frac{1}{2}-\frac{1}{6}<\frac{1-2\epsilon_3}{2}.
\]
The proof is completed by setting $\epsilon=\min\{\epsilon_1,\epsilon_2,\epsilon_3\}$.
\end{proof}

\section{Non-Learnability under Action-Targeted Utility}
In this section we prove non-learnability of another family of scoring mediators under $u^{Ac}$.
We consider scoring mediators where the corresponding function $f$ is bounded by (non-affine) linear functions. Examples for such functions are $f(x)=x$ and $f(x)=e^{x}-1$.
\begin{theorem}
\label{thm:action-targeted NL proof1 linear}
Let $R^f$ be a scoring mediator. If $f$ is continuous function and there exist $\alpha,\beta >0$ such that $\alpha x \leq f(x) \leq \beta x, \forall x\in [0,1]$, then $R^f$ is not $u^{Ac}$-learnable. 
\end{theorem}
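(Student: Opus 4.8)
The plan is to follow the same route as the proofs of Theorems \ref{thm:exposure-targeted NL proof} and \ref{thm:action-targeted NL proof11}: since $R^f$ fails to be $u^{Ac}$-learnable as soon as some induced game lacks the FIP, it suffices to exhibit a \emph{single} game with $R^f$ as mediator that admits an improvement cycle. I would reuse the $\abs{N}=4$, $\abs{M}=3$ skeleton together with the six strategy profiles $\bl a^1=(1,1,1,2)$, $\bl a^2=(1,2,1,2)$, $\bl a^3=(1,2,3,2)$, $\bl a^4=(1,2,3,3)$, $\bl a^5=(1,1,3,3)$, $\bl a^6=(1,1,1,3)$, and verify the six strict improvement inequalities $u^{Ac}_{p_r}(\bl a^r)<u^{Ac}_{p_r}(\bl a^{r+1})$ that close the cycle back to $\bl a^1$.

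The new ingredient replaces the role played by the point hypothesis $f(1)>2(2\alpha-1)f(1/\alpha)$ (together with the Intermediate Value Theorem) in Theorem \ref{thm:action-targeted NL proof11}. Under the two-sided sandwich $\alpha x\le f(x)\le \beta x$, every ratio of scores is pinched,
\[
\frac{\alpha}{\beta}\cdot\frac{x}{y}\;\le\;\frac{f(x)}{f(y)}\;\le\;\frac{\beta}{\alpha}\cdot\frac{x}{y},
\qquad x,y\in(0,1],
\]
so by \emph{separating} the qualities ($x_3\ll x_2\ll x_1$) I can force a ratio such as $f(x_2)/f(x_3)$ to be as large as I wish, while \emph{clustering} two qualities keeps the corresponding ratio near $1$. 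Setting $c_1=f(x_1)/f(x_2)$ and $c_2=f(x_1)/f(x_3)$ as before, the decisive inequality $c_2>2c_1$ (equivalently $f(x_2)/f(x_3)>2$) is guaranteed as soon as $x_2/x_3>2\beta/\alpha$, which can always be arranged inside $(0,1]$. With the ratios thus controlled, I would prove an analogue of Claims \ref{claim:insidetheoremcycles}/\ref{claim:insidetheoremcyclesofThm4}, choosing a small $\epsilon$ and a one-parameter distribution $D$ (normalised to sum to $1$) so that all six inequalities hold, each reducing to a comparison between products $D(k)\cdot R^f_j(Q,k,\cdot)\cdot Q_{j,k}$ that are monotone in the controllable ratios.

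The main obstacle, and the reason this theorem needs a separate argument rather than a reduction to Theorem \ref{thm:action-targeted NL proof11}, is that the sandwich forces $\alpha\le\beta$ and therefore cannot produce the \emph{single large jump} $f(1)/f(1/\gamma)$ that the earlier hypothesis demands; indeed the canonical linear example $f(x)=x$ has $\alpha=\beta=1$ yet violates $f(1)>2(2\gamma-1)f(1/\gamma)$ for every $\gamma>1$. Consequently the earlier construction, whose final improvement step relied on the coupled constraints $1/\alpha<x_3$ and $x_2<\alpha x_3$, cannot be imported verbatim: those constraints pull the qualities together exactly where the cycle needs a large ratio. The real work is therefore to re-tune the quality matrix (still built from $0,x_1,x_2,x_3$) and the distribution $D$ so that \emph{every} one of the six steps depends only on separation-controllable ratios of the form $f(x_i)/f(x_j)$, and then to check that a common choice of $x_1,x_2,x_3,\epsilon$ simultaneously satisfies the resulting finite system of strict inequalities --- a bookkeeping verification analogous to, but independent of, that of Theorem \ref{thm:action-targeted NL proof11}.
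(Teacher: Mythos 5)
Your high-level plan coincides with the paper's: exhibit a single game with an improvement cycle, on the same four-author, three-topic skeleton with the same six profiles, and your observation that $x_2/x_3>2\beta/\alpha$ forces $c_2/c_1=f(x_2)/f(x_3)>2$ is correct. The genuine gap is the closing step $\bl a^6\rightarrow \bl a^1$. There author 4 is alone on topic 3 before deviating and alone on topic 2 afterwards, so her ranking probability is $1$ on both sides, and the required inequality is $D(3)\cdot x_2<D(2)\cdot x_3$: no value of $f$ appears in it at all. Consequently your stated resolution --- re-tuning so that ``every one of the six steps depends only on separation-controllable ratios of the form $f(x_i)/f(x_j)$'' --- is impossible on this skeleton; that step is intrinsically a constraint on the raw quality ratio $x_2/x_3$. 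Moreover, the other steps cap $D(2)/D(3)$: the step $\bl a^4\rightarrow\bl a^5$ forces $2D(2)<D(1)$ and the step $\bl a^2\rightarrow\bl a^3$ forces $D(1)<(c_1+1)D(3)$, hence $D(2)/D(3)<(c_1+1)/2$, so the closing step requires the \emph{upper} bound $x_2/x_3<(c_1+1)/2$. This pulls directly against the separation $x_2/x_3>2\beta/\alpha$ you rely on for $c_2>2c_1$; reconciling the two bounds is the heart of the proof, and the proposal stops exactly there.

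For comparison, the paper resolves this tension by running your implication backwards. Setting $z=\beta/\alpha$, it first notes $f(0)=0$ (a consequence of the sandwich) and applies the Intermediate Value Theorem to pin the score ratios \emph{exactly}, $f(x_1)=5z\,f(x_2)=11z\,f(x_3)$, and only then uses the sandwich in the reverse direction to turn a score ratio into an upper bound on the quality ratio: $x_2/x_3\leq (\beta/\alpha)\cdot f(x_2)/f(x_3)=11z/5$, which sits below the cap of roughly $(5z+1)/2$ permitted by the distribution constraints. Your separation route can in fact be completed, but only with a two-scale separation that you never articulate: take $\rho=x_2/x_3$ just above $2z$ (this yields $c_2>2c_1$) and, in addition, $x_1/x_2>z(2\rho-1)$, so that the sandwich lower bound $c_1=f(x_1)/f(x_2)\geq \frac{1}{z}\cdot \frac{x_1}{x_2}>2\rho-1$ makes the cap $(c_1+1)/2>\rho$ achievable; with these choices the finite system of inequalities for $D$ becomes solvable. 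Since the proposal neither states this (nor the paper's IVT normalization) nor verifies any of the six inequalities, the crucial part of the argument is missing.
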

\begin{proof}[\textbf{Proof of Theorem \ref{thm:action-targeted NL proof1 linear}}]
It is sufficient to show that for every $f$ that satisfies the theorem's conditions, we can find a game instance with an improvement cycle. While all it takes to prove the theorem is to construct a \textit{single} counter example (and this is what we do), using the technique below we can actually construct an infinite number of games which do not possess FIP.

Let $R^f$ be a scoring mediator with the corresponding continuous function $f$, and let $\alpha$ and $\beta$ such that $\alpha,\beta>0$ and
\[
\forall x\in [0,1] \quad \alpha x \leq f(x) \leq \beta x.
\]

For brevity, denote $z=\frac{\beta}{\alpha}$. Since $\alpha \cdot 0 \leq f(0) \leq \beta \cdot 0$ we know that $f(0)=0$. Notice that for every, $x\in (0,1]$, it must hold that $f(x)>0$ as $\alpha x\leq f(x)$ where $\alpha>0$. Let $x_1$ be an arbitrary quality such that $x_1 \in (0,1]$.
Due to the Intermediate Value Theorem, there exist $x_2,x_3$ such that $0<x_3<x_2<x_1\leq 1$ and 
\[
5 z  f(x_2)=f(x_1),\quad 11 z f(x_3)=f(x_1).
\]

Consider a game with $\abs{N}=4$ authors, $\abs{M}=3$ topics and a quality matrix $Q$ such that 
\[
\begin{pmatrix}
x_1 & 0 & 0 \\
x_1 & x_1 & 0 \\
x_2 & 0 & x_2\\
0 & x_3 & x_2

\end{pmatrix}.
\]
The only missing ingredient is the distribution $D$ over the topics. The selection of such $D$ is crucial: we shall select $D$ to allow improvement cycles. In service of that, we prove the following claim.
\begin{claim}
\label{claim:insidetheoremcyclesUACEpsilon}
There exists $\epsilon$ such that $\epsilon>0$ and the following properties hold
\begin{enumerate}
\item \label{propertyoneUAC} $\frac{10z}{10z+1}\cdot \left (\frac{5z+0.6}{15z+3.8}+\frac{\epsilon}{2} \right)<\frac{11z}{11z+1}\cdot \left (\frac{5z+0.6}{15z+3.8}-\epsilon \right)$
\item \label{propertytwoUAC} $\frac{10z+1.2+\epsilon(15z+3.8)}{5z+1}<2$, and
\item \label{propertythrUAC} $\frac{4.4\cdot z}{15z+3.8}<\frac{5z+0.6}{15z+3.8}-\epsilon$.
\end{enumerate}
\end{claim} 
The proof of Claim \ref{claim:insidetheoremcyclesUACEpsilon} appears after this proof. Now, let $\epsilon$ be an arbitrary constant satisfying the properties of Claim \ref{claim:insidetheoremcyclesUACEpsilon}, and define $D$ such that
\[
D(1)=\frac{10z+1.2}{15z+3.8}+\epsilon,\quad D(2)=\frac{5z+0.6}{15z+3.8}-\epsilon, \quad D(3)=\frac{2}{15z+3.8}
\]

It can be verified that $D$ is a valid distribution over the set of topics.

We claim that the game we constructed above possesses an improvement cycle. Consider the strategy profiles
\begin{gather*}
\bl a^1=(1,1,1,2), \quad
\bl a^2=(1,2,1,2), \quad
\bl a^3=(1,2,3,2), \\
\bl a^4=(1,2,3,3), \quad
\bl a^5=(1,1,3,3), \quad
\bl a^6=(1,1,1,3).
\end{gather*}
In the rest of this proof we show that the cycle $c=(\bl a^1,\bl a^2,\bl a^3, \bl a^4,\bl a^5,\bl a^6, \bl a^7=\bl a^1)$ is an improvement cycle. More precisely,  we prove that for every $r$, $1\leq r \leq 6$, $u^{Ac}_{p_r}(\bl a^r)<u^{Ac}_{p_r}(\bl a^{r+1})$.

$\bullet$ $u^{Ac}_{p_1}(\bl a^1)<u^{Ac}_{p_1}(\bl a^{2})$: the deviating author is $p_1 = 2$. Observe that $R_2^f(Q,1,\bl a^1)=\frac{5z f(x_2)}{(10z+1) f(x_2)}=\frac{5z}{10z+1}$ and $R_2^f(Q,2,\bl a^2)=\frac{11z  f(x_3)}{(11z+1) f(x_3)}=\frac{11z}{11z+1 }$. It holds that
\begin{align*}
u_2^{Ac}(\bl a^1)&=R^f_2(Q,1,\bl a^1)\cdot D(1) \cdot Q_{2,1} = \frac{5z}{10z+1}\cdot \left (\frac{10z+1.2}{15z+3.8}+\epsilon \right)\cdot x_1\\&=\frac{1}{2}\cdot \frac{10z}{10z+1}\cdot \left (\frac{10z+1.2}{15z+3.8}+\epsilon \right)\cdot x_1=\frac{10z}{10z+1}\cdot \left (\frac{5z+0.6}{15z+3.8}+\frac{\epsilon}{2} \right)\cdot x_1\\&\stackrel{\text{Property \ref{propertyoneUAC}}}{<}\frac{11z}{11z+1}\cdot \left (\frac{5z+0.6}{15z+3.8}-\epsilon \right)\cdot x_1=R^f_2(Q,2,\bl a^2)\cdot D(2) \cdot Q_{2,2}=u_2^{Ac}(\bl a^2);
\end{align*}
thus, $\bl a^2$ is an improvement step.

$\bullet$ $u^{Ac}_{p_2}(\bl a^2)<u^{Ac}_{p_2}(\bl a^{3})$: the deviating author is $p_2 = 3$. Observe that $R_3^f(Q,1,\bl a^2)=\frac{1}{5z+1}$ and $R_3^f(Q,3,\bl a^3)=1$. It holds that
\begin{align*}
u_3^{Ac}(\bl a^2)&=R^f_3(Q,1,\bl a^2)\cdot D(1) \cdot Q_{3,1} = \frac{1}{5z+1}\cdot \left(\frac{10z+1.2}{15z+3.8}+\epsilon \right)\cdot x_2\\&=\frac{1}{15z+3.8}\cdot\left( \frac{10z+1.2+\epsilon(15z+3.8)}{5z+1}\right)\cdot x_2\stackrel{\text{Property \ref{propertytwoUAC}}}{<}\frac{2}{15z+3.8}\cdot x_2\\&=1\cdot D(3)\cdot x_2=R^f_3(Q,3,\bl a^2)\cdot D(2) \cdot Q_{3,3}=u_3^{Ac}(\bl a^3);
\end{align*}
thus, $\bl a^3$ is an improvement step.

$\bullet$ $u^{Ac}_{p_3}(\bl a^3)<u^{Ac}_{p_3}(\bl a^{4})$: the deviating author is $p_3 = 4$. Observe that $R_4^f(Q,2,\bl a^3)=\frac{1}{11z+1}$ and $R_4^f(Q,3,\bl a^4)=\frac{1}{2}$. It holds that
\begin{align*}
u_4^{Ac}(\bl a^3)&=R^f_4(Q,2,\bl a^3)\cdot D(2) \cdot Q_{4,2} = \frac{1}{11z+1}\cdot \left (\frac{5z+0.6}{15z+3.8}-\epsilon \right)\cdot x_3\\&\stackrel{x_3<x_2}{<}\frac{1}{11z+1}\cdot \left (\frac{5z+0.6}{15z+3.8}-\epsilon \right)\cdot x_2<\frac{1}{11z+1}\cdot \frac{5z+0.6}{15z+3.8} \cdot x_2\\&<\frac{1}{15z+3.8} \cdot x_2=\frac{1}{2}\cdot \frac{2}{15z+3.8} \cdot x_2=R^f_4(Q,3,\bl a^4)\cdot D(3) \cdot Q_{4,3}=u_4^{Ac}(\bl a^3);
\end{align*}
thus, $\bl a^4$ is an improvement step.

$\bullet$ $u^{Ac}_{p_4}(\bl a^4)<u^{Ex}_{p_4}(\bl a^{5})$: the deviating author is $p_4 = 2$. Observe that $R_2^f(Q,2,\bl a^4)=1$ and $R_2^f(Q,1,\bl a^5)=\frac{1}{2}$. It holds that
\begin{align*}
u_2^{Ac}(\bl a^4)&=R_2^f(Q,2,\bl a^4)\cdot D(2)\cdot Q_{2,2}=1\cdot D(2)\cdot x_1=\left(\frac{5z+0.6}{15z+3.8}-\epsilon \right)\cdot x_1\\&=\frac{1}{2}\cdot\left(\frac{10z+1.2}{15z+3.8}-2\cdot \epsilon \right)\cdot x_1<\frac{1}{2}\cdot\left(\frac{10z+1.2}{15z+3.8}+ \epsilon \right)\cdot x_1
=\frac{1}{2}\cdot D(1)\cdot x_1 \\&=R_2^f(Q,1,\bl a^5)\cdot D(1)\cdot Q_{1,2}=u_2^{Ac}(\bl a^5);
\end{align*}
thus, $\bl a^5$ is an improvement step.

$\bullet$ $u^{Ac}_{p_5}(\bl a^5)<u^{Ac}_{p_5}(\bl a^{6})$: the deviating author is $p_5 = 3$. Observe that $R_3^f(Q,3,\bl a^5)=\frac{1}{2}$ and $R_3^f(Q,1,\bl a^6)=\frac{1}{10z+1}$. It holds that
\begin{align*}
u_3^{Ac}(\bl a^5)&=R_3^f(Q,3,\bl a^5)\cdot D(3)\cdot Q_{3,3}=\frac{1}{2}\cdot  \frac{2}{15z+3.8} \cdot x_2\\&=\frac{1}{15z+3.8} \cdot x_2<\frac{10z+1.2}{10z+1}\cdot\frac{1}{15z+3.8} \cdot x_2=\frac{1}{10z+1}\cdot\frac{10z+1.2}{15z+3.8} \cdot x_2\\&=R_3^f(Q,1,\bl a^6)\cdot D(1)\cdot Q_{3,1}=u_3^{Ac}(\bl a^6);
\end{align*}
thus, $\bl a^6$ is an improvement step.

$\bullet$ $u^{Ac}_{p_6}(\bl a^6)<u^{Ac}_{p_6}(\bl a^{1})$: the deviating author is $p_6 = 4$. Observe that $R_4^f(Q,3,\bl a^6)=1$ and $R_4^f(Q,2,\bl a^1)=1$.Since for every $x$, $ \alpha x\leq f(x)\leq \beta x$ we get that
\[
\frac{x_2}{x_3} \leq \frac{\frac{f(x_2)}{\alpha x_2}}{\frac{f(x_3)}{\beta x_3}}=\frac{\beta}{\alpha}\cdot\frac{f(x_2)}{f(x_3)}=\frac{11z}{5},
\]
implying that $f(x_2)<\frac{11z}{5} \cdot x_3$; thus, it holds that
\begin{align*}
u_4^{Ac}(\bl a^6)&=R_4^f(Q,3,\bl a^6)\cdot D(3)\cdot Q_{4,3}=1 \cdot \frac{2}{15z+3.8}\cdot x_2\leq \frac{2}{15z+3.8} \cdot \frac{11z}{5}\cdot x_3\\&=\frac{2}{15z+3.8}\cdot 2.2 z\cdot  x_3=\frac{4.4 z}{15z+3.8}\cdot x_3\stackrel{\text{Property \ref{propertythrUAC}}}{<}\left(\frac{5z+0.6}{15z+3.8}-\epsilon\right)\cdot x_3=\\& =R_4^f(Q,2,\bl a^1)\cdot D(2)\cdot Q_{4,2}=u_4^{Ac}(\bl a^1);
\end{align*}
thus, $\bl a^1$ is an improvement step.

The above analysis implies that  $c=(\bl a^1,\bl a^2,\bl a^3, \bl a^4,\bl a^5,\bl a^6, \bl a^7=\bl a^1)$ in an improvement cycle. As a result, $R^f$ is not $u^{Ac}$-learnable.
\end{proof}
\begin{proof}[\textbf{Proof of Claim \ref{claim:insidetheoremcyclesUACEpsilon}}]
As $\frac{10z}{10z+1}<\frac{11z}{11z+1}$, we can find $\epsilon_1>0$ such that, 
\[
\left(\frac{11z}{11z+1}+\frac{10z}{2(10z+1)}\right)\cdot \epsilon_1.<\left(\frac{11z}{11z+1}-\frac{10z}{10z+1}\right)\cdot \left(\frac{5z+0.6}{15z+3.8}\right);
\]
hence, we get that
\[
\frac{10z}{10z+1}\cdot \left (\frac{5z+0.6}{15z+3.8}+\frac{\epsilon_1}{2} \right)<\frac{11z}{11z+1}\cdot \left (\frac{5z+0.6}{15z+3.8}-\epsilon_1 \right).
\]
Since $10z+1.2<2 (5z+1)$ we can find $\epsilon_2>0$ such that
\[
\frac{10z+1.2}{5z+1}+4\epsilon_2<2.
\]
Therefore,
\[
\frac{10z+1.2+\epsilon_2(15z+3.8)}{5z+1}=\frac{10z+1.2}{5z+1}+\epsilon_2 \cdot \frac{15z+3.8}{5z+1}<\frac{10z+1.2}{5z+1}+4\epsilon_2<2
\]
In addition, we can find $\epsilon_3>0$ such that
\[
\epsilon_3<\frac{0.6 z+0.6}{15z+3.8},
\]
which implies that 
\[
\frac{4.4 z}{15z+3.8}<\frac{5z+0.6}{15z+3.8}-\epsilon_3.
\]
The proof is completed by setting $\epsilon=\min\{\epsilon_1,\epsilon_2,\epsilon_3\}$.
\end{proof}

}\fi} 

\end{document}